%% file: main.tex
\documentclass[sigconf, nonacm]{acmart}

\newcommand\vldbdoi{XX.XX/XXX.XX}

\newcommand\vldbavailabilityurl{https://github.com/yihaoh/I-Rex}

\usepackage{graphicx}
\usepackage{balance}  
\usepackage{booktabs}
\usepackage{url}
\usepackage{listings}
\usepackage{enumitem}
\usepackage{tabularx}
\usepackage{tabu}
\usepackage{xcolor}
\usepackage{bm}
\usepackage{clrscode3e}
\usepackage[noend]{algpseudocode}
\usepackage{standalone} 
\usepackage{colortbl}
\usepackage{forest}
\usepackage{algorithm}
\usepackage{xspace}
\usepackage{url}
\usepackage{textcomp}
\usepackage{multirow}
\usepackage{booktabs}
\usepackage{hyperref}
\usepackage{color}
\usepackage{tcolorbox}
\usepackage{xpatch}
\usepackage[skip=0.2pt]{subcaption}
\usepackage[nounderscore]{syntax}
\usepackage{tikz-qtree,tikz-qtree-compat}
\usepackage{tikz}
\usepackage{cleveref}
\usetikzlibrary{arrows.meta,bending,quotes}
\usepackage{ulem}
\usepackage{array}
\usepackage{pifont}
\usepackage{xr}
\usepackage[dvipsnames]{xcolor}
\tikzset{every tree node/.style={align=center, anchor=north}}

\definecolor{dkgreen}{rgb}{0,0.6,0}
\definecolor{gray}{rgb}{0.5,0.5,0.5}
\definecolor{mauve}{rgb}{0.58,0,0.82}
\AtBeginDocument{%
  \providecommand\BibTeX{{%
    \normalfont B\kern-0.5em{\scshape i\kern-0.25em b}\kern-0.8em\TeX}}}

\setcopyright{None}
\copyrightyear{2018}
\acmYear{2018}
\acmDOI{10.1145/1122445.1122456}

\acmConference[Woodstock '18]{Woodstock '18: ACM Symposium on Neural
  Gaze Detection}{June 03--05, 2018}{Woodstock, NY}
\acmBooktitle{Woodstock '18: ACM Symposium on Neural Gaze Detection,
  June 03--05, 2018, Woodstock, NY}
\acmPrice{15.00}
\acmISBN{978-1-4503-XXXX-X/18/06}

\input{helpers/macros.tex}
\input{helpers/lstdefs.tex}

\renewcommand\footnotetextcopyrightpermission[1]{} 

\begin{document}



\title{\oursystitle: An Interactive Debugger for SQL}
\settopmatter{authorsperrow=5}

\author{Yihao Hu}
\email{yihao.hu@duke.edu}
\affiliation{
  \institution{Duke University}
}

\author{Zian Chen}
\email{zian.chen@duke.edu}
\affiliation{
  \institution{Duke University}
}

\author{Zhiming Leong}
\authornote{Work completed at Duke prior to joining respective companies/institutions.}
\email{jlzeong@stripe.com}
\affiliation{
  \institution{Stripe}
}

\author{Sharan Sokhi}
\authornotemark[1]
\email{sokhish@amazon.com}
\affiliation{
  \institution{Amazon}
}

\author{Zachary Zheng}
\authornotemark[1]
\email{zzackzheng@meta.com}
\affiliation{
  \institution{Meta}
}

\author{Alex Chao}
\authornotemark[1]
\email{alexchao@ucsd.edu}
\affiliation{
  \institution{University of California, San Diego}
}

\author{Kristin Stephens-Martinez}
\email{ksm@cs.duke.edu}
\affiliation{
  \institution{Duke University}
}

\author{Sudeepa Roy}
\email{sudeepa@cs.duke.edu}
\affiliation{
  \institution{Duke University}
}

\author{Jun Yang}
\email{junyang@cs.duke.edu}
\affiliation{
  \institution{Duke University}
}


\begin{abstract}
SQL is declarative in nature and rich in its features.
Writing semantically correct SQL queries and finding logical bugs in SQL are not easy, even for experienced programmers,
who are often used to the mindset of working with general-purpose programming languages (GPLs).
While there are many GPL debuggers, SQL debugging has received much less attention.
In this paper, we present \oursys, a SQL debugger that enables users to inspect the logical execution of SQL queries visually and interactively
to identify and potentially fix logical bugs in the queries.
\oursys\ draws analogies to the debugging paradigm of GPLs (e.g., stepping, watchpoints, etc.), making it easier for programmers to adopt.
However, unlike debugging GPLs, which involves executing the underlying program in full to the point of interest,
\oursys\ allows users to jump to arbitrary points of interest by leveraging the power of the database systems,
through selective materialization and query rewrites.
To simplify deployment, \oursys\ acts as a lightweight middleware on top of the database system; it imposes no overhead to prepare a database for debugging
and maintains no state in the database systems during debugging sessions.
We demonstrate the effectiveness of \oursys\ through performance experiments as well as a user study in an educational setting.
\end{abstract}

\begin{CCSXML}
<ccs2012>
 <concept>
  <concept_id>10010520.10010553.10010562</concept_id>
  <concept_desc>Computer systems organization~Embedded systems</concept_desc>
  <concept_significance>500</concept_significance>
 </concept>
 <concept>
  <concept_id>10010520.10010575.10010755</concept_id>
  <concept_desc>Computer systems organization~Redundancy</concept_desc>
  <concept_significance>300</concept_significance>
 </concept>
 <concept>
  <concept_id>10010520.10010553.10010554</concept_id>
  <concept_desc>Computer systems organization~Robotics</concept_desc>
  <concept_significance>100</concept_significance>
 </concept>
 <concept>
  <concept_id>10003033.10003083.10003095</concept_id>
  <concept_desc>Networks~Network reliability</concept_desc>
  <concept_significance>100</concept_significance>
 </concept>
</ccs2012>
\end{CCSXML}




\maketitle


\input{sections/1-intro}

\input{sections/1-1-example}
\input{sections/2-debug-paradigm}
\input{sections/3-system}
\input{sections/4-experiments}

\input{sections/5-user-study}
\input{sections/6-related-work}
\input{sections/7-conclusion}



\clearpage
\balance
\bibliographystyle{ACM-Reference-Format}
\bibliography{main}

\clearpage
\input{sections/appendix/appendix}

\end{document}

%% file: helpers/macros.tex
\newcommand{\narrow}[1]{\ensuremath{\text{\scalebox{0.7}[1.0]{\textsf{#1}}}}}
\newcommand{\marrow}[1]{\ensuremath{\text{\scalebox{0.8}[1.0]{\textsf{#1}}}}}

\newcommand{\sql}[1]{\marrow{\texttt{#1}}}

\newcommand{\SELECT}{\sql{SELECT}}
\newcommand{\FROM}{\sql{FROM}}
\newcommand{\WHERE}{\sql{WHERE}}
\newcommand{\GROUPBY}{\sql{GROUP} \sql{BY}}
\newcommand{\HAVING}{\sql{HAVING}}
\newcommand{\WITH}{\sql{WITH}}

\newcommand{\JOIN}{\sql{JOIN}}

\newcommand{\UNION}{\sql{UNION}}
\newcommand{\EXCEPT}{\sql{EXCEPT}}
\newcommand{\INTERSECT}{\sql{INTERSECT}}

\newcommand{\LIMIT}{\sql{LIMIT}}
\newcommand{\OFFSET}{\sql{OFFSET}}

\newcommand{\mseq}{\ensuremath{\mathrel{\smash{\overset{\lower.5em\hbox{$\scriptscriptstyle\Box$}}{=}}}}}

\definecolor{black}{rgb}{0,0,0}
\definecolor{grey}{rgb}{0.8,0.8,0.8}
\definecolor{red}{rgb}{1,0,0}
\definecolor{green}{rgb}{0,1,0}
\definecolor{applegreen}{rgb}{0.55, 0.71, 0.0}
\definecolor{darkgreen}{rgb}{0,0.5,0}
\definecolor{darkpurple}{rgb}{0.5,0,0.5}
\definecolor{darkdarkpurple}{rgb}{0.3,0,0.3}
\definecolor{blue}{rgb}{0,0,1}
\definecolor{shadegreen}{rgb}{0.95,1,0.95}
\definecolor{shadeblue}{rgb}{0.95,0.95,1}
\definecolor{shadered}{rgb}{1,0.85,0.85}
\definecolor{shadegrey}{rgb}{0.85,0.85,0.85}
\definecolor{oddRowGrey}{rgb}{0.80,0.80,0.80}
\definecolor{evenRowGrey}{rgb}{0.85,0.85,0.85}

\newcommand{\cut}[1]{{}}

\newcommand{\oursys}{\textsc{I-Rex}}
\newcommand{\oursystitle}{I-Rex}

\newcommand{\mypar}[1]{\smallskip\noindent\textbf{{#1}.}}

\DeclareMathAlphabet{\mathbbold}{U}{bbold}{m}{n}

\newtheorem{Example}{Example}

\makeatletter
\newcommand*{\shifttext}[2]{%
  \settowidth{\@tempdima}{#2}%
  \makebox[0pt]{\hspace*{#1}#2}%
}
\makeatother

%% file: helpers/lstdefs.tex
\definecolor{lstpurple}{rgb}{0.5,0,0.5}
\definecolor{lstred}{rgb}{1,0,0}
\definecolor{lstreddark}{rgb}{0.7,0,0}
\definecolor{lstredl}{rgb}{0.64,0.08,0.08}
\definecolor{lstmildblue}{rgb}{0.66,0.72,0.78}
\definecolor{lstblue}{rgb}{0,0,1}
\definecolor{lstmildgreen}{rgb}{0.42,0.53,0.39}
\definecolor{lstgreen}{rgb}{0,0.5,0}
\definecolor{lstorangedark}{rgb}{0.6,0.3,0}	
\definecolor{lstorange}{rgb}{0.75,0.52,0.005}
\definecolor{lstorangelight}{rgb}{0.89,0.81,0.67}
\definecolor{lstbeige}{rgb}{0.90,0.86,0.45}

\DeclareFontShape{OT1}{cmtt}{bx}{n}{<5><6><7><8><9><10><10.95><12><14.4><17.28><20.74><24.88>cmttb10}{}

\lstdefinelanguage{smtlib2}{
  alsoletter=-,
  morekeywords={declare-const,define-fun,assert,minimize,maximize,check-sat,get-objectives,and,or,not,distinct},
  extendedchars=false,
  keywordstyle=\bfseries\color{lstpurple},
  deletekeywords={Int,Bool},
  keywords=[2]{Int,Bool},
  keywordstyle=[2]\color{lstblue},
}

\lstdefinestyle{psql}
{
tabsize=2,
basicstyle=\scriptsize\upshape\ttfamily,
language=SQL,
morekeywords={PROVENANCE,BASERELATION,INFLUENCE,COPY,ON,TRANSPROV,TRANSSQL,TRANSXML,CONTRIBUTION,COMPLETE,TRANSITIVE,NONTRANSITIVE,EXPLAIN,SQLTEXT,GRAPH,IS,ANNOT,THIS,XSLT,MAPPROV,cxpath,OF,TRANSACTION,SERIALIZABLE,COMMITTED,INSERT,INTO,WITH,SCN,UPDATED},
extendedchars=false,
keywordstyle=\bfseries,
mathescape=true,
escapechar=@,
sensitive=true
}

\lstdefinestyle{psqlcolor}
{
tabsize=2,
basicstyle=\scriptsize\upshape\ttfamily,
language=SQL,
morekeywords={PROVENANCE,BASERELATION,INFLUENCE,COPY,ON,TRANSPROV,TRANSSQL,TRANSXML,CONTRIBUTION,COMPLETE,TRANSITIVE,NONTRANSITIVE,EXPLAIN,SQLTEXT,GRAPH,IS,ANNOT,THIS,XSLT,MAPPROV,cxpath,OF,TRANSACTION,SERIALIZABLE,COMMITTED,INSERT,INTO,WITH,SCN,UPDATED},
extendedchars=false,
keywordstyle=\bfseries\color{lstpurple},
deletekeywords={count,min,max,avg,sum},
keywords=[2]{count,min,max,avg,sum},
keywordstyle=[2]\color{lstblue},
stringstyle=\color{lstreddark},
commentstyle=\color{lstgreen},
mathescape=true,
escapechar=@,
sensitive=true
}

\lstdefinestyle{datalog}
{
basicstyle=\footnotesize\upshape\ttfamily,
language=prolog
}

\lstdefinestyle{pseudocode}
{
  tabsize=3,
  basicstyle=\small,
  language=c,
  morekeywords={if,else,foreach,case,return,in,or},
  extendedchars=true,
  mathescape=true,
  literate={:=}{{$\gets$}}1 {<=}{{$\leq$}}1 {!=}{{$\neq$}}1 {append}{{$\listconcat$}}1 {calP}{{$\cal P$}}{2},
  keywordstyle=\color{lstpurple},
  escapechar=&,
  numbers=left,
  numberstyle=\color{lstgreen}\small\bfseries, 
  stepnumber=1, 
  numbersep=5pt,
}

\lstdefinestyle{xmlstyle}
{
  tabsize=3,
  basicstyle=\small,
  language=xml,
  extendedchars=true,
  mathescape=true,
  escapechar=£,
  tagstyle=\color{keywordpurple},
  usekeywordsintag=true,
  morekeywords={alias,name,id},
  keywordstyle=\color{lstred}
}

\lstdefinestyle{smtlib2}
{
tabsize=2,
basicstyle=\scriptsize\upshape\ttfamily,
numbers=left,
stepnumber=1,
breaklines=true,
stringstyle=\color{lstreddark},
commentstyle=\color{lstgreen},
mathescape=true,
escapechar=@,
sensitive=true
}

\lstdefinestyle{TinyJSON}{
    language=,
    basicstyle=\ttfamily\small, 
    keywordstyle=\color{mygreen}\bfseries, 
    stringstyle=\color{purple}, 
    commentstyle=\color{gray}\itshape, 
    numbers=none, 
    numberstyle=\tiny\color{gray}, 
    stepnumber=1, 
    numbersep=10pt, 
    tabsize=2, 
    captionpos=b, 
    breaklines=true, 
    breakatwhitespace=true, 
    showspaces=false, 
    showstringspaces=false, 
    escapeinside={(*@}{@*)}, 
    frame=none, 
    aboveskip=0pt, 
    belowskip=0pt, 
    extendedchars=true,
    literate={á}{{\'a}}1 {ã}{{\~a}}1 {é}{{\'e}}1 {í}{{\'i}}1 {ó}{{\'o}}1 {ú}{{\'u}}1 {ç}{{\c{c}}}1,
}


%% file: sections/1-intro.tex
\section{Introduction}
\label{sec:intro}

Relational databases form the backbone of many data-intensive applications and scalable data analytics.
Despite its age, SQL continues to retain its prevalence and importance due to its highly \emph{declarative} nature (i.e., specifying what the answer should be rather than how to compute it) and its extensive set of features that have only grown over time. However, SQL is difficult to understand and debug. In debugging general-purpose programming languages (GPLs, e.g., C++ or Python) that are typically \emph{procedural} (i.e., explicitly describing the steps required to compute the answer), it is natural to trace the execution of programs to debug them. However, this method becomes much trickier for SQL. 

As a first attempt, one may consider ``tracing'' a query's logical or physical plan, a tree whose leaves represent base tables and internal nodes represent relational operators. Through this approach, the user can examine the intermediate results produced by each of the plan nodes.
Unfortunately, there are several problems with this approach.
Firstly, the database optimizer often compiles a SQL query into a plan that bears no resemblance to the original query,
making plan tracing unhelpful in finding and fixing logical bugs in the original query.
Secondly, debugging is usually an iterative process: the user may examine the execution multiple times, sometimes with minor modifications to the query.
However, even small changes can lead the optimizer to choose a very different plan; for example, adding or removing even a simple condition in \sql{WHERE} can enable or disable an index-scan opportunity.
Even if the physical plan remains the same, there is no guarantee that the execution and result order are reproducible.
For example, the size of the buffer memory, the choice of the hash function, and variations in the speed of parallel threads at run-time
can change the ordering of intermediate result rows.
Such non-repeatable and seemingly inconsistent behaviors significantly complicate debugging.

Perhaps, one possible workaround would be to restrict the database optimizer to avoid optimization across syntactic blocks of a query,
such that each subquery corresponds to some subtree in the plan, and the user can at least inspect the result of each subquery. However, this may result in inefficient handling of complex queries. Furthermore, correlated subqueries, a frequently used SQL construct, render this workaround ineffective for many queries. In \Cref{sec:debug-scene}, we give concrete examples that illustrate this challenge and demonstrate how \oursys\ helps debug these types of queries.

In this work with \oursys, we focus on finding \emph{logical errors} instead of fixing \emph{performance issues}, and we aim to build an interactive SQL debugger with the following desiderata:

\begin{enumerate}[leftmargin=*, topsep=0pt]
\item The debugger should conceptually execute a SQL query in a completely reproducible manner that is faithful to how it is written and must be easy for programmers to understand.
\item The debugger should offer features analogous to those in GPL debuggers so that they are easy to learn and adopt.
\item Unlike GPL debuggers, which must execute underlying programs in full to reach points of interest,
    this debugger should support efficient implementation of powerful features that allow the user to jump directly to points of interest.
\item The debugger must scale to large databases and gracefully handle prohibitively large intermediate results.
\item The debugger should be simple to run (e.g., from a remote browser) and easy to deploy on top of a database,
    without modifying database system internals or requiring extensive preparation of the database for debugging.
\end{enumerate}

At first glance, (1) necessitates executing the SQL query ``literally'' using a completely unoptimized plan,
which runs counter to (4).
Our key insight is that, at any given point in time, the user can examine only a small ``window'' of the entire execution.
It suffices to support fast access to any given ``window'' without incurring the full cost of the unoptimized plan.
Supporting such accesses, along with (2) and (3) while respecting (5), requires novel optimization --- SQL's built-in \sql{OFFSET} and \sql{LIMIT} constructs fail to deliver acceptable performance for interactive debugging.

To this end, \oursys\ makes the following contributions:
\begin{itemize}[leftmargin=*, topsep=0pt]
\item \oursys\ introduces a novel debugging paradigm for SQL that draws many parallels to GPL debugging. Each query block is viewed as a function, and correlated subqueries are considered functions with arguments.
We define the \emph{canonical execution} of a SQL query,
which is reproducible and faithful to query syntax. 
\item \oursys\ supports various debugging features analogous to GPL debugging, including
    stepping through execution,
    pausing to examine a particular point during execution (breakpoints),
    pausing automatically at points of interest (watchpoints),
    drilling down into subqueries (stepping ``into'' a function),
    and row-level tracing (information flow analysis) in both forward (from input to output) and backward (from output to input) directions.
\item While performing the canonical execution would have been extremely inefficient,
    we designed query optimization techniques for \oursys\ to support fast ``teleporting'' from one point of execution to another without paying the cost of execution in between.
\item \oursys\ has a web-based frontend and a middleware backend that runs on top of a database system, making it easy to adopt and deploy.
    It imposes no overhead to prepare a database for debugging and maintains no state in the database during active debugging.
\item Our performance evaluation using the TPC-H benchmark shows the scalability of \oursys\ on large databases. It demonstrates the advantage of our optimization techniques over standard database (PostgreSQL) support for retrieving windows of query results.
\item We evaluate the efficacy of \oursys\ with a user study of 100+ students in a database course. Its findings indicate that \oursys\ significantly improves students' efficiency in debugging SQL queries. 
\end{itemize}

%% file: sections/1-1-example.tex
\section{Example Use of \oursys\ for Debugging}
\label{sec:debug-scene}

Since \oursys\ conceptually executes SQL queries as they are written,
it is particularly suitable for novices who are learning how SQL queries work on the logical level. It also serves as a powerful tool for novices and data professionals alike to find and fix logical bugs. This section provides a walk-through of how to use \oursys\ to debug a query; we introduce the interface, concepts, and features of \oursys. 

\begin{Example}\itshape
\label{eg:wrong-query}
Consider the toy database in~\Cref{fig:beer-db},
which stores information about beers, bars serving them, and drinkers who like beers and frequent bars.
We want to write a query for the following task:
every time a drinker frequents a bar, they buy one bottle of every beer they like or any beer priced \$2 or lower;
find the expected weekly revenue of each bar and rank them by revenue from high to low.
A user may come up with the following (incorrect) query:
\begin{lstlisting}[language=SQL, mathescape=true, basicstyle=\footnotesize\ttfamily]
SELECT s.bar, SUM(f.times_a_week * s.price) AS revenue -- $Q$
FROM Serves s, Frequents f
WHERE f.bar = s.bar
AND (s.price <= 2 OR EXISTS (
         SELECT * FROM Likes l WHERE f.drinker = l.drinker -- $Q_\narrow{inner}$
     ))
GROUP BY s.bar;
\end{lstlisting}
The above query intends to first find drinkers and beers available for purchase using a join between \sql{Serves} and \sql{Frequents}.
It additionally applies the two (alternative) conditions for purchase:
1)~the price is lower than \$2, and
2)~the beer is liked by the drinker.
Then, the query groups the intermediate results by bar and calculates the sum of revenue.
There is a bug in the \sql{EXISTS} subquery $Q_\narrow{inner}$, but the question for now is: how would a user examine the result of $Q_\narrow{inner}$?

Note that $Q_\narrow{inner}$ is \emph{correlated}, with the value for \sql{f.drinker} coming from the outer (i.e. enclosing) block.
As a result, there is no way to inspect this result independently.
This situation cannot be handled by a query plan with relational operators, where the result of each subtree depends on this subtree alone.

Indeed, most database optimizers will rewrite the above query for execution such that the subquery is decorrelated.
The decorrelated subquery would be a join involving both \sql{Likes} and \sql{Frequents} to compute the original subquery for all possible \sql{drinker} values in a single effort.
Then, the result will be combined with the rest of the outer query.
The new plan now consists of only relational operators and can be computed/debugged in a bottom-up fashion,
but unfortunately, it is vastly different from the original query.
Users without in-depth knowledge of query optimization will likely be confused.
\end{Example}

\begin{figure}[t]\small\setlength{\tabcolsep}{2pt}
\centering
   \subfloat[\small\mdseries \sql{Serves}]{
    \begin{minipage}[b]{0.3\columnwidth}\centering
     {\scriptsize
      \begin{tabular}[b]{|c|c|c|}\hline
         {\tt bar} & {\tt beer} & {\tt price} \\ \hline
         Apex & Corona & 1 \\
         Apex & Dixie  & 2  \\
         Edge & Amstel & 4 \\
         Edge & Corona & 1.5 \\
         Tavern & Amstel & 3 \\
         Tavern & Erdinger & 1 \\ \hline
      \end{tabular}
      }
    \end{minipage}
  }
  \subfloat[\small\mdseries \sql{Frequents}]{
    \begin{minipage}[b]{0.33\columnwidth}\centering
     {\scriptsize
      \begin{tabular}[b]{|c|c|c|}\hline
         {\tt drinker} & {\tt bar} & {\tt times} \\ \hline
         Amy & Apex & 1  \\
         Ben & Edge & 4  \\
         Coy & Tavern & 2  \\ 
         Dan & Edge & 3  \\ \hline
      \end{tabular}
      }
    \end{minipage}
  }
   \subfloat[\small\mdseries \sql{Likes}]{
    \begin{minipage}[b]{0.3\columnwidth}\centering
     {\scriptsize
      \begin{tabular}[b]{|c|c|}\hline
        {\tt drinker} & {\tt beer}  \\ \hline
         Amy & Erdinger \\
         Ben & Budweiser \\
         Ben & Dixie \\
         Coy & Amstel \\
         Dan & Amstel \\
         Dan & Corona \\ \hline
      \end{tabular}
      }
    \end{minipage}
  }
  \caption{\label{fig:beer-db}\mdseries A toy database about beers, bars, and drinkers.}
\end{figure}

\vspace{-3mm}
\begin{Example}\itshape
\label{eg:debug-scene}
This incorrect query described above returns the result shown in the output table in \Cref{fig:debug-scene} for the database in \Cref{fig:beer-db}.


\noindent Based on the user's knowledge of the database instance,
the revenue of bar \sql{Edge} seems to be higher than expected.
We now walk through how to use \oursys\ to debug the query, starting with this observation.

\input{fig/fig-ex1}

\oursys\ presents a panel of UI debugging elements for each block of the query.
\Cref{fig:debug-scene} illustrates the UI for the outer query block $Q$
(for simplicity, we do not show the actual interface here as it contains other details that may be distracting for this discussion).
\oursys\ shows the execution of this block in stages, from top to bottom.
At the very top, \oursys\ shows all input tables in \FROM.
Note that one row from each input table is highlighted;
this combination of input tuples (or ``input combo,'' formally defined in \Cref{sec:debug-paradigm:model}) intuitively defines the current point of execution being examined. 
Then, \oursys\ presents the ``joined \& filtered'' result, which is the intermediate output after the \WHERE\ clause is applied.
The intermediate result row produced by the current input combo is automatically highlighted.
Between this result table and the input tables, a ``filter expression'' tree shows how the \WHERE\ condition evaluates over the current input combo.
The user can examine the value of each subexpression and see how the truth values (color-coded) are combined by logical connectives.
Following the joined \& filtered result, \oursys\ shows the \GROUPBY\ result.
For each group, in addition to the \GROUPBY\ value, each group member's contribution to the final \sql{SUM} aggregate is also shown.
Again, the group and the member that the current input combo contributes to are automatically highlighted.
Finally, the output table shows the final result of the query block.

For the convenience of subsequent discussion, we show a symbolic identifier (e.g., $s_0$, $f_2$, $j_6$...) for each row.
We do assign internal row identifiers, whose purposes will be explained later in \Cref{sec:debug-paradigm},
but they are not explicitly displayed by the UI.

Given the unexpectedly high revenue of \sql{Edge},
the user naturally wants to examine how that output row was computed.
\oursys\ supports tracing backward from output to input using a more general mechanism called \emph{pinning}, denoted here by the red \textcolor{red}{$\dagger$} next to $o_1: \langle \sql{Edge}, 38.5 \rangle$.
A pinned output row intuitively narrows the execution down to only parts that are ``relevant'' to it
(which we define formally later in \Cref{sec:debug-paradigm}).
As shown in \Cref{fig:debug-scene}, relevant rows in upstream tables are automatically colored red.
Specifically, input rows $s_2$ and $s_3$ join with $f_1$ and $f_3$ in a nested-loop fashion to produce rows $j_2$ through $j_5$ in the joined \& filtered table;
they are further grouped into $g_1$ in the group table before finally producing $o_1$.

As soon as the user pins $o_1$, \oursys\ identifies the relevant input combos
and ``positions'' execution at the first input combo in lexicographical order ---
this is how the input combo $\langle s_2, f_1 \rangle$ was activated in the first place in \Cref{fig:debug-scene}.
Starting with this input combo, the user can step through other input combos relevant to $o_1$,
or manually choose any combo to investigate.
Throughout the process, \oursys\ automatically updates highlighting in downstream tables as well as the state of any expression evaluation trees,
in effect supporting forward tracing.

When examining the execution for $\langle s_2, f_1 \rangle$ as shown in \Cref{eg:debug-scene},
the user notices that $\langle s_2, f_1 \rangle$ contributes a value of 16 to the final sum.
According to the filter expression tree,
the price of Amstel is higher than \$2, but \sql{EXISTS($Q_\narrow{inner}$)} returns true,
meaning that Ben should like Amstel per query intention.
\oursys\ allows the user to ``drill down'' into the execution of $Q_\narrow{inner}$ in this context.
Recall that $Q_\narrow{inner}$ is a correlated subquery.
To a programmer, the analogy of evaluating $Q_\narrow{inner}$ is
a function call with a parameter setting for \sql{f.drinker},
which takes its value from the current input combo.
Hence, ``drilling down'' naturally corresponds to ``stepping into'' a function call in GPL debugging.

Once the user drills down to $Q_\narrow{inner}$, \oursys\ creates a new panel for debugging this subquery block,
illustrated in \Cref{fig:debug-scene-1}.
The UI makes it clear that we are executing
\begin{lstlisting}[language=SQL, basicstyle=\small\ttfamily]
SELECT * FROM Likes l WHERE f.drinker = l.drinker;
\end{lstlisting}\vspace*{-1mm}
with parameter \sql{f.drinker} set to Ben.
In this case, a quick glance at the output table or the filter expression tree for this block should reveal the problem ---
nothing supports that Ben likes Amstel.
In fact, this subquery does not even look for Amstel.
Therefore, to fix the query, we should let the outer query ``call'' $Q_\narrow{inner}$ with an additional parameter \sql{s.beer},
and let $Q_\narrow{inner}$ additionally test whether \sql{s.beer} \sql{=} \sql{l.beer}.

The user can modify the original query accordingly and restart the debugging session to verify that it fixes the problem.
Because the new query is syntactically similar to the old,
\oursys\ will produce a very consistent experience for the user:
the execution of the new query will be nearly identical to the old,
with the exact same stages and exact same ordering of input combos and intermediate result rows.
\end{Example}


\vspace{-2mm}
\mypar{A Note on Scalability}
While \Cref{eg:debug-scene} simplifies our discussion by assuming a small database instance,
realistic databases are much larger.
Even with moderately sized databases, joins can easily produce large intermediate results.
As we will see in \Cref{sec:experiments},
for some TPC-H benchmark queries, even with a moderate scaling factor of 1,
a single intermediate result table can easily take an hour to print out entirely on the database server console.
Hence, it is impractical to cache entire results at any location (database server, middleware, or user browser) or send them over the network.

Meanwhile, users generally cannot examine many rows simultaneously.
Therefore, \oursys\ UI supports a pagination mechanism to display each table. 
The user only sees a page's worth of data at once in each table.
Data outside the visible page can be computed and fetched on demand (and subsequently cached or evicted).
As the user interacts with the UI, \oursys\ adjusts the visible portions of all displayed tables accordingly,
so that each reflects the execution point defined by the current input combo.
Similarly, the user can visit any page of input tables to select a new input combo for tracing;
\oursys\ would automatically adjust downstream table displays to show corresponding intermediate result rows.

Hence, efficient pagination is key to scalable SQL debugging.
With efficient pagination, \oursys\ effectively allows the user to ``teleport'' across points of interest without incurring the execution cost in between,
making \oursys\ much more powerful than GPL debugging. Pagination also provides a more manageable and less overwhelming experience for users. 
While most database systems support efficient pagination of input tables,
it is challenging to paginate intermediate results and associated debugging information.
Furthermore, \oursys's requirement of making execution reproducible imposes specific ordering of intermediate results that complicates optimization.
We discuss our solution in \Cref{sec:system}.

%% file: fig/fig-ex1.tex
\begin{figure}[t]
    \begin{minipage}[b]{1.0\columnwidth}\centering
    \scalebox{0.8}{
            \begin{tabular}[t]{|c|c|c|c|}
                \cline{2-4}
                \multicolumn{1}{c|}{} & \multicolumn{3}{c|}{\sql{Serves AS s}}  \\ \cline{2-4}
                \multicolumn{1}{c|}{} & \textbf{bar} & \textbf{beer} & \textbf{price} \\ \hline
                $s_0$ & Apex & Corona & 1 \\ \hline
                $s_1$ & Apex & Dixie & 2 \\ \hline
                \rowcolor{blue!30} \textcolor{red}{$s_2$} & \textcolor{red}{Edge} & \textcolor{red}{Amstel} & \textcolor{red}{4} \\ \hline
                \textcolor{red}{$s_3$} & \textcolor{red}{Edge} & \textcolor{red}{Corona} & \textcolor{red}{1.5} \\ \hline
                $s_4$ & Tavern & Amstel & 3 \\ \hline
                $s_5$ & Tavern & Erdinger & 1 \\ \hline
            \end{tabular}
            \begin{tabular}[t]{|c|c|c|c|}
                \cline{2-4}
                \multicolumn{1}{c|}{} & \multicolumn{3}{c|}{\sql{Frequents AS f}}  \\ \cline{2-4}
                \multicolumn{1}{c|}{} & \textbf{drinker} & \textbf{bar} & \textbf{times} \\ \hline
                $f_0$ & Amy & Apex & 1 \\ \hline
                \rowcolor{blue!30} \textcolor{red}{$f_1$} & \textcolor{red}{Ben} & \textcolor{red}{Edge} & \textcolor{red}{4} \\ \hline
                $f_2$ & Coy & Tavern & 2 \\ \hline
                \textcolor{red}{$f_3$} & \textcolor{red}{Dan} & \textcolor{red}{Edge} & \textcolor{red}{3} \\ \hline
            \end{tabular}
    }

    \vspace*{-3.5ex}

    \scalebox{0.5}{
        \hspace*{10em}
        \scalebox{10.0}{$\Downarrow$}
        \begin{tikzpicture}[scale=.55,
            level 1/.style={level distance=1cm, sibling distance=1mm},
            level 2/.style={sibling distance=1mm, level distance=1cm}, 
            level 3/.style={level distance=1.5cm,sibling distance=2mm},
            true/.style = {shape=rectangle, rounded corners, draw, align=center, top color=applegreen, bottom color=applegreen!20},
            false/.style = {shape=rectangle, rounded corners, draw, align=center, top color=white, bottom color=red!70},
            sub/.style = {shape=rectangle, rounded corners, draw, align=center, top color=white, bottom color=white}
        ]
            \huge
            \Tree
            [.\node[true] {\sql{AND}}; 
                [.\node[true] {\sql{f.bar = s.bar} \\ \sql{`Edge' = `Edge'}};]
                [.\node[true] {\sql{OR}}; 
                    [.\node[true] {\sql{EXISTS}};
                        [.\node[sub] {\sql{SELECT * ...}};]
                    ]
                    [.\node[false] {\sql{s.price <= 2} \\ \sql{4 <= 2}};]
                ]
            ]
            \draw (0, 1) node {Filter Expression};
        \end{tikzpicture}
    }

    \scalebox{0.8}{
        \begin{tabular}{|c|c|c|c|c|c|c|}
            \cline{2-7}
            \multicolumn{1}{c|}{} & \multicolumn{6}{c|}{Joined \& Filtered}  \\ \cline{2-7}
            \multicolumn{1}{c|}{} & \textbf{bar} & \textbf{beer} & \textbf{price} & \textbf{drinker} & \textbf{bar} & \textbf{times} \\ \hline
            $j_0$ & Apex & Corona & 1 & Amy & Apex & 1 \\  \hline
            $j_1$ & Apex & Dixie & 2 & Amy & Apex & 1 \\ \hline
            \rowcolor{blue!30} \textcolor{red}{$j_2$} & \textcolor{red}{Edge} & \textcolor{red}{Amstel} & \textcolor{red}{4} & \textcolor{red}{Ben} & \textcolor{red}{Edge} & \textcolor{red}{4} \\ \hline
            \textcolor{red}{$j_3$} & \textcolor{red}{Edge} & \textcolor{red}{Amstel} & \textcolor{red}{4} & \textcolor{red}{Dan} & \textcolor{red}{Edge} & \textcolor{red}{3} \\ \hline
            \textcolor{red}{$j_4$} & \textcolor{red}{Edge} & \textcolor{red}{Corona} & \textcolor{red}{1.5} & \textcolor{red}{Ben} & \textcolor{red}{Edge} & \textcolor{red}{4} \\ \hline
            \textcolor{red}{$j_5$} & \textcolor{red}{Edge} & \textcolor{red}{Corona} & \textcolor{red}{1.5} & \textcolor{red}{Dan} & \textcolor{red}{Edge} & \textcolor{red}{3} \\ \hline
            $j_6$ & Tavern & Amstel & 3 & Coy & Tavern & 2 \\ \hline
            $j_7$ & Tavern & Erdinger & 1 & Coy & Tavern & 2 \\ \hline
        \end{tabular}
    }

    \hspace*{-10em}\scalebox{2}{$\Downarrow$}
    
    \scalebox{0.8}{
        \begin{tabular}{|c|c|c|}
            \cline{2-3}\multicolumn{1}{c|}{} & \multicolumn{2}{c|}{Group}\\
            \cline{2-3}\multicolumn{1}{c|}{} & \textbf{bar} & \textbf{sum\_input} \\ \hline
            \multirow{2}{*}{$g_0$} & \multirow{2}{*}{Apex} & 1 \\
                                                         & & 2 \\\hline
            \cellcolor{blue!30} & \cellcolor{blue!30} & \cellcolor{blue!30}\textcolor{red}{16} \\
            \cellcolor{blue!30} & \cellcolor{blue!30} & \textcolor{red}{12} \\
            \cellcolor{blue!30} & \cellcolor{blue!30} & \textcolor{red}{6} \\
            \cellcolor{blue!30}\multirow{-4}{*}{\textcolor{red}{$g_1$}} & \cellcolor{blue!30}\multirow{-4}{*}{\textcolor{red}{Edge}} & \textcolor{red}{4.5} \\\hline
            \multirow{2}{*}{$g_2$} & \multirow{2}{*}{Tavern} & 6 \\
                                                           & & 2 \\\hline
        \end{tabular}

        \scalebox{2.5}{$\Rightarrow$}

        \begin{tabular}{|c|c|c|}
            \cline{2-3}
            \multicolumn{1}{c|}{} & \multicolumn{2}{c|}{Output}  \\ \cline{2-3}
            \multicolumn{1}{c|}{} & \textbf{bar} & \textbf{revenue} \\ \hline
            $o_0$ & Apex & 3 \\ \hline
            \rowcolor{blue!30}\textcolor{red}{$o_1$\shifttext{-5em}{$\dagger$}} & \textcolor{red}{\textbf{Edge}} & \textcolor{red}{\textbf{38.5}} \\ \hline
            $o_2$ & Tavern & 8 \\ \hline
        \end{tabular}
    }
    \end{minipage}
    \vspace*{1ex}
    \caption{\mdseries Debugging context for outer query block, \Cref{eg:debug-scene}.}
    \label{fig:debug-scene}
\end{figure}

\begin{figure}[t]
    \vspace*{-4ex}
    \begin{tabular}{cc}\small
        Bindings from enclosing queries: & \sql{f.drinker = 'Ben'} \\\hline
    \end{tabular}
    \vspace*{1ex}

    \begin{minipage}[b]{1.0\columnwidth}\centering
        \scalebox{0.8}{
            \begin{tabular}{|c|c|c|}
                \cline{2-3}
                \multicolumn{1}{c|}{} & \multicolumn{2}{c|}{\sql{Likes AS l}}  \\ \cline{2-3}
                \multicolumn{1}{c|}{} & \textbf{drinker} & \textbf{beer} \\ \hline
                \rowcolor{blue!30} $l_0$ & Amy & Erdinger \\ \hline
                $l_1$ & Ben & Budweiser \\ \hline
                $l_2$ & Ben & Dixie \\ \hline
                $l_3$ & Coy & Amstel \\ \hline
                $l_4$ & Dan & Amstel \\ \hline
                $l_5$ & Dan & Corona \\ \hline
            \end{tabular}

            \parbox{0.35\columnwidth}{
            \begin{tikzpicture}[scale=.4,
                level 1/.style={level distance=1cm, sibling distance=1mm},
                level 2/.style={sibling distance=1mm, level distance=1cm}, 
                level 3/.style={level distance=1.5cm,sibling distance=2mm},
                true/.style = {shape=rectangle, rounded corners, draw, align=center, top color=applegreen, bottom color=applegreen!20},
                false/.style = {shape=rectangle, rounded corners, draw, align=center, top color=white, bottom color=red!70}
            ]
                \huge
                \Tree
                [.\node[false] {\sql{f.drinker = l.drinker} \\ \sql{'Ben' = 'Amy'}}; 
                ]
                \draw (0,1) node {\large Filter Expression};
            \end{tikzpicture}

            \vspace*{1ex}
            \centerline{\scalebox{3}{$\Rightarrow$}}
            }

            \begin{tabular}{|c|c|c|}
                \cline{2-3}
                \multicolumn{1}{c|}{} & \multicolumn{2}{c|}{Filtered / Output}  \\ \cline{2-3}
                \multicolumn{1}{c|}{} & \textbf{drinker} & \textbf{beer} \\ \hline
                $o_0$ & Ben & Budweiser \\ \hline
                $o_1$ & Ben & Dixie \\ \hline
            \end{tabular}
        }
        \vspace*{1ex}
        \caption{\mdseries Debugging context for inner query block, \Cref{eg:wrong-query}.}
        \label{fig:debug-scene-1}
    \end{minipage}
\end{figure}

%% file: sections/2-debug-paradigm.tex
\section{Debugging Paradigm}
\label{sec:debug-paradigm}

This section describes the debugging paradigm of \oursys.
We start with the data and execution model (\Cref{sec:debug-paradigm:model}),
discuss various debugging operations (\Cref{sec:debug-paradigm:operations}),
and conclude with a summary of what SQL constructs \oursys\ currently supports and how it can be further extended (\Cref{sec:debug-paradigm:limitations}).

\subsection{Data and Execution Model}
\label{sec:debug-paradigm:model}

\subsubsection{Table Model and IIDs}

\oursys\ allows users to interact with two types of tables: \emph{base} tables and \emph{derived} tables. Base tables are those that exist in the database, while derived tables are computed from the base tables during execution. 
To support reproducible execution (including ordering of intermediate result rows),
we depart from the default unordered multiset semantics of SQL
and instead model each table as an ordered list of rows, each associated with an \emph{internal row id} (\emph{IID}).
A table's IIDs must be drawn from a totally ordered domain and uniquely identify the rows within the table.
The IIDs do not influence the semantics of SQL query operators and should not be considered as extra columns by these operators.
For example, if two rows have identical values for all columns (ignoring IID),
they are still considered duplicates by SQL though their IIDs differ.

For a base table with a primary key declaration, we simply define IID to be its primary key value.
Otherwise, we choose a compact \sql{UNIQUE} key if one is available.
In case that the table has no key, we use the database's internal row id (e.g., PostgreSQL's \sql{ctid});
such ids are unique among duplicate rows.
For a derived table, we define its IID according to how the table is computed.
For each SQL query operator, we define a \emph{canonical execution procedure} and a \emph{result IID synthesis function}
(more details will be provided shortly).
The IID synthesis function can compute the IID for each result row on the fly during canonical execution,
such that the result rows are always produced in the IID order.
In addition to maintaining a reproducible order, \oursys\ uses IIDs to support a variety of debugging features.
Therefore, good IID designs positively impact performance, as we shall see in \Cref{sec:system}.
Although it is technically possible to make the IID of a row simply its sequence number in the result set,
such numbers are not useful by themselves for tracing, pinning, or possible query rewrite optimizations.
As we will see below, most of the IIDs in \oursys\ are ``logical'' instead of physical,
encoding \emph{data provenance}~\cite{green2007provenance} helpful to tracing and pinning.

\subsubsection{Query Blocks and Debugging Contexts}

Given a query, \oursys\ defines a \emph{canonical execution procedure} that is always (conceptually) followed when debugging.
A complex query can be viewed as a collection of syntactic blocks with dependencies among them.
\oursys\ models its canonical execution in terms of function calls:
\begin{itemize}[leftmargin=*]
\item The outermost query block defines a function that computes the query result when executed.
\item A subquery block defines a function that can be called by the function corresponding to its enclosing query block. 
    A correlated subquery is analogous to a helper function with parameters, while a non-correlated subquery is comparable to a helper function without parameters.
    For instance, for $Q_\narrow{inner}$ in \Cref{eg:wrong-query},
    the caller is responsible for passing the values of external column references (e.g., \sql{f.drinker} \sql{=} \sql{`Ben'} in \Cref{fig:debug-scene-1}) as parameters.
\item Each table defined by \WITH\ is a function that computes the table contents when the table is referenced.
\end{itemize}
Overall, the canonical execution starts by executing the function defined by the outermost query block,
which then calls functions corresponding to its subqueries or tables defined by \WITH,
which may further call functions for their subqueries, etc.

As a function may be called multiple times, for each \emph{invocation} of a function (i.e., an execution of a query block), \oursys\ creates a \emph{debugging context} as needed,
analogous to an ``activation frame'' in GPLs.
As an example, $Q$ in \Cref{eg:wrong-query} calls $Q_\narrow{inner}$ multiple times, with potentially different \sql{f.drinker} values as arguments, resulting in different executions.
Each debugging context holds information specific to the particular execution of the query block,
such as the values of external column references (i.e., parameter values).
A more complex example is shown in \Cref{fig:debugging-context}.

\begin{figure}[t]
\includegraphics[width=\linewidth]{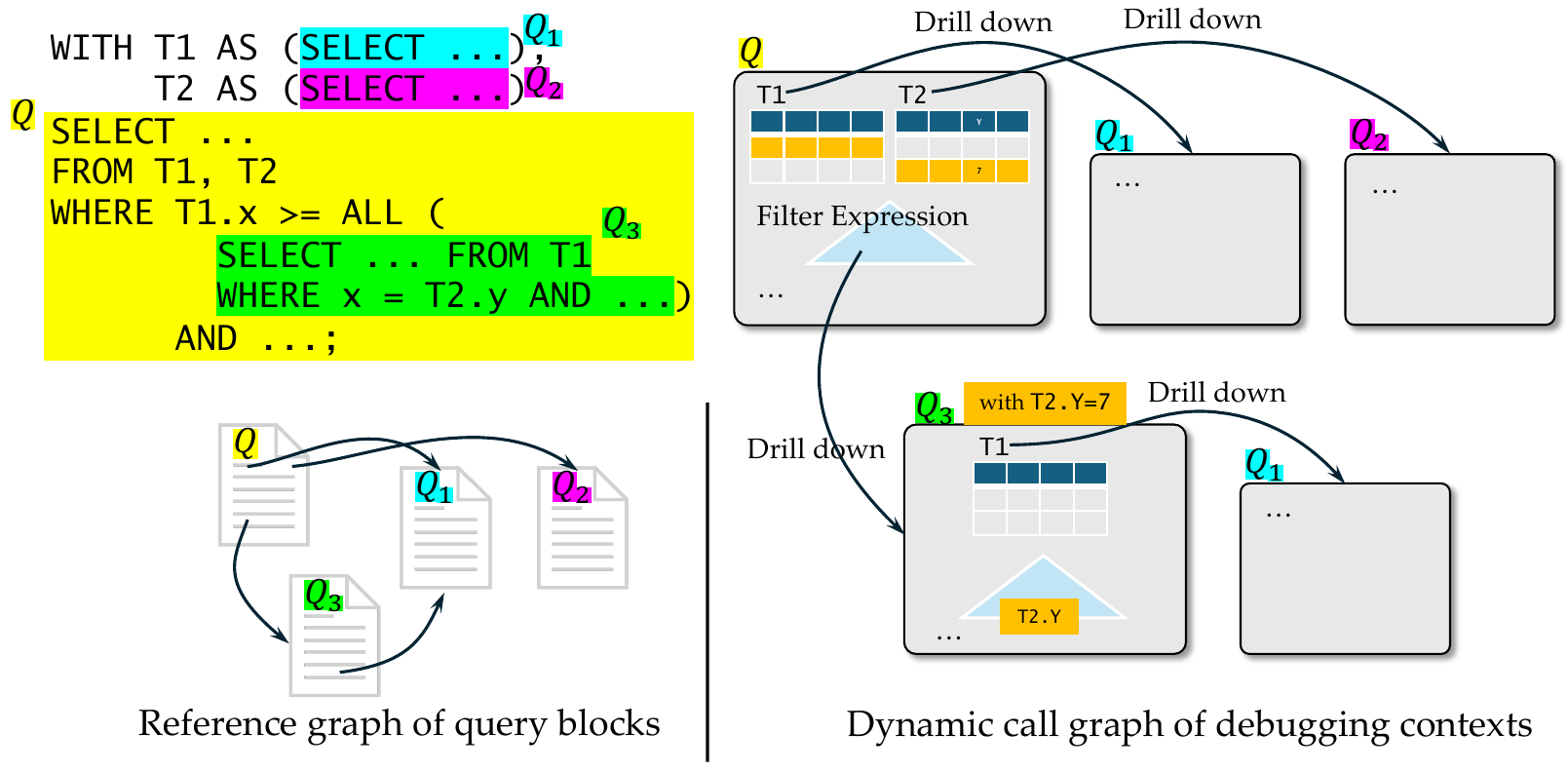}
\caption{\mdseries Static reference graph of query blocks vs.\ dynamic call graph of debugging contexts, for a more complex example.
Note that there are two debugging contexts for $Q_1$ with different states; the parent of $Q_3$'s debugging context provides the binding for $\sql{T2.Y}$.}\label{fig:debugging-context}
\end{figure}

\subsubsection{Canonical Execution for Each Query Block}
\label{sec:debug-paradigm:model:canonical-exec}

Having discussed the overall canonical execution procedure, we now zoom in on the canonical execution of each query block.
For a complex construct such as \SELECT,
we further decompose its execution into \emph{stages},
where each stage can be seen as an operator with its own canonical execution procedure and result synthesis function.
We only discuss \SELECT\ with inner cross joins below; the canonical execution of SQL set/bag operations and general join expressions (including outer joins) are discussed in~\Cref{appendix:debug-paradigm:set} and \Cref{appendix:debug-paradigm:join}.

The first stage in a \SELECT\ block is \emph{join \& filter}.
Its canonical execution is nested for-loops iterating through rows, one for each input table in \sql{FROM}, in order.
In the innermost loop, we test the \sql{WHERE} condition
(which may involve calling subqueries with parameter values obtained from the loop variables).
The result row IID is synthesized as a vector whose components are the IIDs of the joining input rows.
Note that the lexicographic order of these vector IIDs is consistent with the row production order. Considering \Cref{eg:debug-scene} and \Cref{fig:debug-scene},
the IID for \sql{Serves} is its primary key $(\sql{bar}, \sql{beer})$,
and the IID for \sql{Frequents} is its primary key $(\sql{drinker}, \sql{bar})$.
Therefore, the IID for the derived joined \& filtered table has the format $((\sql{s.bar},\sql{s.beer}), (\sql{f.drinker},\sql{f.bar}))$.
In particular, $j_2$'s IID would be $(s_2, f_1)$,
where we abuse notation slightly and use $s_2$ and $f_1$ to denote their IIDs
$(\sql{Edge},\sql{Amstel})$ and $(\sql{Ben},\sql{Edge})$ respectively.

If the block contains grouping or aggregation, a \emph{grouping} stage will be next.
Its canonical execution is a stable sort of input rows according to the \GROUPBY\ expressions%
\footnote{Aggregate queries w/o \GROUPBY\ are the same as having empty \GROUPBY\ list.}
in order.
Each result row starts with the \GROUPBY\ values as columns,
followed by additional columns needed to evaluate the remainder of the query (e.g. \HAVING\ or \SELECT\ expressions).
The result row IID is synthesized as a vector whose components are the \GROUPBY\ expressions in order,
followed by the input row IID.
This order puts member rows of a group together, allowing the UI to detect and display group boundaries (\Cref{fig:debug-scene}).
The \emph{stable} sort ensures that the IID is consistent with the row production order.
For instance, in \Cref{fig:debug-scene}, the first member row of $g_1$ has IID $(\sql{Edge}, j_2)$,
where $\sql{Edge}$ also identifies the $g_1$ group.

The final stage produces the \emph{final output} for the entire \SELECT\ block.
The canonical execution processes the input rows in order. 
If \HAVING\ is present, we filter out input rows whose group does not pass the \HAVING\ condition.
Next, if the previous stage is grouping, we produce one result row for each group, using the leading portion of the IID corresponding to \GROUPBY\ expressions as the result IID.
For instance, $o_1$ in \Cref{fig:debug-scene} would be \sql{Edge}, same as $g_1$'s.
Otherwise, no aggregation is involved, and we simply produce one result row for each input row, using the same IID for the result.
In either case, the IID is consistent with the result row production order.

Several special cases associated with the final stage are worth noting,
including support for \sql{DISTINCT} and \sql{ORDER} \sql{BY}.
If any \HAVING\ or \SELECT\ expression contains subqueries, they would be handled the same way as in the join \& filter stage.
If \SELECT\ is followed by \sql{DISTINCT},
the canonical execution will further perform a sort of all result rows using all columns in some order
(optimized to be maximally consistent with the input IID order)
and output only distinct rows;
the result IID will be synthesized as a vector whose components are all the columns in the order chosen.
If \sql{ORDER} \sql{BY} is present,%
\footnote{Per SQL standard, every \sql{ORDER} \sql{BY} expression must correspond to an output column.
For SQL dialects that do not have this restriction, complex \sql{ORDER} \sql{BY} may necessitate a separate stage to help with debugging. We will not elaborate here.}
the canonical execution will further perform a stable sort of all result rows according to the \sql{ORDER} \sql{BY} expressions,
and the result IID will be synthesized as a vector whose components start with the \sql{ORDER} \sql{BY} expressions and end with the input IID.

\subsection{Debugging Operations}
\label{sec:debug-paradigm:operations}

We describe what operations a user can perform in a debugging context,
focusing on those requiring formalization and in-depth discussion;
others with straightforward semantics (e.g., visualization of expression tree) are omitted.
We describe the operations mostly in the context of \SELECT\ blocks with inner cross joins, although we have generalized them to other SQL constructs.

\subsubsection{Input Combo Space and Execution Positioning}

Recall that a debugging context refers to a specific execution of a query block.
To represent the entire execution of a debugging context, we define its \emph{input combo space}
as an ordered set of combinations of input tuples called ``\emph{input combos}'' (illustrated in \Cref{sec:debug-scene}), each representing a particular point in execution.
The (conceptual) current point of execution is called the \emph{active input combo} for the debugging context.
For a \SELECT\ debugging context with $n$ input tables,
the active input combo is the $n$ input rows being examined inside the innermost loop by the canonical execution of the join \& filter stage,
and it is represented by an $n$-dimensional vector whose components are the IIDs of these input rows.
For instance, the active input combo of the \SELECT\ debugging context shown in \Cref{fig:debug-scene} is $\langle s_2, f_1 \rangle$,
drawn from the input combo space $\{s_0, \dots, s_5\} \times \{f_0, \dots, f_3\}$.

The user can position the execution of a debugging context at a particular point using either \emph{stepping} or \emph{teleporting}.
With \emph{stepping}, \oursys\ automatically advances the active input combo to its successor (or predecessor if stepping in reverse order) in the input combo space.
For example, (ignore the pin and) suppose the active input combo in \Cref{fig:debug-scene} were $\langle s_0, f_3 \rangle$;
stepping would advance it to $\langle s_1, f_0 \rangle$, followed by $\langle s_1, f_1 \rangle$, consistent with the processing order of the canonical execution.
With \emph{teleporting}, given any input table, through the paginated display,
the user can jump or scroll to any page and select a particular row as active;
\oursys\ will then update the active input combo accordingly.
Even more advanced execution positioning can be achieved by pinning, as described later.


\subsubsection{Forward Tracing}

An active input combo in the debugging context can produce \emph{derivative} rows in the downstream result tables produced by the stages.
Intuitively, the input combo contributes to the computation of derivative rows;
formally, the input combo participates in the \emph{how-provenance}~\cite{green2007provenance} of the derivative row.
As discussed in \Cref{sec:debug-scene}, once the active input combo is set,
\oursys\ automatically refreshes all visualizations of expression trees,
such that they reflect evaluation over the active input combo or its derivative rows.
\oursys\ also automatically refreshes all result table displays,
such that they show the pages containing and highlighting the derivative rows.
This \emph{forward tracing} feature allows the user to examine the effect of input rows on subsequent processing
and potentially understand why a desired effect is not achieved.
Note that for a \SELECT\ block, a given input combo can contribute to at most one result row per stage,
so there is no ambiguity in which derivative rows to show downstream.
For example, in \Cref{fig:debug-scene}, the active input combo $\langle s_2, f_1 \rangle$ 
has one derivative row per result table, namely $j_2$, the first row in $g_1$, and $o_1$.
In \Cref{fig:debug-scene-1}, there is no derivative row for the active input combo $\langle l_0 \rangle$.
By design, \oursys\ ensures this rule of at most one derivative per stage for other blocks (such as set/bag operations) as well.

Note that by design, across stages in a debugging context,
the result IIDs naturally encode the provenance information linking rows from one stage to the next.
For example, as discussed in \Cref{sec:debug-paradigm:model:canonical-exec}, 
the IID of the joined \& filtered table, synthesized from the IIDs of the input tables,
essentially encodes how-provenance.

\subsubsection{Pinning: Tracing and Watchpointing}

We first describe the semantics of \emph{pinning} and then discuss its use for tracing and watchpointing.
Consider all tables displayed for a debugging context.
\oursys\ allows the user to \emph{pin} up to one row from each of these tables.
Formally, each pinned row defines a subset of the input combo space, called the row's \emph{pinned (input combo) space}.
Overall, the pinned space for the debugging context is its input combo space \emph{intersected} with each of the pinned spaces defined by the pinned rows.
Intuitively, the pinned space allows the user to narrow the execution down to the points of interest while debugging. 

Consider a \SELECT\ block joining $n$ tables in \FROM\ with input combo space $\mathbf{R} = \prod_{i=1}^n R_i$,
where each $R_i$ denotes the ordered list of IIDs for the $i$-th input table.
A pinned row in the $j$-th input table with IID $x$ defines a pinned space of $\smash{\prod_{i=1}^{j-1} R_i \times \{x\} \times \prod_{j=i+1}^n R_i}$;
i.e., the user is interested only in input combos with $x$ participating.
A pinned row in a result (intermediate or final) table defines a pinned space of
$\{ v \mid v \in \mathbf{R} \land \text{$x$ is a derivative row of $v$} \}$;
i.e., the user is interested only in input combos that contribute to $x$.
Considering \Cref{fig:debug-scene},
the pinned space for the pinned $o_1$ is $\{ \langle s_2, f_1 \rangle, \langle s_2, f_3 \rangle, \langle s_3, f_1 \rangle, \langle s_3, f_3 \rangle\}$ because these input combos contribute to the total revenue in $o_1$.
If the user additionally pins $f_1$ in \sql{Frequents};
this pinned row will have a pinned space of $\{ s_0, \dots, s_5 \} \times \{ f_1 \}$.
Thus, the overall pinned space will be shifted to $\{ \langle s_2, f_1 \rangle, \langle s_3, f_1 \rangle \}$,
meaning the user only wants to investigate how Dan contributes to Edge's total revenue.


Pinning has multiple uses.
First, it augments \oursys's tracing capability:
pinning effectively allows \emph{backward tracing} from a pinned result row produced by any stage to the pinned space of input combos,
and then from there, forward tracing to result rows further downstream.
Second, combined with stepping, pinning provides a form of \emph{watchpointing}.
With a pinned space in effect, \oursys\ restricts stepping to the pinned space,
effectively setting a watchpoint that pauses execution only at points relevant to the pinned rows. In \Cref{fig:debug-scene}, with $o_1$ pinned, execution automatically stops at the 4 relevant input combos $\langle s_2, f_1 \rangle$, $\langle s_2, f_3 \rangle$, $\langle s_3, f_1 \rangle$, $\langle s_3, f_3 \rangle$, skipping irrelevant portions of execution before, after, and in between.

Note that the result IIDs across stages naturally encode provenance for efficient pinning.
For example, although the IID of an aggregate result row does not encode its provenance by itself,
the group-by values therein allow a group's how-provenance to be recovered by examining the IIDs for the preceding grouping stage (which contains these values too) and/or by querying.

\subsubsection{Drilling Down and Pulling Up}
As illustrated in \Cref{eg:debug-scene}, \oursys\ allows the user to \emph{drill down} into a subquery,
analogous to ``stepping into'' a function call.
Besides drilling down into subqueries in \WHERE, \HAVING, and \SELECT\ expressions,
\oursys\ also allows drilling down into subqueries in \FROM,
which can be either directly nested therein or via a reference to some \WITH\ definition.
In these cases, the user can drill down directly through a particular row in a derived input table;
\oursys\ will open the debugging context for the subquery responsible for producing that table
and automatically pin that row in the subquery's final output table.
When debugging a complex query with many nested blocks,
\oursys\ essentially maintains a ``call stack'' of debugging contexts (similar to \Cref{fig:debugging-context}).
To let the user \emph{pull up} from a subquery debugging context,
\oursys\ simply returns the user to the previous debugging context on the stack,
which belongs to the enclosing query block.
The state of the subquery debugging context is still preserved
until the user changes the active input combo in the enclosing block's debugging context,
which forces ``stepping out'' from the last subquery function call.

\subsection{Supported Features and Limitations}\label{sec:debug-paradigm:limitations}

\oursys\ supports a rich set of SQL query features, such as \SELECT\ queries,
set/bag operations (e.g., \sql{UNION}, \sql{INTERSECT}, \sql{EXCEPT}),
subqueries (including correlated ones),
outer joins and joins expressed in general \sql{JOIN} syntax (except \sql{LATERAL}),
and \sql{WITH}.
\oursys\ currently does not support recursive \WITH, \sql{LATERAL} joins and \sql{WINDOW} functions.

However, \oursys's debugging paradigm is very general.
By modeling a query as a collection of syntactic blocks that act like function calls,
any relational operator that deterministically transforms input tuples into output tuples can be adapted to this framework.
Assigning totally ordered, logical identifiers (IIDs) allows \oursys\ to enforce reproducible execution stages across complex operators.
We briefly outline strategies for several possible future extensions:
1)~for \sql{WINDOW} functions, \oursys\ could introduce a new execution stage where canonical execution performs a stable sort based on \sql{PARTITION} \sql{BY} and \sql{ORDER} \sql{BY} expressions;
2)~\sql{LATERAL} joins can be modeled in a way similar to correlated subqueries, where the left table acts as the caller passing bound parameters to the right table's context;
3)~recursive \WITH\ can potentially be supported by treating each recursive step as a context whose input tables are produced by the previous step, until a fixed point is reached.

\mypar{Limitations}
While \oursys\ handles complex queries, it does have some limitations.
First, as \oursys\ relies on deterministic execution,
it does not support built-in functions whose results cannot be reliably reproduced (e.g., \sql{RAND}).
The behavior of queries involving such functions may depend on the execution plan chosen by the optimizer,
and cannot be reproducible by \oursys's canonical execution.

Second, while features such as forward tracing and pinning allow a user to quickly examine execution around points of interest,
\oursys\ does not by itself suggest potential points of interest.
Currently, \oursys\ sorts and paginates all intermediate and final result tables in IID order,
which makes it easier for the user to locate a row by its IID (or prefix thereof).
A future extension is to support searching for specific result rows by an arbitrary filter condition,
which can be efficiently implemented by issuing a rewritten query to compute only the filtered rows
(in a similar vein as a \emph{page-fetch query} in \Cref{sec:system:pagination}).
A more intriguing extension would be to support pinning by condition instead of by row.

%% file: sections/3-system.tex
\section{System and Optimizations}
\label{sec:system}

We now describe the implementation and optimizations of \oursys.
Given a query $Q$ being debugged, a straightforward approach would be to carry out the canonical execution of $Q$ and collect all debugging information in one go,
but this approach is not scalable.

\begin{Example}\label{ex:tpch-motivating}\itshape
Consider a TPC-H~\cite{tpch} database instance generated with a scale factor of $1$ (i.e., total size of all tables is $1$GB), and the following subquery in the \sql{FROM} clause of benchmark query Q8:
\begin{lstlisting}[language=SQL, basicstyle=\scriptsize\ttfamily]
SELECT EXTRACT(year from o_orderdate) as o_year,
       l_extendedprice * (1 - l_discount) as volume,
       n2.n_name as nation
FROM part, supplier, lineitem, orders, customer, nation n1, nation n2, region
WHERE p_partkey = l_partkey AND s_suppkey = l_suppkey
AND ...  -- omitted for simplicity
\end{lstlisting}

If we were to compute and display the ten entire tables (8 input, 1 joined \& filtered, 1 output) and lineage data, we would ship at least $2$GB of data from the database server to the \oursys\ client, introducing unacceptable overhead over the network and client-side memory. 

\end{Example}

To address this challenge, \oursys\ uses three high-level ideas.
(1)~Instead of showing the entire canonical execution of $Q$,
we let the user examine one small, relevant window of this execution at a time.
(2)~To obtain all debugging information needed for a particular execution window,
instead of performing the canonical execution and instrumenting it,
we can formulate SQL queries based on $Q$ to compute such information directly and declaratively.
(3)~We can judiciously compute some summary data
and then use it to further rewrite these queries to be more efficient. 

As described in \Cref{sec:debug-scene}, \oursys\ realizes idea (1) using a paginated display for each table.
For each debugging context, the active input combo 
marks the current point of execution and controls which pages to display by default: pages containing the input combo for input tables, and the page containing the derivative row for each subsequent stage.
Together, these pages define the ``window'' of execution seen by the user.
The optimization of tracing and pinning builds on pagination; see \Cref{appendix:system:positioning} for details.

\subsection{Optimizing Pagination}
\label{sec:system:pagination}

Given a base or derived table to display, the potential savings of pagination are easy to see:
by focusing on one page at a time, we only need to compute, transmit, and render content on this page alone.
The baseline solution to pagination supported by SQL uses its \OFFSET\ and \LIMIT\ features.
However, \OFFSET\ and \LIMIT\ alone do not make queries faster
(verified in \Cref{sec:experiments}).
The optimizer typically has insufficient knowledge to skip directly to the \OFFSET-th result row,
so the query often executes from the very beginning to $\OFFSET+\LIMIT$, creating enormous waste.
Furthermore, for queries that enforce result ordering
(which is the norm in \oursys\ as it aims to provide consistent and reproducible orderings for all results),
simply determining the order would often involve computing and sorting all result rows, saving no computation cost.

To overcome this limitation,
we observe that if we know which input rows contribute to the particular result page,
we can use such information to prefilter the input rows and reduce execution cost.
For example, in the joined \& filtered table of \Cref{fig:debug-scene},
suppose a user only needs to retrieve the page containing rows $j_2$ through $j_5$.
It turns out that we only need $\{ s_1, s_2 \}$ from \sql{Serves} and $\{ f_1, f_3 \}$ from \sql{Frequents} to compute this page.
In general, explicitly enumerating a set of such input rows is not scalable,
but we can instead compute a compact summary of this set, at the expense of potentially introducing some false positives
(but never any false negatives).

\oursys\ has three types of summary-based filters, described further below, to cover the range of possibilities:
1)~IID-based filters,
2)~``sargable''~\cite{DBLP:conf/sigmod/SelingerACLP79} filters, and
3)~Bloom filters~\cite{DBLP:journals/cacm/Bloom70}.
All three require precomputing a summary of what input rows contribute to each page.
This step, performed when the debugging context is initialized,
executes a \emph{milestone query} (generated automatically by \oursys)
to produce a \emph{milestone table} for each input table of the debugging context and for the result table of each stage.
For each page of table contents, a milestone row contains the summaries of input rows that contribute to the page.
The milestone tables are cached by the \oursys\ client.
\footnote{\oursys\ assumes the client sees a static snapshot of the underlying database; otherwise, data updates will invalidate the cached milestones.}
Subsequently, when a particular page of a table is requested,
\oursys\ consults the corresponding milestone table to generate a \emph{page-fetch query} that incorporates filtering using the summaries pertaining to the page requested.

\begin{table}[t]\scriptsize
    \renewcommand{\tabcolsep}{1mm}
    \begin{tabular}[b]{@{}r|p{16mm}|c|c|c|c|} \cline{2-6}
        & \multicolumn{1}{c|}{\sql{min\_iid}} & \sql{s.bar} & \sql{s.beer} & \sql{f.drinker} & \sql{f.bar}    \\ \cline{2-6}
        0 & ((Apex, Corona), (Amy, Apex)) & [Apex, Edge] & [Amstel, Dixie] & [Amy, Ben] & [Apex, Edge]  \\ \cline{2-6}
        1 & ((Edge, Amstel), (Dan, Edge)) & [Edge, Edge] & [Amstel, Corona] & [Ben, Dan] & [Edge, Edge]  \\ \cline{2-6}
        2 & ((Tavern, Amstel), (Coy, Tavern)) & [Tavern, Tavern] & [Amstel, Erdinger] & [Coy, Coy] & [Tavern, Tavern]  \\ \cline{2-6}
    \end{tabular}
    \smallskip
    \caption{\label{tab:milestone}\mdseries Milestone table for the joined \& filtered table in \Cref{fig:debug-scene},
        with $3$ pages and page size of $3$ (rows).
        The IIDs have the format $((\sql{s.bar},\sql{s.beer}), (\sql{f.drinker},\sql{f.bar}))$, and 3 min IIDs are those of the rows $j_0$, $j_3$, and $j_6$ respectively.
        The Bloom filter column is omitted.}
\end{table}

\subsubsection{IID-Based Filtering}

Recall that \oursys\ sorts every table by its IID to ensure consistency and reproducibility.
Hence, pages of a table partition its rows into consecutive, non-overlapping IID ranges.
To enable IID-based filtering, we precompute, in the milestone table, the minimum IID among rows on each page.
For example, \Cref{tab:milestone} shows the milestone table for the joined \& filtered table in \Cref{fig:debug-scene},
with the minimum IID for each page captured by the \sql{min\_iid} column.
Using this information, \oursys\ can add a tight range condition on IID to the \WHERE\ clause of a page-fetch query.

Continuing with the same example, the following query fetches the contents of the second page of the joined \& filtered table,
while also synthesizing the IID for each result row. Note that the lower IID bound is the minimum IID of the second page in \Cref{tab:milestone},
and the (open) upper IID bound is the minimum IID of the next page:
\begin{lstlisting}[language=SQL, basicstyle=\scriptsize\ttfamily]
SELECT *, ((s.bar,s.beer),(f.drinker,f.bar)) AS _iid -- synthesize IID
FROM Serves s, Frequents f
WHERE (...) -- original WHERE conditions
-- IID-based filtering:
AND ((s.bar,s.beer),(f.drinker,f.bar))>=(('Edge','Amstel'),('Dan','Edge'))
AND ((s.bar,s.beer),(f.drinker,f.bar))<(('Tavern','Amstel'),('Coy','Tavern'))
ORDER BY 1; -- order by IID
\end{lstlisting}

IID-based filtering serves two purposes.
First, it rejects any result row not on the requested page.
This feature is indispensable because the other types of filtering implemented may introduce false positives and admit rows outside the requested page;
therefore, \oursys\ always activates IID-based filtering to ensure correctness.
Second, IID-based filtering can enable efficient execution.
However, its potential is limited:
from a range bound on a multi-component IID, we can safely infer a range bound only on the leading component, but not on subsequent components.
For example, in the page-fetch query above,
the query optimizer may infer that \sql{s.bar} (and hence \sql{f.bar} by transitivity) falls within $[\sql{Edge}, \sql{Tavern}]$ and use an index on \sql{s.bar} (or \sql{f.bar}),
but nothing is known about \sql{s.beer} or \sql{f.drinker}.
This limitation motivates other types of filtering below.

\subsubsection{Sargable Filtering}

To enable efficient page-fetch queries,
we aggressively look for opportunities to inject safe, \emph{sargable}~\cite{DBLP:conf/sigmod/SelingerACLP79} predicates that enable index plans.
To this end, for each column $A$ in an input table, where an index already exists on $A$, \oursys\ computes a concise summary of the $A$ values, in the form of a single range $[\min, \max]$, over all input rows that contribute to each page of the result table.
For example, \Cref{tab:milestone} shows the milestone table for the joined \& filtered table in \Cref{fig:debug-scene},
assuming that \sql{s.bar}, \sql{s.beer}, \sql{f.drinker}, \sql{f.bar} are the indexed columns. \oursys\ injects a range condition for each in the \WHERE\ clause of a page-fetch query.

Using the same example, the page-fetch query for the second page of the joined \& filtered table can now be augmented as follows:
\begin{lstlisting}[language=SQL, basicstyle=\scriptsize\ttfamily]
SELECT ... FROM Serves s, Frequents f
WHERE (...) -- original WHERE conditions
AND s.bar BETWEEN 'Edge' AND 'Edge' -- sargable filtering
AND s.beer BETWEEN 'Amstel' AND 'Corona'
AND f.drinker BETWEEN 'Ben' AND 'Dan'
AND f.bar BETWEEN 'Edge' AND 'Edge'
AND ... -- IID-based filtering
ORDER BY 1; -- order by IID
\end{lstlisting}
Since the sargable filters are always on indexed columns,
they enable the optimizer to consider index plans that access only the relevant parts of the input tables.
Even when index plans are not the most optimal,
the inexpensive filter conditions still reduce the number of rows involved in downstream processing (e.g., join) and hence overall execution cost. 

Pushing down filters too aggressively may have a negative effect when the filters are not selective,
which can happen if the page size is not small and the result row ordering does not correlate with the filter column value.
The optimizer may underestimate the output cardinality of the filter and choose a secondary index scan that is less efficient than a table scan.
Therefore, we only inject a sargable filter for column $A$ if its range on the requested page covers a small percentage of the entire domain of $A$.
\oursys\ uses $30\%$ as a cutoff, which works well empirically.

Finally, instead of using one range to summarize input column values for a page, we can use multiple ranges to reduce false positives,
at the expense of higher precomputation and storage costs for milestones.
This trade-off is worth investigating as future work.

\subsubsection{Bloom Filtering}

Correlated subqueries in \WHERE\ can still bottleneck query execution,
especially if the external columns they reference are not indexed and therefore not covered by sargable filtering.
One approach to avoid evaluating expensive correlated subqueries is memoization:
regarding each correlated subquery $q$ as a function,
we can cache all value settings of the external columns that $q$ is invoked with,
along with the corresponding results returned by $q$.
However, this approach is not scalable as the number of possible value settings can be high
(even with sophisticated decorrelation techniques~\cite{DBLP:conf/icde/SeshadriPL96} to restrict such settings)
and results can be large, requiring large cache spaces.

To balance space efficiency and performance, we take an approximate approach at a coarser grain:
instead of capturing the behavior of each subquery precisely,
we consider the entire \WHERE\ condition (including any constituent subqueries),
and use a Bloom filter~\cite{DBLP:journals/cacm/Bloom70} to track, for each page, the set of ``relevant'' input column values for which the entire \WHERE\ evaluates to true.
We choose relevant columns to be all input table column externally referenced by any subquery,
plus any input table column involved in an atomic condition together with some subquery.
For instance, in our running example ($Q$ from \Cref{eg:wrong-query}), the only relevant input column is \sql{f.drinker}.
As another example, if the \WHERE\ condition is $A$ \sql{IN} \sql{(}$q(B)$\sql{)}, where $q$ is a correlated subquery with external column reference $B$,
the set of relevant input columns would be $\{A,B\}$.
A single SQL query can conveniently precompute such Bloom filters using a user-defined aggregate, along with the rest of the milestone table.

Continuing with our running example, the page-fetch query for the second page of the joined \& filtered table is now updated to:
\begin{lstlisting}[language=SQL, basicstyle=\scriptsize\ttfamily]
SELECT ... FROM Serves s, Frequents f
WHERE (...) -- original WHERE conditions
AND ... -- sargable filtering
AND ... -- IID-based filtering
AND BLOOM_CHECK('10101010', f.drinker) -- Bloom filtering
ORDER BY 1; -- order by IID
\end{lstlisting}
Here, \sql{BLOOM\_CHECK($F$,$e$)} is a user-defined function that checks if an entry $e$ is in the Bloom filter $F$.
Given the probabilistic design of Bloom filters, \sql{BLOOM\_CHECK} may return false positives but not false negatives.
\sql{`10101010'} is the Bloom filter computed for the second page,
encoding all \sql{f.drinker} values over input combos that contribute to this page.
Failing \sql{BLOOM\_CHECK} allows query execution to save time by bypassing the evaluation of the expensive subquery.
In this example, row $f_2$, with \sql{f.drinker} of \sql{Coy}, fails the subquery condition,
so the computed Bloom filter will not have the contribution of \sql{Coy}.
Hence, when evaluating the above page-fetch query,
even though \sql{Coy} still passes the sargable filter, it will fail \sql{BLOOM\_CHECK},
skipping rest of \WHERE\ including the subquery.
Yet, because of false positives, passing \sql{BLOOM\_CHECK} does not mean \WHERE\ must evaluate to true;
the original \WHERE\ condition still must be included.

\begin{proposition}\label{thm:bloom}
Suppose functions $\sql{BLOOM\_GEN}: \mathcal{P}(\text{tuples}) \to \text{bitstrings}$
and $\sql{BLOOM\_CHECK}: \text{bitstrings} \times \text{tuples} \to \{\text{true}, \text{false}\}$
satisfy $e \in \mathcal{V} \Rightarrow \sql{BLOOM\_CHECK}(\sql{BLOOM\_GEN}(\mathcal{V}), e)$
for any tuple $e$ and any set $\mathcal{V}$ of tuples of the same sort.
Let $\{R_i\}$ denote a list of (potentially aliased) input tables and $\Theta$ a condition over $\{R_i\}$.
For any subset $\mathcal{A}$ of columns in $\{R_i\}$, the following two queries are equivalent:
\begin{lstlisting}[language=SQL, mathescape=true, basicstyle=\scriptsize\ttfamily]
`$Q_1$`: SELECT * FROM $\dots, R_i, \dots$ WHERE $\Theta$;
`$Q_2$`: SELECT * FROM $\dots, R_i, \dots$ WHERE $\Theta$ AND BLOOM_CHECK($F$, $\langle \mathcal{A} \rangle$);
    $\text{where}\;F = $ BLOOM_GEN(SELECT $\langle \mathcal{A} \rangle$ FROM $\dots, R_i, \dots$ WHERE $\Theta$)$\text{.}$
\end{lstlisting}
\end{proposition}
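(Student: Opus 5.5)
We show that $Q_1$ and $Q_2$ return the same multiset of rows; the argument is pointwise over the input combo space. Write $\mathbf{R} = \prod_i R_i$ for the cross product of the (possibly aliased) input tables. Under SQL bag semantics, each query returns one output row for each input combo $v \in \mathbf{R}$ whose \WHERE\ condition evaluates to true, and that row is the projection \sql{*} of $v$. Duplicates are preserved one-to-one because both queries range over the same $\mathbf{R}$. It therefore suffices to show that, for every $v \in \mathbf{R}$,
\[
\Theta(v)=\text{true} \iff \bigl(\Theta(v)=\text{true} \land \sql{BLOOM\_CHECK}(F, \langle\mathcal{A}\rangle(v))=\text{true}\bigr).
\]
Here $\langle\mathcal{A}\rangle(v)$ denotes the tuple of $\mathcal{A}$-column values taken from $v$. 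The right-to-left direction holds trivially.

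For the left-to-right direction, let $\mathcal{V}$ be the set of tuples produced by the subquery \sql{SELECT} $\langle\mathcal{A}\rangle$ \sql{FROM} $\dots,R_i,\dots$ \sql{WHERE} $\Theta$, treated as a set by dropping multiplicities, since $\sql{BLOOM\_GEN}$ takes a set. By construction, $\mathcal{V} = \{\langle\mathcal{A}\rangle(u) \mid u\in\mathbf{R},\ \Theta(u)=\text{true}\}$. If $\Theta(v)$ is true, then $\langle\mathcal{A}\rangle(v)\in\mathcal{V}$. The hypothesis on $\sql{BLOOM\_GEN}$ and $\sql{BLOOM\_CHECK}$ then gives $\sql{BLOOM\_CHECK}(\sql{BLOOM\_GEN}(\mathcal{V}),\langle\mathcal{A}\rangle(v))$, which is exactly the second conjunct. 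Hence the conjunction also holds, and the pointwise equivalence follows.

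The main obstacle is making sure the evaluation of $\Theta$ is the same in the subquery that defines $F$ and in the outer query $Q_2$. Three issues need care.
\begin{enumerate}
\item $\Theta$ may contain correlated subqueries. We must argue that its truth value on $v$ depends only on $v$ and the (static) database, and not on surrounding context. This relies on the determinism assumption stated in \Cref{sec:debug-paradigm:limitations}, which excludes functions such as \sql{RAND}.
\item SQL's three-valued logic must be handled. Unknown and false are treated alike in \WHERE, so the pointwise claim should be stated as ``$\Theta(v)$ evaluates to true''. Then \sql{AND} with a true second conjunct preserves truth, and an unknown or false $\Theta$ makes both sides non-true.
\item Tuples in $\mathcal{A}$ may contain NULLs. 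The hypothesis on $\sql{BLOOM\_GEN}$ and $\sql{BLOOM\_CHECK}$ is phrased in terms of tuple membership, so we need membership to use the same identity on tuples, including NULL components, as the UDFs do. Under that reading the argument goes through unchanged.
\end{enumerate}
Finally, since $F$ is a constant computed once, $\sql{BLOOM\_CHECK}(F,\cdot)$ is a deterministic function of $v$. Conjoining it therefore cannot change multiplicities, which completes the equivalence.
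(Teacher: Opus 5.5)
Your proof is correct and uses the same key step as the paper's. The inclusion $Q_2 \subseteq Q_1$ is trivial, and any row satisfying $\Theta$ has its $\langle\mathcal{A}\rangle$-projection in the set from which $F$ is built, so the no-false-negative hypothesis forces \sql{BLOOM\_CHECK} to be true. The paper instead argues by contradiction over output tuples, after separately handling the case where $\Theta$ is unsatisfiable; your direct per-input-combo argument is slightly tighter, since it also covers bag multiplicities and three-valued logic, which the paper leaves implicit.
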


The proposition above implies considerable freedom in choosing the set of relevant input columns to track for a Bloom filter.
Instead of tracking the IIDs of the input rows (which can be many), we decided to track input column values that influence the outcome of evaluating conditions involving subqueries,
because there are usually far fewer distinct values to track.
This heuristic has worked well for \oursys.
Finally, Bloom filter conditions are not sargable and cannot be used to avoid full table scans.
Evaluating them introduces overhead.
Therefore, \oursys\ uses Bloom filtering only for ``short-circuiting'' evaluation of expensive \WHERE\ clauses with subqueries.
If a Bloom filter returns too many false positives,
the overhead of \sql{BLOOM\_CHECK} may outweigh the savings achieved by skipping subquery evaluation. \oursys\ automatically estimates the false positive rate of each Bloom filter and inject the \sql{BLOOM\_CHECK} condition only if this rate is less than $50\%$.

\subsubsection{Summary}\label{sec:system:pagination:summary}

\oursys\ optimizes pagination by generating page-fetch queries that automatically combine three types of summary-based filters to prune unnecessary data. First, IID-based filtering uses the minimum IID of a page to create a precise page boundary; it is \emph{always} applied to guarantee correctness by rejecting rows outside the requested page. Second, sargable filtering injects dynamic range bounds for indexed columns to encourage efficient index scans. Sargable filters are available only when the query block has base tables with indexed columns, and they are automatically applied on top of IID filters. Finally, Bloom filtering tracks relevant input column values, skipping the evaluation of expensive correlated subqueries. \oursys\ proactively detects the existence of correlated subqueries and automatically computes and applies Bloom filters along with IID and sargable filters whenever possible.

\subsection{System Overview}
\label{sec:system:overview}


We briefly describe the overall \oursys\ system while leaving most implementation and optimization details to the \Cref{appendix:system}.

\oursys\ operates a highly scalable, stateless client-middleware-server architecture.
The frontend client is built with React, which handles user interaction logic, visualizes the debugging contexts, and manages the caching of data and SQL code to ensure a smooth, interactive experience.
The middleware server performs the heavy lifting of query analysis and rewriting using Apache Calcite~\cite{begoli2018apache};
it ``compiles'' the original query to be debugged into a collection of rewritten SQL query templates to support debugging.
Each debugging action initialized by the client is supported by executing appropriately instantiated query templates on the underlying database.
\oursys\ extends the database server using user-defined functions (e.g., for Bloom filters) but does not modify its internals otherwise.

Importantly, to support many concurrent users without creating bottlenecks, \oursys\ maintain no session-specific state for any active debugging session in the middleware or the underlying database.
When a client starts a debugging session, the middleware performs a compilation step to analyzes the original query.
The query is divided into logical blocks (\Cref{fig:debugging-context}),
and for each block (which will produce a debugging context when invoked at runtime),
the middleware automatically generates milestone queries, page-fetch queries, and queries supporting other debugging features,
as parameterized SQL templates.
In the step, the middleware also employs several equivalent query rewrites
(including scalar subquery optimization, sargable filter injection, and recursive filter pushdown, further described in ~\Cref{appendix:system:details})
to help enable optimizations that tend to be missed by the underlying database optimizer.
Finally, the result of this compilation step is shipped to and cached in the client.
As a user enters a specific debugging context,
the client initializes the context by executing the cached milestone queries for the corresponding block and caching the result milestone tables.
Subsequently, as the user navigates within the debugging context, the client dynamically instantiates page-fetch and other supporting queries on demand using the cached milestone tables, enabling fast exploration without computing the full query.





%% file: sections/4-experiments.tex
\vspace{-3mm}
\section{Performance Experiments}
\label{sec:experiments}

We conduct experiments to evaluate the performance and scalability of \oursys\ under the optimizations discussed in \Cref{sec:system}.
We focus on evaluating page-fetch queries (\Cref{sec:system:pagination}) and milestone computation,
as they are the most expensive queries that \oursys\ uses;
other operations (e.g., tracing and pinning) either use page-fetch queries or relatively cheap queries with highly selective conditions.
The baseline for fetching pages is to use SQL \OFFSET\ and \LIMIT.
We enable various pagination optimizations in \oursys\ to evaluate their respective benefits.
IID-based filtering is always enabled (for correctness).
\Cref{sec:experiments:bloom} evaluates pagination with Bloom filtering but not other optimizations, while \Cref{sec:experiments:sargable} evaluates pagination with sargable filtering but not others.
Finally, \Cref{sec:experiments:all} enables all pagination optimizations and evaluates both the performance of pagination and milestone computation.


We use the TPC-H benchmark~\cite{tpch} and generate 3 database instances of sizes $1$GB (benchmark default), $5$GB, and $10$GB.
In addition to the default indexes on the primary keys,
we also create a reasonable set of secondary indexes (details in \Cref{appendix:experiments}) that simulate typical usage.
For \Cref{sec:experiments:bloom,sec:experiments:sargable}, we show one query for each,
where the respective optimization is applicable and can be best evaluated;
for the general evaluation in \Cref{sec:experiments:all}, we show results for all 22 benchmark queries.
All experiments were done on a 64-bit Ubuntu 22.04 LTS server with four Intel(R) Xeon(R) 6530P CPUs @ 2.30GHz, 64GB RAM, and 256GB disk space.
We used PostgreSQL 18.3~\cite{postgres} with \sql{work\_mem=128MB} and \sql{shared\_buffers=8GB};
we also turned off parallel execution (\sql{max\_parallel\_workers\_per\_gather=0}) to reduce its potentially confounding effect, so results are easier to interpret.

\vspace{-3mm}
\subsection{Effectiveness of Bloom Filtering}
\label{sec:experiments:bloom}

This experimental setup uses a variant of TPC-H Q2:
\begin{lstlisting}[language=SQL, basicstyle=\footnotesize\ttfamily]
SELECT ROW(p_partkey, s_suppkey, ps_partkey, ps_suppkey, n_nationkey, r_regionkey), *
FROM part, supplier, partsupp, nation, region
WHERE ... AND ps_supplycost = (
    SELECT MIN(ps_supplycost)
    FROM partsupp, supplier, nation, region
    WHERE p_partkey = ps_partkey  -- p_partkey from outer query
    AND ...);
\end{lstlisting}
As discussed in \Cref{sec:system:pagination},
\oursys\ precomputes a Bloom filter for the \sql{p\_partkey} values for input rows that contribute to each result page,
and when fetching a page, uses the Bloom filter to short-circuit the evaluation of the rest of \WHERE\ containing an expensive correlated subquery. However, the bloom filter is only used for membership checking if the false positive rate is less than $50\%$.

By default, we set the number of Bloom filter bits to $m=1024$;
we conservatively estimate the number of unique \sql{p\_partkey} in each page (denoted by $n$) as the page size;
accordingly, we set the number of Bloom filter hash functions to $\frac{m}{n} \times \ln{2}$.

We conduct two sets of experiments.
First, we fix the database size to $1$GB and vary the page size.
Second, we fix the page size to 50 and vary the database size.
We compare page-fetch queries with Bloom filtering with the baseline using \OFFSET\ and \LIMIT.
We collect the execution times reported by the \sql{EXPLAIN} \sql{ANALYZE} command for all queries and show them in \Cref{fig:bloom-exp}.

\begin{figure}[t]
\begin{subfigure}[t]{0.47\columnwidth}
    \includegraphics[width=1.05\linewidth]{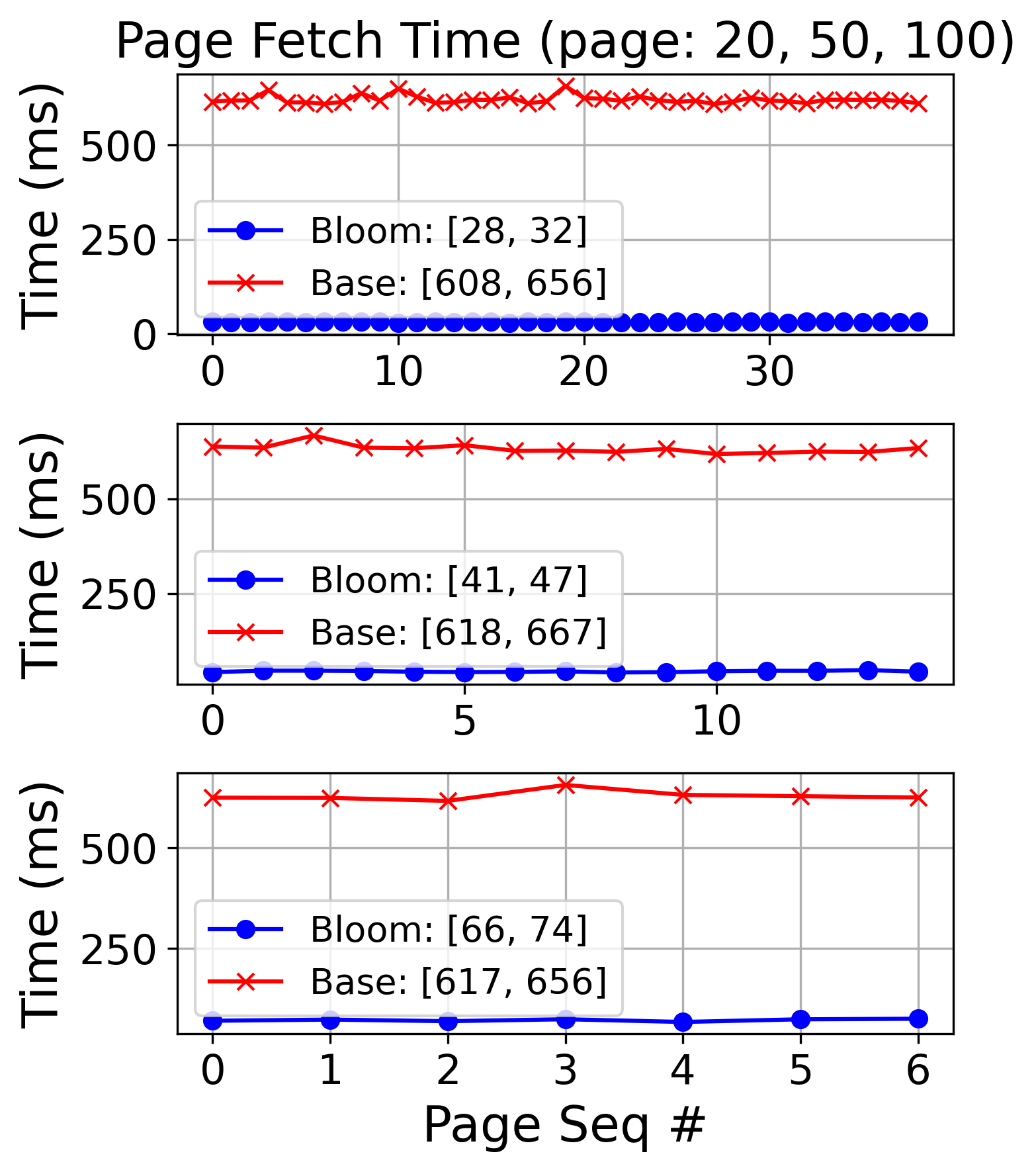}
    \caption{\mdseries\itshape Varying page size; $1$GB database.}
    \label{fig:bloom-vary-pg}
\end{subfigure}
\hfill
\begin{subfigure}[t]{0.52\columnwidth}
    \includegraphics[width=0.95\linewidth]{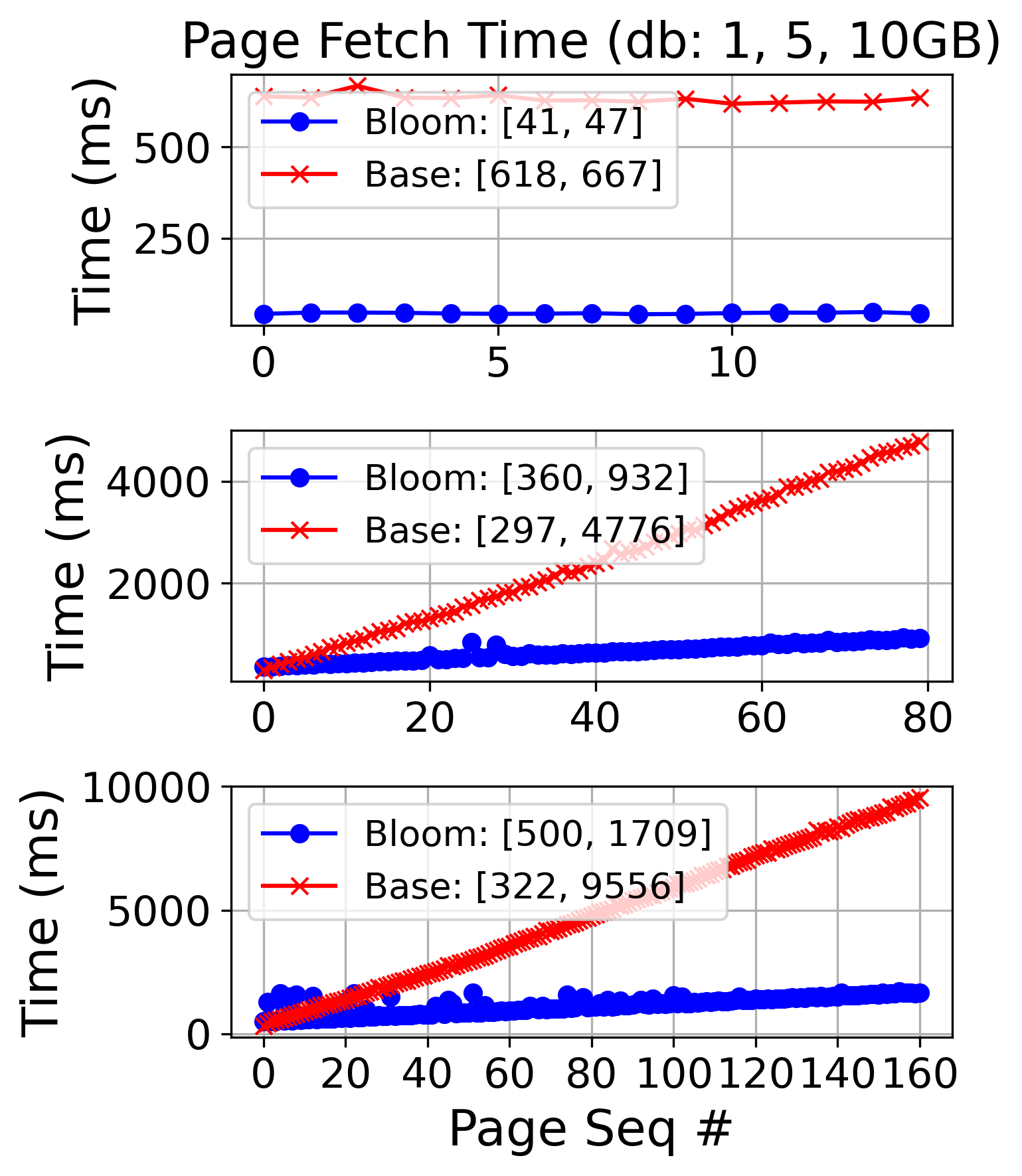}
    \caption{\mdseries\itshape Varying database size; $50$-row pages.}
    \label{fig:bloom-vary-db}
\end{subfigure}
\caption{\mdseries Bloom filtering vs.\ baseline: time to fetch each page.
    $[\cdot,\cdot]$ in legend shows min/max page fetch times.}
\label{fig:bloom-exp}
\end{figure}

Overall, the results show that Bloom filtering is very effective for Q2.
When the database size is relatively small (\Cref{fig:bloom-vary-pg}),
the baseline consistently takes around $600$ms to reproduce a page,
because its execution cost is dominated by the computation of the entire result and sorting them first;
in contrast, Bloom filtering takes $28$ to $74$ms, with larger pages requiring more time.
Under larger database sizes  (\Cref{fig:bloom-vary-db}),
baseline switches to a different plan whose cost grows linearly with the starting position of the page fetch,
eventually matching the cost of executing the entire query when fetching pages near the tail;
Bloom filtering performance remains scalable and depends much less on the fetch position,
saving as much as $3.8$s ($5.1\times$ speedup) and $7.8$s ($5.6\times$ speedup) for $5$GB and $10$GB databases, respectively.

\vspace{-2mm}
\subsection{Effectiveness of Sargable Filtering}
\label{sec:experiments:sargable}

This setup uses the joined \& filtered table for benchmark Q7:
\begin{lstlisting}[language=SQL, basicstyle=\footnotesize\ttfamily]
SELECT ROW(s_suppkey, l_orderkey, l_linenumber, o_orderkey, c_custkey, n1.n_nationkey, n2.n_nationkey), *
FROM supplier, lineitem, orders, customer, nation n1, nation n2
WHERE ...; -- same as original query
\end{lstlisting}
We create sargable range filters for all index columns in the input tables based on the precomputed milestones,
but only if the ranges cover no more than 30\% of the domain, as discussed in \Cref{sec:system:pagination}.
We run two sets of experiments similar to those for Bloom filtering,
with results shown in \Cref{fig:pushdown-exp}.
Overall, we observe that the baseline using \OFFSET\ and \LIMIT\ performs progressively worse when it fetches later pages, until its running time plateaus when the plans switch to essentially computing and sorting the entire result;
in contrast, sargable filtering performs well across all pages regardless of their position,
and its advantage over the baseline widens dramatically toward later pages.
With a $1$GB database (\Cref{fig:pushdown-vary-pg}), baseline can take up to $821$ms to fetch a page,
while sargable filtering takes no more than $73$ms.
Fixing the page size at $50$ and using larger databases (\Cref{fig:pushdown-vary-db}), sargable filtering's benefit is even greater.
For the $10$GB database, the baseline takes about $19$s to fetch a page positioned at 1/6-th of the entire result or later,
while sargable filtering takes about $1.2$s ($16\times$ speedup) in the worst case.

\begin{figure}[t]
\begin{subfigure}[t]{0.47\columnwidth}
    \includegraphics[width=1.04\linewidth]{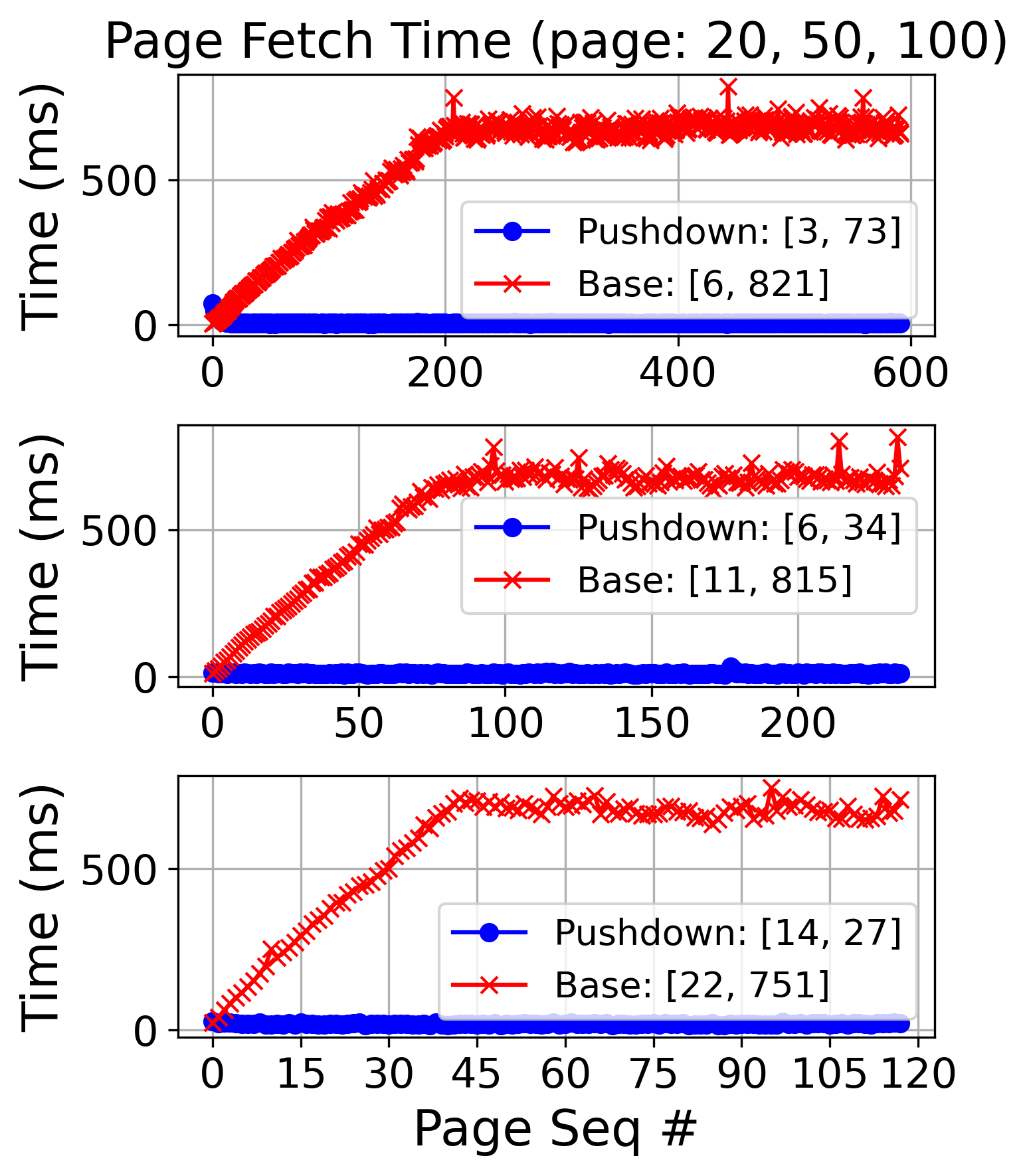}
    \caption{\mdseries\itshape Varying page size; $1$GB database.}
    \label{fig:pushdown-vary-pg}
\end{subfigure}
\hfill
\begin{subfigure}[t]{0.52\columnwidth}
    \includegraphics[width=0.92\linewidth]{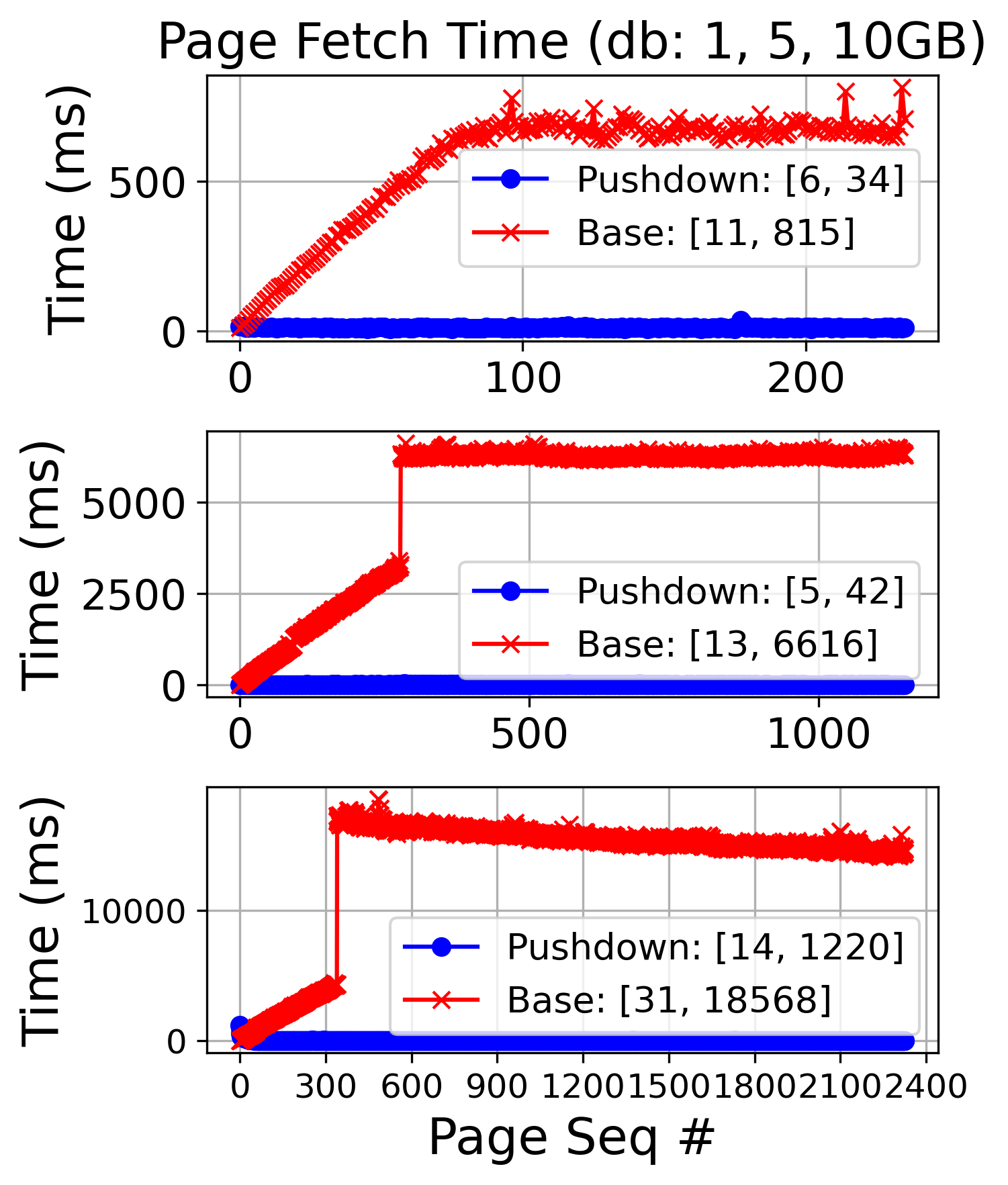}
    \caption{\mdseries\itshape Varying database size; $50$-row pages.}
    \label{fig:pushdown-vary-db}
\end{subfigure}
\caption{\mdseries Sargable filtering vs.\ baseline: time to fetch each page.
    $[\cdot,\cdot]$ in legend shows min/max page fetch times.}
\label{fig:pushdown-exp}
\end{figure}

\vspace{-3mm}
\subsection{General Evaluation}
\label{sec:experiments:all}

For general evaluation, we apply all \oursys\ optimizations to all TPC-H queries,
and compare the execution times of the resulting page-fetch queries with those of the baseline approach.
IID-based and sargable filtering can be applied to all queries, while Bloom filtering is only applicable to Q4, Q16, Q18, and Q20--22.
Because of limited space, we only show in \Cref{tab:page-query-result} the results for the joined \& filtered table of Q18, which is the most expensive query in our experiments;
remaining results are in \Cref{appendix:experiments}.
The results generally echo the findings from the experiments on Bloom/sargable filtering,
and confirm the benefit of combining optimizations.
Specifically in \Cref{tab:page-query-result}, \oursys\ performs no worse than the baseline for pages at the beginning of the results, and significantly better for later pages. 

\begin{table}[t]
    \centering\scriptsize
    \renewcommand{\tabcolsep}{1mm}
    \begin{tabular}{|c|r|r|r|r|r|r|}
         \cline{2-7}
         \multicolumn{1}{c|}{} & \multicolumn{2}{c|}{1GB (ms)} & \multicolumn{2}{c|}{5GB (ms)} & \multicolumn{2}{c|}{10GB (ms)} \\ \hline
         Page & \oursys\ & Baseline & \oursys\ & Baseline & \oursys\ & Baseline  \\ \hline
         head & 2,532.81 & 5,870.729 & 14,184.028 & 46,234.889 & 37,419.067 & 120,576.875 \\
         middle & 2,456.265 & 7,991.925 & 16,215.405 & 55,316.965 & 32,434.602 & 133,646.155  \\
         tail & 2,787.514 & 10,113.122 & 13,983.933 & 64,399.041 & 33,393.882 &  146,715.434  \\ \hline
    \end{tabular}
    \caption{\mdseries \oursys\ vs.\ baseline: time to fetch a page in the joined \& filtered table of Q18;
        $50$-row pages and varying database size.
        Here, head refers to the first query while tail refers to the second to the last (as the last page sometimes is not full).}
    \label{tab:page-query-result}
\end{table}

Finally, we evaluate the overhead of debugging context initialization, which is dominated by the cost of running milestone queries.
We compare their costs with those of computing the entire results of the corresponding original query blocks.
Because of limited space, we only show in \Cref{tab:milestone-vs-table} the results for the three most expensive milestone/original query pairs, with a $1$GB database and $50$-row pages;
the remaining results are in \Cref{appendix:experiments}.
All three pairs come from Q18's stages.
Since milestone queries conceptually summarize the output of the original queries,
we do not expect them to run faster than the latter.
As \Cref{tab:milestone-vs-table} shows, their performance is comparable to the original queries.
Recall from earlier experiments that baseline page-fetch queries often cost as much as the original queries, this observation implies that \oursys\ can benefit from milestone-enabled optimizations for many page-fetch queries by paying only a one-time overhead equivalent to a single baseline page-fetch query. The execution times reported are taken from \sql{EXPLAIN} \sql{ANALYZE},
which only measures the time spent in executing the query but not transmitting its result.
We intentionally chose to exclude the latter because the outputs from the original queries can be too large to transmit.
For example, printing the entire Q18's joined \& filtered table took more than an hour for a \sql{psql} client running on the database console.
In contrast, as \Cref{tab:milestone-vs-table} shows,
milestone queries return much smaller results, which are feasible to transmit to and handle by the client.

The end-to-end latency of \oursys\ consists of time for page-fetch query execution, network transmission, and frontend rendering. While query times naturally vary and network latency fluctuates unpredictably, the frontend rendering is consistently lightweight. On average, rendering a 50-tuple page takes $\sim$600ms, forward tracing $\sim$300ms, and pinning $\sim$500ms, providing an interactive experience.

\begin{table}[t]
    \vspace{-5mm}
    \centering\scriptsize
    \renewcommand{\tabcolsep}{1mm}
    \begin{tabular}{|c|r|r|r|r|}
         \cline{2-5}
         \multicolumn{1}{c}{} & \multicolumn{2}{|c|}{Milestone Query} & \multicolumn{2}{c|}{Original Query} \\ \hline
         Q18 & Exec.\ time (ms) & Output (MB) & Exec.\ time (ms) & Output (MB) \\ \hline
         joined \& filtered & 15,984.578 & 52 & 12,086.886 & 1272 \\
         group & 13,193.212 & 9 & 12,602.33 & 64 \\
         output & 13,808.861 & 9 & 14,452.551 & 75 \\ \hline
    \end{tabular}
    \caption{\mdseries Milestone vs.\ original queries: execution time and output size;
        $50$-row pages and $1$GB database.}
    \label{tab:milestone-vs-table}
\end{table}

\mypar{Limitations and Opportunities}
While \oursys\ aims at making SQL debugging interactive,
for queries that are inherently hard to optimize and expensive to evaluate, debugging them remains challenging.
For example, the latency in debugging context initialization is dominated by milestone queries,
which cannot be expected to outperform the original query (other than transmitting less result data).
Possible directions to increase interactivity include on-demand and/or approximate milestone computation,
which are promising venues for future research.
An orthogonal approach is to find small database instances that can help reveal potential mistakes in the query being debugged.
Sampling is one method, but smarter query-driven methods have also been studied; see \Cref{sec:related} for more discussion.
This approach complements \oursys\ by reducing the size of the database used to debug in the first place.

%% file: sections/5-user-study.tex
\section{User Study}
\label{sec:user-study}

We conducted a user study in an undergraduate database course to evaluate the effectiveness of \oursys\ on two aspects: (1) whether \oursys\ helps users catch more logical bugs, and (2) whether \oursys\ reduces the time to find bugs. 

\mypar{Participants}
We had 237 student participants who had just become familiar with SQL at the time of the study.
Participation was voluntary.
We considered recruiting participants from other sources (e.g., Amazon Mechanical Turk),
but decided against it as it was hard to ensure participants' SQL familiarity was at a similar level.
Since SQL familiarity significantly impacts debugging time, the lack of control can make results difficult to interpret.

\mypar{Preparation and Setup}
The study was conducted for two consecutive weeks during the course's once-a-week 75-minute discussion sessions in the Fall 2022 semester. 
In the first session, students were given a tutorial on \oursys\ and informed about the format of the quiz, which contained two SQL debugging problems P1 and P2.
After the first discussion, \oursys\ is made public for students to try.
In the second session, students completed the quiz synchronously in a proctored environment, where they were asked not to discuss with classmates.
For each problem in the quiz, students were provided a problem statement, an incorrect query, its incorrect output, and the correct output (details in \Cref{appendix:user-study}).
To create treatment and control groups, students received the two problems in a random order.
For the first problem they received, they were free to use any tool of their choice (e.g., Gradescope autograder, \sql{psql} console, \sql{pgAdmin} Web interface) except \oursys.
For the second problem, \oursys\ was made \emph{available but optional} along with other tools. This user study was conducted before any AI tool was released; thus, no AI assistant was available to students.
For each problem, students were to describe the bugs found in a free-response text box.
P1 had two bugs and P2 had three, and the students were not told how many.
While students self-paced, they were recommended to spend 15-20 minutes on each problem.
They were not allowed to move on to the second problem until they submitted an answer to the first problem.

\mypar{Results and Analysis}
We collected the time it took students to solve each problem, whether they chose to use \oursys, and the bugs they found.
Of 237 students, 140 completed both problems and provided legitimate answers. 
Therefore, we based our analysis on these 140 responses. 
For the first problem received, which must be completed without \oursys,
73 students received P1, and 67 received P2.
For the second problem received, where using \oursys\ was an option,
73 students received P2, but 36 of them did not use \oursys;
67 students received P1, but 29 of them did not use \oursys.
In summary, for P1, we have 38 submissions using \oursys, and 102 not using \oursys;
for P2, we have 37 submissions using \oursys, and 103 not.



\begin{figure}[t]
\centering
\includegraphics[scale=0.23]{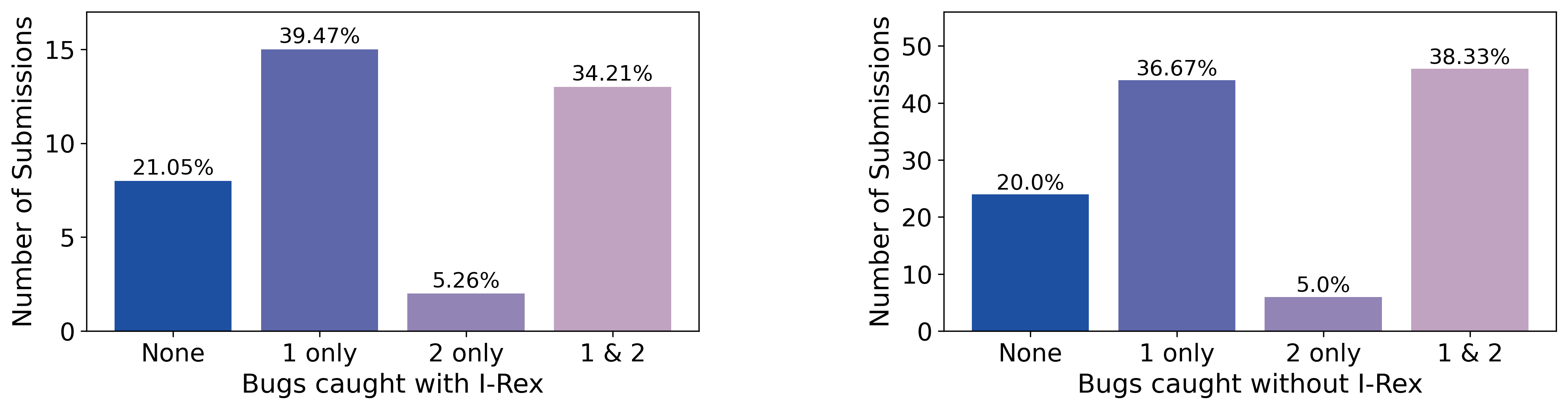}
\caption{\mdseries Bugs caught (out of two) for P1, using \oursys\ vs.\ not.}
\label{fig:error-dist-q1}
\end{figure}

\begin{figure}[t]
\centering
\includegraphics[scale=0.23]{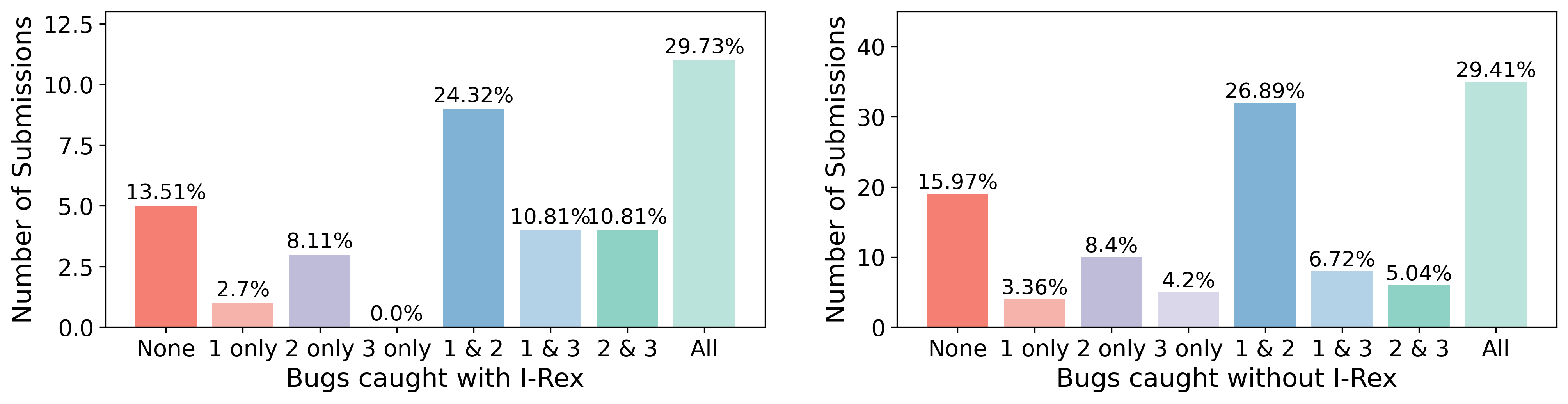}
\caption{\mdseries Bugs caught (out of three) for P2, using \oursys\ vs.\ not.}
\label{fig:error-dist-q2}
\end{figure}

We manually reviewed each response and checked whether the student correctly identified the bugs.
\Cref{fig:error-dist-q1} shows the fraction of the submissions that correctly identified each possible subset of the bugs for P1,
with (left) or without (right) help from \oursys;
\Cref{fig:error-dist-q2} does the same for P2.
While the total submission numbers differ between using/not using \oursys, the distributions (fractions for possible subsets) are similar. The use of \oursys\ had no discernible impact on the mean of bugs identified, as shown in the last column of \Cref{tab:timing}. While \oursys\ did not seem to help students find more bugs, the comparison of debugging times, reveals that the use of \oursys\ reduces the mean debugging time by around 8 minutes for both problems. We further prove \oursys's efficiency through statistical tests. Due to the non-normal distribution of the collected time data (tested by the Shapiro-Wilk Test~\cite{shapiro1965analysis}), we performed the Mann-Whitney $U$ Test~\cite{mann1947test} for P1 and P2 separately. With the null hypothesis being ``\oursys\ makes no difference or slows students down'', the p-values for P1 and P2 are 0.0001 and 0.0007, respectively. Since they are below the standard 0.05 threshold, we reject the null hypothesis and conclude that \oursys\ significantly improves students' efficiency in finding bugs without compromising accuracy.

\mypar{Student Feedback}
Students also provided anonymous feedback on the use of \oursys.
A common theme is that \oursys\ has a noticeable learning curve:
since it exposes step-by-step logical execution and displays numerous visual elements simultaneously,
some students initially found it intimidating.
However, those who pushed past this familiarization phase found it a powerful and effective debugging aid.
There were also students who preferred using traditional clients (like \sql{pgweb}) to manually construct auxiliary queries,
highlighting that debugging preferences remain highly subjective and tied to individual cognitive styles.
Just like programmers have different preferences for debugging in general
(from printing to console to sophisticated IDE-supported execution tracing and state inspection),
we envision \oursys\ not as a one-size-fit-all SQL debugging solution, but as a useful addition to the SQL debugging toolbox.

\mypar{Discussion}
There are several limitations to this user study.
First, its participation was voluntary with no extrinsic incentives, so students might not feel compelled to invest the upfront time needed to overcome the tool's learning curve.
This likely explains why many opted against using \oursys\ for the second problem.
Besides the lack of incentives, the strict time constraints may have also reduced their overall debugging effort.
To address this, future evaluations should develop better incentives for participants to fully acclimate.

Second, limited by time and subject pool, this study only evaluated undergraduates on pedagogical queries.
Future work is needed to assess the tool's effectiveness among industry professionals debugging complex queries over enterprise-scale databases.
Such settings may also help answer the question of whether \oursys\ can help users find new, harder bugs,
beyond making bug finding faster.

Third, this study does not explore the use of large language models (LLMs) as a potential alternative or augmentation to \oursys.
The fast evolution of LLMs is rapidly changing how SQL queries are composed and debugged.
While many text-to-SQL frameworks \cite{ghosh-etal-2025-sqlgenie, gao2024text, talaei2024chess, li2023resdsql, pourreza2025chasesql, pourreza2023din} have shown good accuracy on benchmarks such as BIRD~\cite{li2024can} and Spider 2.0~\cite{lei2024spider}, query debugging remains a major challenge as shown in~\cite{ye2026texttosqlllmsreallydebug}.
To investigate the increasingly common use of LLMs for SQL debugging, we conducted a small case study (details in~\Cref{appendix:user-study})
by feeding the incorrect queries from our user study to an LLM.
Despite the relative simplicity of these educational queries, the LLM still occasionally misidentified bugs or hallucinated incorrect interpretations.
Because LLMs lack formal correctness guarantees and produce inherently non-deterministic outputs,
conducting a direct, reproducible comparison is difficult.
We postulate that LLMs and \oursys\ serve orthogonal purposes:
LLMs act as probabilistic oracles designed to instantly suggest a patch,
whereas \oursys\ is an interactive tool for investigating logical execution to understand query behavior under unforeseen circumstances.
Rather than competing, \oursys\ provides the deterministic ground truth that LLMs currently lack.
Specifically, future LLM-driven agents could directly call \oursys\ backend API server
to examine intermediate query execution states accurately and efficiently,
thereby significantly improving their reliability.

\begin{table}[t]
\centering\scriptsize
\begin{tabular}{|l|r|r|r|r|}
\hline
\textbf{Problem} & \textbf{\# response} & \textbf{Mean time} & \textbf{Mean bugs found} \\ \hline
P1 w/ \oursys & 38 & 12.85 min & 1.13 / 2 \\ 
P1 w/o \oursys & 102 & 20.59 min & 1.18 / 2 \\ \hline
P2 w/ \oursys & 37 & 19.12 min & 1.91 / 3 \\ 
P2 w/o \oursys & 103 & 27.63 min & 1.85 / 3 \\ \hline
\end{tabular}
\caption{\mdseries Mean debugging time and bugs caught per problem\cut{, with and without \oursys}.}
\label{tab:timing}
\end{table}




%% file: sections/6-related-work.tex
\section{Related Work}
\label{sec:related}

Earlier versions of \oursys\ \cite{miao2020rex, hu2022rex} precomputed and stored all debugging information in the client, which did not scale.
The version described in this paper has refined existing features and introduced new ones,
and most importantly, added scalability support, which requires a redesign of the system and new optimization techniques.

\mypar{Finding Logical Errors in Queries}
Work toward finding logical errors in queries can be classified into two categories.
The first category
assumes knowledge of a correct reference query
and uses it to help identify errors in an incorrect query.
XData~\cite{DBLP:journals/vldb/ChandraCKRS015} checks the correctness of a query by running the query on self-generated testing datasets.
Cosette~\cite{DBLP:conf/cidr/ChuWWC17, DBLP:conf/pldi/ChuWCS17, DBLP:journals/pvldb/ChuMRCS18}, SQLSolver~\cite{DBLP:journals/pacmmod/DingWYZXCPL23}, and QED~\cite{DBLP:journals/pvldb/WangPC24} test query equivalence using constraint solvers and theorem provers.
RATest~\cite{DBLP:conf/sigmod/MiaoRY19} aims at constructing small and illustrative database instances to show the differences between two queries.
\cite{DBLP:conf/comad/ChandraBHJ021} develops a grading system that canonicalizes queries with rewrite rules and then decides query similarity using edit distance between the resulting logical plans.
SQLRepair~\cite{DBLP:conf/icse/Presler-Marshall21} and QR-Hint~\cite{DBLP:journals/pacmmod/HuGSRY24} focus on fixing the wrong query by proposing syntactical edits.
\oursys\ differs from this line of work as it assumes no reference query.
Dropping this assumption fundamentally changes the problem and makes existing solutions inapplicable in more general settings,
although they can complement \oursys\ when reference queries are available.

The second category of work helps users examine a query without a given reference query.
Qex~\cite{DBLP:conf/lpar/VeanesTH10} generates input relations and parameter values for unit-testing SQL queries.
SQLLint~\cite{DBLP:conf/gvd/BrassG04, DBLP:conf/qsic/BrassG05, DBLP:journals/jss/BrassG06} looks for patterns in the query indicative of common semantic errors and alerts users to them.
C-instances~\cite{DBLP:conf/sigmod/GiladMRY22} are abstract instances aimed at illustrating different ways a query can be satisfied.
Interactive query builders and visualizers~\cite{rapidsql, pgadmin, msaccess, dbforge, abouzied2012dataplay, haas1989extensible, cerullo2007system, jaakkola2003visual, leventidis2020queryvis, miedema2021sqlvis} use diagrams to gain intuitive understanding of query logic. 
Frameworks for data exploration~\cite{abouzied2012dataplay, le2019explique, dimitriadou2014explore, akbarnejad2010sql} use signals such as query history and user feedback to suggest queries, which may serve as a debugging approach.
A line of work known as algorithmic debugging~\cite{caballero2012algorithmic, caballero2012declarative} guides users through a series of questions on whether intermediate query steps produce intended results.
Also useful to debugging is explaining what queries do in natural language~\cite{koutrika2010explaining, shu2021logic, iyer2016summarizing, xu2018sql, gehrmann2018end, brown2020language}.
None of the above supports debugging by tracing execution, as \oursys\ and most GPL debuggers do.

Two systems in this second category, DESQL~\cite{DBLP:journals/pacmse/Haroon0G24} and Habitat~\cite{grust2013observing, dietrich2015sql}, are the closest to \oursys\ in approach.
DESQL~\cite{DBLP:journals/pacmse/Haroon0G24} is a debugger for Spark SQL, which decomposes the query into subqueries and helps users examine subqueries' output; it does not support correlated subqueries.
At a conceptual level, DESQL fundamentally differs from our work because
it adopts an operator-based view of execution that is closer to optimized execution plans than to the way SQL queries are written.
Technical approaches also differ significantly:
DESQL relies on Spark as the execution backend instead of traditional database systems,
and assumes that debugging commences only after the instrumented query fully executes.
Habitat~\cite{grust2013observing, dietrich2015sql} allows users to mark SQL subexpressions and inspect intermediate results side by side, connecting related result rows.
It also lets users filter these results to focus on a subset.
However, Habitat differs from \oursys\ in important ways.
First, on scalability, \oursys\ avoids computing full results of a query,
while Habitat executes a query in full and shows its results in bulk.
Although Habitat's focus filters can restrict the query, they need to be defined manually and explicitly;
in contrast, \oursys's pinning and automatic pagination are intuitive and demand less user effort,
and \oursys\ has optimizations to improve query efficiency.
Second, \oursys\ and Habitat have different conceptual designs:
\oursys\ defines its canonical query execution to be row-oriented and reproducible (including row ordering),
while Habitat presents a set-oriented view of SQL execution.
Finally, Habitat is not publicly available for comparison.

\mypar{Other Areas of Related Work}
Several areas of research are related to some of the ideas used by \oursys.
First, tracing and pinning in \oursys\ use information about data provenance~\cite{buneman2001and, green2007provenance, huang2008provenance, chapman2009not, bidoit2014query}, which explains why and how a particular row is or is not produced by a query.
Many previous works~\cite{agrawal2006trio, cui2000tracing, karvounarakis2013collaborative, glavic2009perm, arab2018gprom, lee2019pug, psallidas2018smoke, interlandi2015titian, diestelkamper2020tracing, amsterdamer2011putting} have focused on making provenance capture a built-in feature in database systems.
In particular, \cite{DBLP:journals/pvldb/MullerDG18} proposes techniques for capturing provenance by rewriting SQL queries, just as \oursys\ uses query rewriting to help support tracing and other debugging operations.
However, \cite{DBLP:journals/pvldb/MullerDG18} captures full provenance information for all query results.
In contrast, for scalability and interactivity, \oursys\ computes provenance information on demand to avoid the overhead of full computation and materialization. 

Second, efficient pagination of query results has been studied in query optimization literature.
Recent advances have been made on direct access to query answers~\cite{DBLP:journals/tods/CarmeliTGKR23, DBLP:conf/pods/CarmeliZBKS20, DBLP:conf/icdt/EldarCK24, DBLP:journals/pvldb/TziavelisGR21},
but their techniques require specialized indexes and apply only to conjunctive queries,
less general than what \oursys\ supports.
Work on data skipping and predicate pushdown~\cite{DBLP:journals/pvldb/SudhirTLHCM23, DBLP:journals/pvldb/NiuGLLGKLP21, DBLP:journals/pacmmod/YanLH23, DBLP:journals/pvldb/OrrKC19} also requires significant changes to database system internals,
but \oursys\ chooses to leverage existing database systems for efficient pagination.
\oursys's design also motivates unique optimizations, such as client-cached milestones.

%% file: sections/7-conclusion.tex
\section{Conclusion}
\label{sec:conclusion}

In this paper, we have presented \oursys, a novel, interactive, scalable, and easy-to-deploy SQL debugger that, given a database instance, enables users to visualize the logical execution of SQL queries and thus debug query semantics in a GPL style. \oursys\ executes a query in a canonical fashion that faithfully follows how the query is written,
and lets users examine the execution using a rich set of features, including but not limited to those found in GPL debuggers.
\oursys\ supports efficient exploration of any arbitrary point of execution without fully executing the underlying query.
It does so by fully leveraging the database system: it formulates the selective computation of debugging information around the point of interest as SQL queries and employs materialization and query rewrite strategies to ensure efficient execution.
Our performance experiments and user study demonstrate the efficiency and effectiveness of \oursys.


There are multiple directions for future work.
First, we can extend \oursys\ to cover the SQL constructs that we do not already support, as mentioned in \Cref{sec:debug-paradigm}.
First, it is worth investigating milestone designs beyond single ranges and Bloom filters used in \Cref{sec:system:pagination}.
Second, more debugging features can be added, such as setting watchpoints to observe filter conditions in actions.
Finally, we plan to extend \oursys\ so that it can isolate and help debug runtime SQL errors (such as division by zero),
for which database systems usually provide uninformative feedback.

%% file: sections/appendix/appendix.tex
\appendix
\input{sections/appendix/debug-paradigm}
\input{sections/appendix/system}

\input{sections/appendix/experiments}
\input{sections/appendix/user-study}

%% file: sections/appendix/debug-paradigm.tex
\section{Additional Details on Debugging Paradigm}
\label{appendix:debug-paradigm}

\subsection{Set/Bag Operation Blocks}
\label{appendix:debug-paradigm:set}

A SQL set/bag operation block has the form
\begin{lstlisting}[language=SQL, mathescape=true, basicstyle=\footnotesize\ttfamily]
$Q_R$ UNION|INTERSECT|EXCEPT [ALL] $Q_S$
\end{lstlisting}
where $Q_R$ and $Q_S$ are subqueries.
Let $R$ and $S$ denote the tables returned by $Q_R$ and $Q_S$, respectively.
The canonical execution procedure of this block consists of two stages:
a sort stage that sorts both $R$ and $S$ by their row contents
and a final output stage that merges them to produce the result table.
To determine the order for the sort stage,
we pick a particular ordering of all columns, preferring to reuse $R$'s order (dictated by its IID) as much as possible.
Further, to ensure a stable sort in the case of duplicates, we append each input table's IID to the column ordering as needed when sorting the table.
The synthesized sort result IIDs have the same format.
For instance, suppose that the IID for $R(A_1, A_2, A_3)$ is $(A_3, A_1)$, which implies that $R$ is free of duplicates.
In this case, $R$ is already sorted by $(A_3, A_1, A_2, \text{IID}(R)))$,
and we would sort $S(B_1, B_2, B_3)$ accordingly by $(B_3, B_1, B_2, \text{IID}(S))$, assuming that $S$ may contain duplicates.
As another example, suppose that both input tables come from unsorted base tables with duplicates, and these tables' internal row ids serve as the IIDs.
In this case, we would sort both tables by their columns, in order, followed by their respective IIDs.

The final output stage merges the two sorted tables.
The output row IID always contains the value of columns in the same order as the sort stage IID, followed by
a Boolean flag indicating the source input table ($0$ for $R$ and $1$ for $S$)
and a sequence number ($0$-based) indicating its position among duplicates in the source input table.
Depending on the set/bag operation involved, the stage's behavior and the last two components of the IID are defined differently.
In the following, let $t$ denote the content of a row in either $R$ or $S$, and
$R[t]$ and $S[r]$ the lists of duplicate rows in $R$ and $S$ (respectively) with this content.
\begin{itemize}[leftmargin=*]
\item \INTERSECT: We consider only the case when $R[t] \neq \emptyset$, and output only the first row of $R[t]$,
    with the last two components of the IID set to $(0, 0)$.
\item \EXCEPT: We consider only the case when $R[t] \neq \emptyset$.
    Only if $S[t] = \emptyset$, we output the first row of $R[t]$,
    with the last two components of the IID set to $(0, 0)$.
\item \UNION: If $R[t] \neq \emptyset$, we output the first row of $R[t]$,
    with the last two components of the IID set to $(0, 0)$.
    otherwise, we output the first row of $S[t]$,
    with the last two components of the IID set to $(1, 0)$.
\item \INTERSECT\ \sql{ALL}: Let $m = \min\{|R[t]|, |S[t]|\}$.
    We output the first $m$ rows of $R[t]$,
    with the last two components of the IID set to $(0, 0), \dots, (0, m-1)$.
\item \EXCEPT\ \sql{ALL}: Let $m = |R[t]| - |S[t]|$.
    Only if $m > 0$, we output the first $m$ rows of $R[t]$,
    with the last two components of the IID set to $(0, 0), \dots, (0, m-1)$.
\item \UNION\ \sql{ALL}: Let $m = |R[t]| + |S[t]|$.
    We output all rows of $R[t]$ followed by all rows of $S[t]$,
    with the last two components of the IID set to $(0, 0), \dots, (0, |R[t]|-1)$ and then $(1, 0), \dots, (1, |S[t]-1|)$.
\end{itemize}

The input (``combo'' would be somewhat a misnomer here) space for the debugging context is the concatenation of the rows of $R$ followed by those of $S$.
Hence, from the two input tables, only one row can be highlighted as currently active,
and no input pinning is allowed because it is not useful in this context.
Although the definitions above associate each output row with a particular input row,
the \oursys\ interface hides this detail, so users will still observe the SQL semantics.
To be more specific, given an input row,
we define its derivative ``row'' (if one exists) in the final output table as the group of all duplicate rows,
which can only be pinned/unpinned together.
The interface also provides an explanation for the number of duplicates or the absence of output rows for the given input row.

\subsection{Join Expressions}
\label{appendix:debug-paradigm:join}

\oursys\ treats explicit \JOIN\ expressions in \FROM\ essentially as subqueries.
Hence, each such expression gives rise to a separate debugging context with two input tables (base or derived).
Inner joins can be handled in the same way as a more general \SELECT\ block (\Cref{sec:debug-paradigm:model}).
We focus on outer joins here.
Let $R$ and $S$ denote the two input tables.
We augment the input combo space by adding a special IID $\bot$ to each input table whose side of the output can be padded with \sql{NULL}s.
Intuitively, $\bot$ stands for ``no row'' from that input table, and behaves in the \oursys\ interface as a special last row of the table.
For example, for $S$ in $R$ \sql{LEFT} \JOIN\ $S$,
we add $\bot$ to indicate ``no row from $S$.''
If a row $r \in R$ joins with no row from $S$, the left outerjoin result should contain a row for $r$ padded with \sql{NULL}s in $S$ columns;
its IID would be $(r, \bot)$.

When ordering by IIDs, $\bot$ should be considered last in value.
Internally, \oursys\ represents $\bot$ in SQL using an IID whose components are all \sql{NULL}s,
assuming input rows do not have \sql{NULL}s for all their IID components.%
\footnote{In the extremely rare case when this assumption is violated, we choose an unused value from each domain instead.}
When generating queries involving comparison with IIDs possibly containing \sql{NULL}s, however,
we need to replace the \sql{ROW(...)} comparisons with special code
because SQL comparisons involving \sql{NULL}s always yield the \sql{UNKNOWN} truth value.
For example, instead of generating \sql{ROW($A$,$B$)>=ROW(1,2)} when $B$ can be \sql{NULL},
we would generate
\begin{lstlisting}[language=SQL, mathescape=true, basicstyle=\footnotesize\ttfamily]
$A$>1 OR ($A$=1 AND ($B$ IS NULL OR $B$>=2))
\end{lstlisting}

Selection of the active input combo and stepping work the same way as in \SELECT\ (\Cref{sec:debug-paradigm:model}),
except when the input combo involves a $\bot$,
\oursys\ show an explanation of why a \sql{NULL}-padded result row is or is not produced,
instead of the filter expression tree.

%% file: sections/appendix/system.tex
\section{Additional Details on Debugging Paradigm}
\label{appendix:system}

\subsection{Proof of \Cref{thm:bloom}}

\begin{proof}
First note that when $\Theta \Leftrightarrow \bot$ (i.e., $\Theta$ is equivalent to logical false), both queries are obviously equivalent as they return empty results on all possible database instances regardless of the evaluation of \sql{BLOOM\_CHECK}. Therefore, we now proceed to prove that both queries are equivalent when at least $Q_1$ returns a non-empty result (i.e., $\Theta$ is not equivalent to $\bot$).

Now assuming $Q_1$ and $Q_2$ are not equivalent, then there must exist input tables $\{R'_i\}$ where $Q_1, Q_2$ return results $\mathcal{V}_1$ and $\mathcal{V}_2$ respectively, and a tuple $e' \in \mathcal{V}_1$ but $e' \notin \mathcal{V}_2$ (it cannot happen vice versa as $\Theta \Leftarrow \Theta \text{ AND } \sql{BLOOM\_CHECK}(F, \langle \mathcal{A} \rangle)$), thus there must exist a subset $\mathcal{A'}$ from $e'$ such that \sql{BLOOM\_CHECK}($F, \langle e'[\mathcal{A'}] \rangle$) is evaluated to false. Observing that $F$ is obtained from $\sql{BLOOM\_CHECK}(\mathcal{V}_1[\mathcal{A'}])$ and $e' \in \mathcal{V}_1$, this implies \sql{BLOOM\_CHECK}($F, \langle e'[\mathcal{A'}] \rangle$) must be true as Bloom filter returns no false negative. Therefore, $e'$ must also be in $\mathcal{V}_2$ and $\{R'_i\}$ does not exist. As a result, $Q_1$ and $Q_2$ must be equivalent. 
\end{proof}

\subsection{Optimizing Tracing and Pinning}
\label{appendix:system:positioning}

With optimized page fetches, users can freely move around a table with a paginated display.
However, a debugging context displays multiple tables,
and \oursys\ must coordinate the pages displayed across tables to show the end-to-end derivation from the current input combo to output,
so that users can forward- or backward-trace (using pinning).
We now discuss how to optimize these operations.

\subsubsection{Forward Tracing}

For forward tracing, we already have access to the input combo as well as their IIDs and row contents.
Conceptually, to forward-trace through a particular stage,
it suffices to determine the IID (say $t$) of the derivative row produced by this stage, if one exists.
Then, given the target IID $t$, \oursys\ searches the milestone table of the stage's result table for the page whose IID range contains $t$,
and fetches this page.
If the fetched page indeed contains $t$, we have successfully forward-traced through the stage;
otherwise, we know that the input combo yields no result rows.

Determining the target IID in the first place is usually straightforward, thanks to the logical nature of our IIDs.
For example, given the input combo $\langle s_2, f_1 \rangle$ in \Cref{fig:debug-scene},
the target IID for the join \& filter stage is simply the concatenation of the IIDs of $s_2$ and $f_1$.
In cases where we cannot easily determine the target IID, \oursys\ can compute it using a query.
For example, to forward-trace through the group stage, we need to compute the \GROUPBY\ expression value,
the leading component of the result row IID.
In this running example, the \GROUPBY\ expression is simply \sql{s.bar},
so we could have read its value \sql{Edge} directly from row $s_2$.
Otherwise, in general cases where the \GROUPBY\ expression is complex (e.g., involving subqueries),
\oursys\ would generate a query to compute its value, e.g.:
\smallskip
\begin{lstlisting}[language=SQL, basicstyle=\scriptsize\ttfamily]
SELECT s.bar -- arbitrary GROUP BY expression
FROM Serves s, Frequents f -- same as the original query block
WHERE (s.bar, s.beer) = ('Edge', 'Amstel') AND (f.bar, f.beer) = ('Ben', 'Edge'); -- use IID values to specify input combo
\end{lstlisting}
The cost of such queries is negligible because of the highly specific \WHERE\ condition.

\subsubsection{Pinning for Backward Tracing and Watchpointing}

Without loss of generality, assume that a single derivative row is pinned.%
\footnote{Since \oursys\ enforces that there is only one derivative row per stage,
it suffices to consider the derivative row in the earliest stage.}
Our goal is to determine the first (lexicographically) input combo in the pinned subspace.
If only one input combo contributes to the pinned derivative row, we can simply infer the former from the pinned row's IID
(the same applies when backward-tracing from a stage to its preceding stage).
For example, in \Cref{fig:debug-scene}, if a user pins the first member row of $g_1$ in the group table,
the IID of this row, $(\sql{Edge}, (s_2, f_1))$, will reveal the input combo $\langle s_2, f_1 \rangle$.

If multiple input combos contribute to one pinned derivative row, the situation is more complicated.
Such cases arise when backward-tracing from a pinned row in a post-grouping stage,
e.g., the final stage of \Cref{fig:debug-scene}.
Here, we first infer the group, identified by its \GROUPBY\ expression value (say $g$), from the IID of the pinned row.
There are two cases.
First, if there are no additional pins on the input rows, we simply need to determine the first input combo contributing to group $g$.
To this end, \oursys\ searches the milestone table of the group table for the last page whose \sql{min\_iid} is less than $(g, -\infty)$,
and fetches that page.
If the fetched page contains any member row of $g$, the IID of the first such row will reveal the desired input combo.
Otherwise, it can be shown that the next page's \sql{min\_iid} must give the desired input combo.
Hence, recovering the input combo takes only one page fetch in the worst case.

\begin{example}\label{ex:single-pin-output}
Considering \Cref{fig:debug-scene} again. Assuming a page size of 3 (rows) for all tables and $o_1$ is pinned, \oursys\ can quickly determine that the first page in the group table (which contains all tuples from $g_0$ and one tuple from $g_1$) is the last page whose \sql{min\_iid} is smaller than $\langle \sql{Edge}, -\infty \rangle$. It thus generates the following query to fetch the first page of the group table:

\begin{lstlisting}[language=SQL, basicstyle=\scriptsize\ttfamily]
SELECT (s.bar) -- group IID
       ((s.bar, s.beer), (f.drinker, f.bar)) -- input combo IID
       s.price * f.times -- sum_input
FROM Serves s, Frequents f
WHERE (...) -- original WHERE
AND ... -- sargable filtering
-- group IID and input combo IID filtering
AND ((s.bar), ((s.bar, s.beer), (f.drinker, f.bar))) >= (('Apex'), (('Apex', 'Corona'), ('Amy', 'Apex')))
AND ((s.bar), ((s.bar, s.beer), (f.drinker, f.bar))) < (('Edge'), (('Edge', 'Amstel'), ('Dan', 'Edge')))
ORDER BY 1, 2; -- first order by group, then by input combo
\end{lstlisting}

After obtaining the following result, it can be easily identified that the last tuple in the page carries the first input combo IIDs that contribute to $o_1$:

\centerline{\small
\begin{tabular}[b]{|c|c|c|}\hline
{\tt group IID} & {\tt input combo IID} & {\tt sum\_input} \\ \hline
 (Apex) & ((Apex, Corona), (Amy, Apex))) & 1 \\
 (Apex) & ((Apex, Dixie), (Amy, Apex))) & 2 \\
 (Edge) & ((Edge, Amstel), (Ben, Edge))) & 16 \\ \hline
\end{tabular}
}
\end{example}

Second, when there are additional pins on the input rows,
\oursys\ instead generates a query to determine the first input combo in the subspace further constrained by the additional pins.
This query inherits the \FROM\ and \WHERE\ clauses from the original query,
but further includes in \WHERE\ conditions to restrict input rows by the pins and ensure that \GROUPBY\ expression evaluates to $g$;
it then computes the minimum input combo in \SELECT.

\begin{example}
Continuing from \Cref{ex:single-pin-output}. When $s_2$ is further pinned in addition to $o_1$, to compute the first input combo in the subspace of the group table, \oursys\ generates the following query:

\begin{lstlisting}[language=SQL, basicstyle=\scriptsize\ttfamily]
SELECT (s.bar) -- group IID
       ((s.bar, s.beer), (f.drinker, f.bar)) -- input combo IID
FROM Serves s, Frequents f
WHERE (...) -- original WHERE
AND (s.bar, s.beer) = ('Edge', 'Amstel')
ORDER BY 1, 2 -- first order by group, then by input combo
LIMIT 1; -- only need first input combo
\end{lstlisting}

The above query locates $\langle s_2, f_1 \rangle$ as the first input combo, and \oursys\ can now decide the page to fetch in the group table by a simple binary search using the group IID and input combo IID.
\end{example}

Recall that to support watchpointing, \oursys\ automatically colors rows in each table relevant to the pinned subspace.
The case where the pinned subspace contains only one input combo is straightforward:
relevant rows are simply those in the input combo and those that are its derivatives.
Otherwise, \oursys\ extends the page-fetch query to return an additional column that indicates whether each row is relevant to the pinned subspace.
The \SELECT\ expression for this column is a Boolean expression
testing whether the input rows conform to the pins and the \GROUPBY\ expression evaluates to the same value as the pinned group.

\subsection{System Implementation Details}
\label{appendix:system:details}

\paragraph{Architecture}

\oursys\ has a client-server setup, with a Web frontend as the client and a middleware server between the client and the database server.
In a common use case in education, many student users may run debugging sessions against one large, shared, read-only database.
To ensure scalability and easy deployment,
we adhere to a strictly stateless design, where neither the middleware nor the database server stores any session-specific information:
\begin{itemize}[leftmargin=*]
\item When a user starts a debugging session,
\oursys\ middleware analyzes the query and decomposes it into a graph of query blocks in an internal representation.
This representation specifies how each block is further broken down into stages,
and contains metadata such as how to resolve external column references for correlated subqueries.
The client caches this representation.
\item When the user enters a debugging context (i.e., ``calling'' a query block),
the client sends the cached representation of this block back to the middleware along with any parameter settings for the call.
The middleware's \emph{debugging context initializer} computes the milestone tables for all input tables of the debugging context and all result tables of its constituent stages, by querying the database (discussed below).
The debugging context initializer also generates SQL query templates needed to support various debugging operations discussed earlier in this section.
The client caches these milestone tables and SQL templates.
\item When user carries out various operations in the debugging context,
the client consults the cached milestone tables to instantiate required SQL queries according to the cached templates.
It also caches the results of page-fetch queries,
so repeated accesses to the same page, which can happen frequently during debugging, do not need recomputation.
Cached pages can be evicted to make space for new ones. 
\end{itemize}
Overall, note that debugging sessions leave no state on the middleware or the database.
The middleware handles all complex SQL query analysis and rewrites, and the client simply needs to fill in the values of certain query parameters.
The only cached data whose size depends on the database are the milestone tables.
However, since there is only one milestone row per display page, and all columns are compact summaries of constant size,
the milestone tables are orders of magnitude smaller than the corresponding tables.
To make them even more scalable, milestones could be made hierarchical and refined on demand,
but as shown later in \Cref{sec:experiments}, the current single-level design already works well.

\paragraph{Additional Details on Query Rewriting}

As discussed earlier, all information needed for debugging can be computed by rewriting the original SQL query in some fashion.
\oursys\ performs most query rewriting in its debugging context initializer.
We have already shown how to generate queries for fetching pages (\Cref{sec:system:pagination}) and supporting other debugging operations (\Cref{appendix:system:positioning}).
As for computing milestones for each table in the debugging context, a single SQL query suffices, as illustrated by the following, which computes \Cref{tab:milestone} (including the Bloom filters) for our running example:
\begin{lstlisting}[language=SQL, mathescape=true, basicstyle=\scriptsize\ttfamily]
WITH tmp(seq, iid, s_bar, s_beer, f_drinker, f_bar) AS (
  SELECT ROW_NUMBER() OVER (ORDER BY s.bar,s.beer,f.drinker,f.bar) - 1, -- result row sequence number when sorted by IID
    ((s.bar, s.beer), (f.drinker, f.bar)), -- IID
    s.bar, s.beer, f.drinker, f.bar -- relevant for sargable & Bloom filtering
  FROM Serves s, Frequents f WHERE (...) -- same as original query
)
SELECT MIN_IID(iid), -- per-page minimum IID
  -- per-page range bounds for sargable fitering:
  ARRAY[MIN(s_bar), MAX(s_bar)], ARRAY[MIN(s_beer), MAX(s_beer)],
  ARRAY[MIN(f_drinker), MAX(f_drinker)], ARRAY[MIN(f_bar), MAX(f_bar)],
  -- per-page Bloom filter:
  BLOOM_GEN(p_partkey)
FROM tmp GROUP BY seq / $\mathit{page\_size}$ ORDER BY seq / $\mathit{page\_size}$;
-- page_size is the number of rows per page
\end{lstlisting}

After generating the SQL queries as discussed above and earlier, \oursys\ further applies several rewrite optimizations.
First, if a scalar subquery contains no external column references,
the debugging context initializer will simply precompute its result and replace the uses of this subqueries by its result.
Second, for any sargable condition injected into a query, we consider pushing it down further into a subquery.
Such cases often arise when, for example, we identify a range bound on some indexed column,
and this column (or another column equated to it by \WHERE) is referenced by a subquery as an external column;
here, we inject the range bound into the subquery as well.
Third, if the query's \FROM\ contains a subquery or table defined by \WITH,
we check whether the injected conditions imply an equality or range condition the IID of the input table.
If yes, we consult the milestones of the input table to construct sargable filters to further inject into the subquery defining the input table.
We call this last optimization \emph{recursive pushdown}.
Finally, pushdown through outer joins, which is especially tricky, is discussed in \Cref{appendix:system}.

\subsection{Pushdown through Outerjoins}

Consider a page-fetch query that \oursys\ generates for a full outer join debugging context,
which has the form:
\begin{lstlisting}[language=SQL, mathescape=true, basicstyle=\footnotesize\ttfamily]
SELECT ROW($R$.$K$, $S$.$K$) AS _iid
FROM $R$ FULL JOIN $S$ ON $R$.$A$ = $S.B$
WHERE $\Theta$($R$, $S$);
\end{lstlisting}
Here, $\Theta$ is condition based on the IID range of the requested page, which constrains $R$ and $S$.
Efficient execution of this query requires pushing down filters inferred from $\Theta$ to $R$ and $S$.
Unfortunately, such pushdowns are not always safe through outer joins.
For example, suppose that the requested page is the last one,
and its IID range implies the filter on $R$ to be $R.K$ \sql{IS} \sql{NULL} \sql{OR} \sql{$R.K$>=100}.
Further suppose that some $s \in S$ joins with a single row $r \in R$ with \sql{$R.K$<100}.
The above page-fetch query should not return any output for $s$,
since $\langle r, s \rangle$ does not belong to the requested page and hence fails the final \WHERE.
However, if we push the filter on $R$ to below the outer join,
the outer join will return a row containing $s$ and $R$ columns padded with \sql{NULL}s,
which passes the final \WHERE.

To avoid the above issue and still enable pushdown,
\oursys\ first computes the inner join between $R$ and $S$ with pushdown.
Then, only if the number of returned rows falls below the desired number on the requested page,
which can be determined from the milestone table and should happen rarely,
we issue additional queries to find rows in the outer join result but not in the inner.

%% file: sections/appendix/experiments.tex
\section{Additional Experimental Results}
\label{appendix:experiments}

We present the remaining experiment results over TPC-H benchmark. 

For each table in the TPC-H schema, we have the following indexes (all indexes are btree indexes in PostgreSQL):

\begin{itemize}[leftmargin=*]
    \item customer
          \begin{itemize}
              \item primary index: c\_custkey
              \item secondary indexes: None
          \end{itemize}
    \item lineitem
          \begin{itemize}
              \item primary index: (l\_orderkey, l\_linenumber)
              \item secondary indexes: l\_partkey, l\_suppkey, l\_shipdate
          \end{itemize}
    \item nation
          \begin{itemize}
              \item primary index: n\_nationkey
              \item secondary indexes: n\_name, n\_regionkey
          \end{itemize}
    \item orders
          \begin{itemize}
              \item primary index: o\_orderkey
              \item secondary indexes: o\_custkey, o\_orderdate
          \end{itemize}
    \item part
          \begin{itemize}
              \item primary index: p\_partkey
              \item secondary indexes: None
          \end{itemize}
    \item partsupp
          \begin{itemize}
              \item primary index: (ps\_partkey, ps\_suppkey)
              \item secondary indexes: ps\_suppkey
          \end{itemize}
    \item region
          \begin{itemize}
              \item primary index: r\_regionkey
              \item secondary indexes: None
          \end{itemize}
    \item supplier
          \begin{itemize}
              \item primary index: s\_suppkey
              \item secondary indexes: s\_name, s\_phone
          \end{itemize}
\end{itemize}

We ran experiments for three different page size: 50, 100 and 200 for all tables (stages) in all TPC-H queries. For each table, we prepared three queries: milestone query, page query and table query and thus collecting the following data over all testing instances (i.e., 1GB, 5GB and 10GB):
\begin{itemize}
    \item The execution time and output size of the milestone query and the table query.
    \item The execution time of the page query and baseline query (by rewritting the table query with \OFFSET\ and \LIMIT) for retrieving the first page (``head''), middle page (``mid'') and the second last (``tail'') page. 
\end{itemize}
Since each query potentially contains multiple query blocks and subsequently multiple tables, we present the statistics for the largest table (measured in MB) to compute for each query. For page size 50, 100 and 200, the experiment results are shown in \Cref{tab:general-result-pg-50}, \Cref{tab:general-result-pg-100} and \Cref{tab:general-result-pg-200} respectively. 

In summary, we make the following conclusions:
\begin{itemize}[leftmargin=*]
    \item The milestone queries can run slower than the table queries, but with acceptable delays since they are run at the beginning of the debugging session without affecting debugging operations later. In some cases, the milestone queries run faster than the table queries. On the other hand, the output sizes of the milestone queries are almost always smaller than those of the table queries by a rough factor of the page size.
    \item The optimization for page query almost always outperforms the baseline, and differences between the execution time grow as the database size grows, especially for ``mid'' and ``tail'' pages. The optimizations are sensitive to page size but insensitive to the database size. There are only few cases where the optimizations ``over-hint'' the PosgreSQL optimizer and cause the execution time to be roughly the same as the baseline.
\end{itemize}

\begin{table*}[ht]
\vspace{-4mm}
    \centering
    \scriptsize
    \begin{subtable}[b]{0.55\textwidth}\centering
        \begin{tabular}{|c|c|c|c|c|c|c|c|}
             \cline{3-8}
             \multicolumn{2}{c|}{} & \multicolumn{2}{c|}{1GB} & \multicolumn{2}{c|}{5GB} & \multicolumn{2}{c|}{10GB} \\ \hline
             Query & Page & Opt. & Base & Opt. & Base & Opt. & Base  \\ \hline
             \multirow{3}{*}{ Q1 } & head & {\bf 0.053 } & 0.067 & {\bf 0.058 } & 0.106 & 0.071 & {\bf 0.059 } \\
                                    & mid & {\bf 0.038 } & 276.977 & {\bf 0.123 } & 10719.953 & {\bf 0.168 } & 21826.43 \\
                                    & tail & {\bf 0.034 } & 553.887 & {\bf 0.033 } & 21439.8 & {\bf 0.052 } & 43652.801 \\ \hline
             \multirow{3}{*}{ Q2 } & head & {\bf 5.758 } & 251.032 & {\bf 6.356 } & 1707.231 & {\bf 6.954 } & 3291.403 \\
                                    & mid & {\bf 7.236 } & 253.723 & {\bf 5.953 } & 1716.116 & {\bf 8.916 } & 3571.981 \\
                                    & tail & {\bf 6.596 } & 256.414 & {\bf 6.832 } & 1712.497 & {\bf 6.503 } & 3852.56 \\ \hline
             \multirow{3}{*}{ Q3 } & head & 12.57 & {\bf 4.225 } & {\bf 81.789 } & 407.539 & {\bf 59.029 } & 590.042 \\
                                    & mid & {\bf 6.165 } & 318.842 & {\bf 145.043 } & 4937.49 & {\bf 58.346 } & 22685.161 \\
                                    & tail & {\bf 7.179 } & 633.459 & {\bf 143.534 } & 9467.44 & {\bf 34.891 } & 44780.279 \\ \hline
             \multirow{3}{*}{ Q4 } & head & 0.445 & {\bf 0.364 } & 0.405 & {\bf 0.362 } & 3.39 & {\bf 2.122 } \\
                                    & mid & {\bf 0.445 } & 108.699 & {\bf 1.318 } & 554.128 & {\bf 5.365 } & 8912.874 \\
                                    & tail & {\bf 0.383 } & 217.034 & {\bf 2.006 } & 1107.893 & {\bf 3.216 } & 17823.626 \\ \hline
             \multirow{3}{*}{ Q5 } & head & 8.909 & {\bf 8.657 } & {\bf 9.142 } & 13.343 & 420.739 & {\bf 314.969 } \\
                                    & mid & {\bf 7.653 } & 1287.24 & {\bf 23.182 } & 10036.965 & {\bf 193.466 } & 28344.422 \\
                                    & tail & {\bf 5.98 } & 2508.011 & {\bf 7.776 } & 20060.586 & {\bf 193.096 } & 56373.876 \\ \hline
             \multirow{3}{*}{ Q6 } & head & 0.189 & {\bf 0.156 } & 0.175 & {\bf 0.154 } & 0.405 & {\bf 0.156 } \\
                                    & mid & {\bf 0.213 } & 216.07 & {\bf 0.202 } & 1168.038 & {\bf 3.059 } & 16448.599 \\
                                    & tail & {\bf 0.147 } & 431.984 & {\bf 0.197 } & 2335.922 & {\bf 1.594 } & 32897.042 \\ \hline
             \multirow{3}{*}{ Q7 } & head & 20.642 & {\bf 14.371 } & 68.84 & {\bf 14.918 } & 985.063 & {\bf 71.079 } \\
                                    & mid & {\bf 12.155 } & 395.358 & {\bf 41.39 } & 3065.17 & {\bf 851.135 } & 15909.36 \\
                                    & tail & {\bf 14.622 } & 776.345 & {\bf 41.266 } & 6105.881 & {\bf 907.042 } & 27296.999 \\ \hline
             \multirow{3}{*}{ Q8 } & head & {\bf 4.148 } & 4.407 & {\bf 4.7 } & 35.101 & 317.395 & {\bf 45.033 } \\
                                    & mid & {\bf 4.32 } & 1656.216 & {\bf 28.169 } & 6557.276 & {\bf 238.396 } & 20430.873 \\
                                    & tail & {\bf 3.984 } & 2970.844 & {\bf 4.587 } & 13079.451 & {\bf 245.715 } & 40816.712 \\ \hline
             \multirow{3}{*}{ Q9 } & head & {\bf 0.853 } & 242.203 & {\bf 143.601 } & 12191.21 & {\bf 51.197 } & 85054.771 \\
                                    & mid & {\bf 1.245 } & 1008.268 & {\bf 199.13 } & 14899.526 & {\bf 38.146 } & 69401.355 \\
                                    & tail & {\bf 0.765 } & 1774.333 & {\bf 131.757 } & 17607.842 & {\bf 35.69 } & 54535.253 \\ \hline
             \multirow{3}{*}{ Q10 } & head & 0.939 & {\bf 0.748 } & 62.441 & {\bf 1.509 } & 19.73 & {\bf 0.912 } \\
                                    & mid & {\bf 0.835 } & 800.388 & {\bf 28.382 } & 15660.853 & {\bf 76.849 } & 20184.374 \\
                                    & tail & {\bf 0.772 } & 1600.028 & {\bf 83.471 } & 18930.806 & {\bf 8.615 } & 40367.836 \\ \hline
             \multirow{3}{*}{ Q11 } & head & 23.582 & {\bf 1.898 } & 134.269 & {\bf 6.779 } & 262.009 & {\bf 7.036 } \\
                                    & mid & {\bf 22.03 } & 32.884 & {\bf 169.394 } & 221.772 & {\bf 244.886 } & 393.363 \\
                                    & tail & {\bf 17.811 } & 63.87 & {\bf 130.579 } & 436.765 & {\bf 233.001 } & 779.69 \\ \hline
             \multirow{3}{*}{ Q12 } & head & 0.858 & {\bf 0.615 } & 9.142 & {\bf 0.569 } & {\bf 1.741 } & 2.327 \\
                                    & mid & {\bf 0.808 } & 285.367 & {\bf 0.558 } & 7589.805 & {\bf 0.711 } & 17582.209 \\
                                    & tail & {\bf 0.866 } & 570.118 & {\bf 0.619 } & 12918.587 & {\bf 0.549 } & 35162.092 \\ \hline
             \multirow{3}{*}{ Q13 } & head & 0.244 & {\bf 0.197 } & 0.184 & {\bf 0.172 } & 0.193 & {\bf 0.186 } \\
                                    & mid & {\bf 0.195 } & 470.197 & {\bf 0.191 } & 2562.524 & {\bf 0.182 } & 5584.982 \\
                                    & tail & {\bf 0.172 } & 940.197 & {\bf 0.163 } & 5124.876 & {\bf 0.186 } & 11169.778 \\ \hline
             \multirow{3}{*}{ Q14 } & head & 4.875 & {\bf 1.2 } & 33.992 & {\bf 3.162 } & 7.135 & {\bf 1.846 } \\
                                    & mid & {\bf 4.455 } & 116.905 & {\bf 32.801 } & 717.536 & {\bf 7.643 } & 1573.678 \\
                                    & tail & {\bf 4.392 } & 232.611 & {\bf 30.97 } & 1431.91 & {\bf 5.695 } & 3145.51 \\ \hline
             \multirow{3}{*}{ Q15 } & head & 0.238 & {\bf 0.221 } & 0.247 & {\bf 0.234 } & 1.315 & {\bf 0.276 } \\
                                    & mid & {\bf 0.241 } & 156.641 & {\bf 0.182 } & 922.131 & {\bf 12.682 } & 1779.888 \\
                                    & tail & {\bf 0.163 } & 313.061 & {\bf 6.841 } & 1844.028 & {\bf 2.764 } & 3559.5 \\ \hline
             \multirow{3}{*}{ Q16 } & head & 1.682 & {\bf 1.39 } & 6.656 & {\bf 6.074 } & 12.276 & {\bf 11.936 } \\
                                    & mid & {\bf 1.438 } & 140.749 & {\bf 6.191 } & 844.365 & {\bf 13.452 } & 1700.343 \\
                                    & tail & {\bf 1.486 } & 280.107 & {\bf 5.777 } & 1682.656 & {\bf 12.02 } & 3388.75 \\ \hline
             \multirow{3}{*}{ Q17 } & head & 2.538 & {\bf 2.236 } & 15.809 & {\bf 2.056 } & {\bf 20.733 } & 125.812 \\
                                    & mid & {\bf 2.286 } & 1536.001 & {\bf 5.141 } & 9945.394 & {\bf 5.923 } & 25862.823 \\
                                    & tail & {\bf 2.409 } & 3069.766 & {\bf 4.89 } & 19888.732 & {\bf 4.945 } & 51599.835 \\ \hline
             \multirow{3}{*}{ Q18 } & head & {\bf 2532.81 } & 5870.729 & {\bf 14184.028 } & 46234.889 & {\bf 37419.067 } & 120576.875 \\
                                    & mid & {\bf 2456.265 } & 7991.925 & {\bf 16215.405 } & 55316.965 & {\bf 32434.602 } & 133646.155 \\
                                    & tail & {\bf 2787.514 } & 10113.122 & {\bf 13983.933 } & 64399.041 & {\bf 33393.882 } & 146715.434 \\ \hline
             \multirow{3}{*}{ Q19 } & head & {\bf 25.461 } & 34.29 & 178.18 & {\bf 51.645 } & 1011.015 & {\bf 60.365 } \\
                                    & mid & {\bf 25.324 } & 76.065 & {\bf 129.633 } & 339.639 & {\bf 138.908 } & 589.525 \\
                                    & tail & {\bf 24.978 } & 117.84 & {\bf 117.056 } & 627.633 & {\bf 59.341 } & 1118.685 \\ \hline
             \multirow{3}{*}{ Q20 } & head & 0.094 & {\bf 0.053 } & 0.086 & {\bf 0.047 } & {\bf 0.184 } & 0.659 \\
                                    & mid & {\bf 0.077 } & 49.901 & {\bf 1.133 } & 661.308 & {\bf 2.122 } & 1564.677 \\
                                    & tail & {\bf 0.086 } & 99.748 & {\bf 30.041 } & 1322.57 & {\bf 0.842 } & 3128.694 \\ \hline
             \multirow{3}{*}{ Q21 } & head & {\bf 16.769 } & 823.461 & {\bf 189.848 } & 15873.412 & {\bf 1039.691 } & 46780.254 \\
                                    & mid & {\bf 11.981 } & 830.494 & {\bf 232.896 } & 15499.488 & {\bf 909.955 } & 47334.357 \\
                                    & tail & {\bf 13.678 } & 837.527 & {\bf 284.158 } & 15125.565 & {\bf 1099.474 } & 46780.254 \\ \hline
             \multirow{3}{*}{ Q22 } & head & {\bf 0.608 } & 31.927 & {\bf 0.838 } & 159.24 & {\bf 1.707 } & 318.704 \\
                                    & mid & {\bf 0.673 } & 52.952 & {\bf 0.82 } & 261.309 & {\bf 2.324 } & 529.83 \\
                                    & tail & {\bf 0.512 } & 73.61 & {\bf 0.869 } & 363.378 & {\bf 1.838 } & 740.955 \\ \hline
        \end{tabular}
        \caption{Optimization vs. Baseline for Page Query Execution Time}
        \label{tab:optimization-vs-baseline-pg-50}
    \end{subtable}
    \quad
    \begin{subtable}[b]{0.42\textwidth}\centering
        \begin{tabular}{|c|c|c|c|c|c|}
             \cline{3-6}
             \multicolumn{2}{c}{} & \multicolumn{2}{|c|}{Time (ms)} & \multicolumn{2}{c|}{Output Size (MB)} \\ \hline
             Query & DB size & Milestone & Table & Milestone & Table \\ \hline
             \multirow{3}{*}{ Q1 } & 1 & 8331.914 & {\bf 1388.336 } & {\bf 31.159 } & 761.653 \\
                                    & 5 & 44032.912 & {\bf 23151.685 } & {\bf 155.853 } & 3809.747 \\
                                    & 10 & 87464.164 & {\bf 46420.358 } & {\bf 311.667 } & 7618.516 \\ \hline
             \multirow{3}{*}{ Q2 } & 1 & {\bf 215.464 } & 329.321 & {\bf 0.006 } & 0.129 \\
                                    & 5 & {\bf 1356.54 } & 2386.544 & {\bf 0.031 } & 0.647 \\
                                    & 10 & {\bf 2698.862 } & 4956.342 & {\bf 0.062 } & 1.31 \\ \hline
             \multirow{3}{*}{ Q3 } & 1 & {\bf 1044.567 } & 1645.413 & {\bf 0.216 } & 9.358 \\
                                    & 5 & {\bf 5933.313 } & 22270.548 & {\bf 1.056 } & 45.735 \\
                                    & 10 & {\bf 13848.889 } & 43462.104 & {\bf 2.145 } & 92.932 \\ \hline
             \multirow{3}{*}{ Q4 } & 1 & 377.848 & {\bf 347.824 } & {\bf 0.382 } & 6.367 \\
                                    & 5 & 1640.64 & {\bf 1507.237 } & {\bf 1.899 } & 31.649 \\
                                    & 10 & {\bf 11874.966 } & 25455.856 & {\bf 3.79 } & 63.165 \\ \hline
             \multirow{3}{*}{ Q5 } & 1 & {\bf 2142.641 } & 2747.291 & {\bf 0.936 } & 30.979 \\
                                    & 5 & {\bf 11543.048 } & 20464.069 & {\bf 4.675 } & 154.836 \\
                                    & 10 & {\bf 27796.453 } & 58846.691 & {\bf 9.325 } & 308.884 \\ \hline
             \multirow{3}{*}{ Q6 } & 1 & 763.76 & {\bf 595.106 } & {\bf 1.795 } & 43.884 \\
                                    & 5 & 3899.765 & {\bf 2675.542 } & {\bf 8.951 } & 218.804 \\
                                    & 10 & {\bf 9705.647 } & 34310.126 & {\bf 17.923 } & 438.125 \\ \hline
             \multirow{3}{*}{ Q7 } & 1 & {\bf 810.777 } & 987.756 & {\bf 0.152 } & 0.947 \\
                                    & 5 & {\bf 6335.476 } & 6637.603 & {\bf 0.737 } & 4.602 \\
                                    & 10 & {\bf 14646.955 } & 35110.268 & {\bf 1.489 } & 9.306 \\ \hline
             \multirow{3}{*}{ Q8 } & 1 & 4960.836 & {\bf 3186.478 } & {\bf 11.67 } & 65.639 \\
                                    & 5 & 23772.766 & {\bf 17879.496 } & {\bf 58.054 } & 326.55 \\
                                    & 10 & 50267.721 & {\bf 41981.095 } & {\bf 116.284 } & 654.091 \\ \hline
             \multirow{3}{*}{ Q9 } & 1 & {\bf 2227.58 } & 2517.948 & {\bf 4.267 } & 23.998 \\
                                    & 5 & {\bf 17141.161 } & 19767.02 & {\bf 20.989 } & 118.061 \\
                                    & 10 & {\bf 31764.758 } & 53459.127 & {\bf 42.012 } & 236.312 \\ \hline
             \multirow{3}{*}{ Q10 } & 1 & {\bf 1583.518 } & 1695.667 & {\bf 2.897 } & 77.861 \\
                                    & 5 & 9485.609 & {\bf 9060.029 } & {\bf 14.538 } & 390.698 \\
                                    & 10 & {\bf 20332.249 } & 49173.573 & {\bf 29.047 } & 780.631 \\ \hline
             \multirow{3}{*}{ Q11 } & 1 & 126.067 & {\bf 82.778 } & {\bf 0.243 } & 5.928 \\
                                    & 5 & {\bf 538.645 } & 590.908 & {\bf 1.152 } & 28.146 \\
                                    & 10 & {\bf 985.077 } & 18047.688 & {\bf 2.299 } & 56.193 \\ \hline
             \multirow{3}{*}{ Q12 } & 1 & {\bf 1108.119 } & 1502.238 & {\bf 0.788 } & 10.507 \\
                                    & 5 & {\bf 5109.859 } & 16516.219 & {\bf 3.947 } & 52.627 \\
                                    & 10 & {\bf 10723.043 } & 43430.112 & {\bf 7.879 } & 105.057 \\ \hline
             \multirow{3}{*}{ Q13 } & 1 & 4453.644 & {\bf 1820.081 } & {\bf 7.192 } & 74.4 \\
                                    & 5 & 22620.462 & {\bf 10634.663 } & {\bf 35.96 } & 372.001 \\
                                    & 10 & 46633.065 & {\bf 22689.628 } & {\bf 71.92 } & 744.001 \\ \hline
             \multirow{3}{*}{ Q14 } & 1 & {\bf 267.11 } & 344.042 & {\bf 0.539 } & 18.571 \\
                                    & 5 & {\bf 1537.443 } & 5938.491 & {\bf 2.694 } & 92.793 \\
                                    & 10 & {\bf 3053.81 } & 43108.788 & {\bf 5.392 } & 185.705 \\ \hline
             \multirow{3}{*}{ Q15 } & 1 & 697.134 & {\bf 440.953 } & {\bf 2.165 } & 52.914 \\
                                    & 5 & {\bf 3615.734 } & 6797.217 & {\bf 10.845 } & 265.099 \\
                                    & 10 & {\bf 7422.926 } & 40451.245 & {\bf 21.696 } & 530.346 \\ \hline
             \multirow{3}{*}{ Q16 } & 1 & 382.706 & {\bf 300.081 } & {\bf 0.741 } & 16.455 \\
                                    & 5 & {\bf 1855.771 } & 4245.039 & {\bf 3.678 } & 81.74 \\
                                    & 10 & {\bf 3654.478 } & 4303.77 & {\bf 7.378 } & 163.946 \\ \hline
             \multirow{3}{*}{ Q17 } & 1 & 6962.627 & {\bf 6846.472 } & {\bf 1.44 } & 11.2 \\
                                    & 5 & {\bf 18483.761 } & 22136.792 & {\bf 7.2 } & 56.0 \\
                                    & 10 & {\bf 38778.384 } & 59549.219 & {\bf 14.4 } & 112.0 \\ \hline
             \multirow{3}{*}{ Q18 } & 1 & 15984.578 & {\bf 12086.886 } & {\bf 52.215 } & 1272.737 \\
                                    & 5 & 95566.664 & {\bf 87403.308 } & {\bf 260.86 } & 6358.46 \\
                                    & 10 & {\bf 174787.04 } & 181889.789 & {\bf 521.399 } & 12709.096 \\ \hline
             \multirow{3}{*}{ Q19 } & 1 & {\bf 145.722 } & 244.134 & {\bf 0.044 } & 1.501 \\
                                    & 5 & {\bf 770.808 } & 15055.31 & {\bf 0.217 } & 7.47 \\
                                    & 10 & {\bf 1270.167 } & 68809.099 & {\bf 0.421 } & 14.506 \\ \hline
             \multirow{3}{*}{ Q20 } & 1 & 2114.795 & {\bf 1503.764 } & {\bf 6.577 } & 1056.214 \\
                                    & 5 & {\bf 10328.037 } & 25180.76 & {\bf 32.885 } & 5279.964 \\
                                    & 10 & {\bf 22092.932 } & 60763.015 & {\bf 65.691 } & 10557.545 \\ \hline
             \multirow{3}{*}{ Q21 } & 1 & {\bf 869.86 } & 1039.671 & {\bf 0.051 } & 1.332 \\
                                    & 5 & {\bf 10193.128 } & 37567.462 & {\bf 0.25 } & 6.538 \\
                                    & 10 & {\bf 25156.809 } & 102244.18 & {\bf 0.511 } & 13.417 \\ \hline
             \multirow{3}{*}{ Q22 } & 1 & 120.044 & {\bf 77.932 } & {\bf 0.046 } & 0.405 \\
                                    & 5 & {\bf 472.207 } & 497.835 & {\bf 0.229 } & 2.035 \\
                                    & 10 & {\bf 897.368 } & 2021.357 & {\bf 0.458 } & 4.069 \\ \hline
        \end{tabular}
        \caption{Milestone Query vs. Table Query}
        \label{tab:milestone-vs-table-pg-50}
    \end{subtable}
    
    \caption{Experiment results for Page Size 50. Shorter time and smaller sizes are boldfaced.}
    \label{tab:general-result-pg-50}
\end{table*}

\clearpage
\begin{table*}[ht]
\vspace{-4mm}
    \centering
    \scriptsize
    \begin{subtable}[b]{0.55\textwidth}\centering
        \begin{tabular}{|c|c|c|c|c|c|c|c|}
             \cline{3-8}
             \multicolumn{2}{c|}{} & \multicolumn{2}{c|}{1GB} & \multicolumn{2}{c|}{5GB} & \multicolumn{2}{c|}{10GB} \\ \hline
             Query & Page & Opt. & Base & Opt. & Base & Opt. & Base  \\ \hline
             \multirow{3}{*}{ Q1 } & head & {\bf 0.071 } & 0.073 & 0.079 & {\bf 0.069 } & 0.084 & {\bf 0.069 } \\
                                    & mid & {\bf 0.055 } & 296.325 & {\bf 0.069 } & 9693.89 & {\bf 2.16 } & 22897.01 \\
                                    & tail & {\bf 0.05 } & 592.576 & {\bf 0.055 } & 19387.711 & {\bf 0.094 } & 45793.951 \\ \hline
             \multirow{3}{*}{ Q2 } & head & {\bf 13.061 } & 295.438 & {\bf 15.513 } & 1734.351 & {\bf 14.555 } & 3311.46 \\
                                    & mid & {\bf 12.952 } & 275.061 & {\bf 11.556 } & 1751.559 & {\bf 14.029 } & 3591.94 \\
                                    & tail & {\bf 11.611 } & 254.684 & {\bf 17.868 } & 1738.402 & {\bf 13.057 } & 3872.421 \\ \hline
             \multirow{3}{*}{ Q3 } & head & {\bf 2.198 } & 7.366 & {\bf 62.352 } & 63.668 & {\bf 40.817 } & 485.093 \\
                                    & mid & {\bf 1.76 } & 323.121 & {\bf 76.662 } & 4827.628 & {\bf 47.888 } & 20987.232 \\
                                    & tail & {\bf 1.762 } & 638.876 & {\bf 24.29 } & 9591.589 & {\bf 35.286 } & 41449.976 \\ \hline
             \multirow{3}{*}{ Q4 } & head & 0.643 & {\bf 0.631 } & 0.656 & {\bf 0.652 } & 31.234 & {\bf 3.163 } \\
                                    & mid & {\bf 0.683 } & 110.39 & {\bf 0.753 } & 573.421 & {\bf 8.977 } & 9141.04 \\
                                    & tail & {\bf 0.665 } & 220.149 & {\bf 12.071 } & 1146.19 & {\bf 4.969 } & 18278.917 \\ \hline
             \multirow{3}{*}{ Q5 } & head & 14.197 & {\bf 13.368 } & 45.254 & {\bf 17.582 } & 683.39 & {\bf 473.463 } \\
                                    & mid & {\bf 13.586 } & 1244.566 & {\bf 43.844 } & 10020.093 & {\bf 343.844 } & 29541.882 \\
                                    & tail & {\bf 10.241 } & 2475.765 & {\bf 14.55 } & 19971.901 & {\bf 300.573 } & 57266.427 \\ \hline
             \multirow{3}{*}{ Q6 } & head & 0.385 & {\bf 0.32 } & 0.399 & {\bf 0.333 } & 1.518 & {\bf 0.319 } \\
                                    & mid & {\bf 0.363 } & 218.425 & {\bf 0.349 } & 1196.178 & {\bf 2.845 } & 16829.839 \\
                                    & tail & {\bf 0.294 } & 435.856 & {\bf 0.384 } & 2392.023 & {\bf 4.812 } & 33659.359 \\ \hline
             \multirow{3}{*}{ Q7 } & head & {\bf 26.172 } & 26.356 & 96.368 & {\bf 34.798 } & 1772.658 & {\bf 46.352 } \\
                                    & mid & {\bf 20.338 } & 387.671 & {\bf 36.935 } & 3093.804 & {\bf 1073.085 } & 16393.14 \\
                                    & tail & {\bf 22.026 } & 748.985 & {\bf 32.845 } & 6078.97 & {\bf 948.516 } & 32739.927 \\ \hline
             \multirow{3}{*}{ Q8 } & head & 9.067 & {\bf 8.502 } & {\bf 8.987 } & 48.048 & 509.672 & {\bf 34.627 } \\
                                    & mid & {\bf 6.755 } & 1695.254 & {\bf 29.111 } & 6617.096 & {\bf 422.185 } & 20162.913 \\
                                    & tail & {\bf 7.429 } & 2995.663 & {\bf 10.025 } & 13186.143 & {\bf 364.929 } & 40291.199 \\ \hline
             \multirow{3}{*}{ Q9 } & head & {\bf 1.477 } & 231.532 & {\bf 67.733 } & 8417.397 & {\bf 94.242 } & 74380.749 \\
                                    & mid & {\bf 1.375 } & 962.979 & {\bf 77.736 } & 12923.744 & {\bf 61.954 } & 64993.676 \\
                                    & tail & {\bf 1.867 } & 1694.426 & {\bf 53.516 } & 17430.09 & {\bf 66.841 } & 58048.861 \\ \hline
             \multirow{3}{*}{ Q10 } & head & 1.46 & {\bf 1.245 } & 21.716 & {\bf 2.257 } & 26.641 & {\bf 2.529 } \\
                                    & mid & {\bf 1.201 } & 797.21 & {\bf 15.774 } & 8107.626 & {\bf 22.302 } & 21555.761 \\
                                    & tail & {\bf 1.364 } & 1593.175 & {\bf 21.038 } & 13905.496 & {\bf 13.519 } & 43108.993 \\ \hline
             \multirow{3}{*}{ Q11 } & head & 36.541 & {\bf 3.3 } & 248.452 & {\bf 8.555 } & 448.46 & {\bf 12.203 } \\
                                    & mid & 38.196 & {\bf 36.848 } & {\bf 219.147 } & 222.082 & 513.017 & {\bf 390.707 } \\
                                    & tail & {\bf 38.557 } & 70.396 & {\bf 202.447 } & 435.609 & {\bf 397.68 } & 769.212 \\ \hline
             \multirow{3}{*}{ Q12 } & head & 1.105 & {\bf 1.08 } & 11.862 & {\bf 1.222 } & 2.863 & {\bf 1.216 } \\
                                    & mid & {\bf 0.976 } & 296.293 & {\bf 1.204 } & 7060.195 & {\bf 1.321 } & 17357.688 \\
                                    & tail & {\bf 1.126 } & 591.506 & {\bf 1.017 } & 12045.238 & {\bf 0.893 } & 34714.16 \\ \hline
             \multirow{3}{*}{ Q13 } & head & 0.347 & {\bf 0.346 } & {\bf 0.305 } & 0.309 & 0.341 & {\bf 0.331 } \\
                                    & mid & {\bf 0.331 } & 458.786 & {\bf 0.291 } & 2603.571 & {\bf 0.285 } & 5598.169 \\
                                    & tail & {\bf 0.331 } & 917.225 & {\bf 0.26 } & 5206.832 & {\bf 0.308 } & 11196.008 \\ \hline
             \multirow{3}{*}{ Q14 } & head & 5.512 & {\bf 2.71 } & 35.8 & {\bf 5.729 } & 78.23 & {\bf 14.473 } \\
                                    & mid & {\bf 4.76 } & 118.153 & {\bf 35.366 } & 718.17 & {\bf 76.052 } & 1600.618 \\
                                    & tail & {\bf 4.772 } & 233.595 & {\bf 35.272 } & 1430.611 & {\bf 79.114 } & 3186.763 \\ \hline
             \multirow{3}{*}{ Q15 } & head & 0.353 & {\bf 0.297 } & {\bf 0.328 } & 0.329 & 4.253 & {\bf 0.35 } \\
                                    & mid & {\bf 0.371 } & 157.853 & {\bf 2.625 } & 930.38 & {\bf 12.812 } & 1785.893 \\
                                    & tail & {\bf 0.392 } & 315.409 & {\bf 0.707 } & 1860.43 & {\bf 4.657 } & 3571.436 \\ \hline
             \multirow{3}{*}{ Q16 } & head & 2.137 & {\bf 1.756 } & 6.243 & {\bf 5.975 } & {\bf 12.268 } & 14.385 \\
                                    & mid & {\bf 1.931 } & 143.953 & {\bf 6.294 } & 854.68 & {\bf 14.295 } & 1753.734 \\
                                    & tail & {\bf 1.773 } & 286.15 & {\bf 6.041 } & 1703.385 & {\bf 13.032 } & 3493.083 \\ \hline
             \multirow{3}{*}{ Q17 } & head & 4.729 & {\bf 3.941 } & 9.248 & {\bf 4.222 } & 32.706 & {\bf 3.904 } \\
                                    & mid & {\bf 4.541 } & 1527.395 & {\bf 11.298 } & 9669.438 & {\bf 13.855 } & 27819.374 \\
                                    & tail & {\bf 4.505 } & 3050.85 & {\bf 12.892 } & 19334.653 & {\bf 20.307 } & 55634.844 \\ \hline
             \multirow{3}{*}{ Q18 } & head & {\bf 2527.82 } & 5984.869 & {\bf 15935.661 } & 45765.329 & {\bf 34808.126 } & 126117.683 \\
                                    & mid & {\bf 2618.646 } & 8093.631 & {\bf 16087.319 } & 55018.855 & {\bf 34021.36 } & 139702.667 \\
                                    & tail & {\bf 2663.236 } & 10202.394 & {\bf 17309.887 } & 64272.382 & {\bf 37965.734 } & 153287.651 \\ \hline
             \multirow{3}{*}{ Q19 } & head & {\bf 43.729 } & 68.104 & 197.71 & {\bf 105.373 } & {\bf 261.355 } & 1188.735 \\
                                    & mid & {\bf 42.12 } & 94.035 & {\bf 144.187 } & 370.888 & {\bf 375.334 } & 1146.398 \\
                                    & tail & {\bf 44.463 } & 119.965 & {\bf 151.102 } & 630.924 & {\bf 209.468 } & 1136.618 \\ \hline
             \multirow{3}{*}{ Q20 } & head & 0.158 & {\bf 0.095 } & 0.173 & {\bf 0.084 } & {\bf 0.291 } & 1.433 \\
                                    & mid & {\bf 0.178 } & 50.034 & {\bf 0.116 } & 352.756 & {\bf 2.826 } & 1696.115 \\
                                    & tail & {\bf 0.148 } & 99.973 & {\bf 1.263 } & 705.428 & {\bf 1.025 } & 3390.798 \\ \hline
             \multirow{3}{*}{ Q21 } & head & {\bf 31.819 } & 874.487 & {\bf 2616.526 } & 13686.317 & {\bf 1375.21 } & 52290.601 \\
                                    & mid & {\bf 21.822 } & 852.312 & {\bf 1529.724 } & 13691.07 & {\bf 2528.902 } & 51884.652 \\
                                    & tail & {\bf 22.996 } & 835.458 & {\bf 668.835 } & 13686.317 & {\bf 2307.124 } & 51478.704 \\ \hline
             \multirow{3}{*}{ Q22 } & head & {\bf 0.994 } & 33.647 & {\bf 0.967 } & 159.614 & {\bf 4.113 } & 322.177 \\
                                    & mid & {\bf 1.021 } & 52.94 & {\bf 1.392 } & 267.151 & {\bf 2.617 } & 534.946 \\
                                    & tail & {\bf 1.265 } & 72.233 & {\bf 1.525 } & 374.689 & {\bf 2.231 } & 747.714 \\ \hline
        \end{tabular}
        \caption{Optimization vs. Baseline for Page Query Execution Time}
        \label{tab:optimization-vs-baseline-pg-100}
    \end{subtable}
    \quad
    \begin{subtable}[b]{0.42\textwidth}\centering
        \begin{tabular}{|c|c|c|c|c|c|}
             \cline{3-6}
             \multicolumn{2}{c}{} & \multicolumn{2}{|c|}{Time (ms)} & \multicolumn{2}{c|}{Output Size (MB)} \\ \hline
             Query & DB size & Milestone & Table & Milestone & Table \\ \hline
             \multirow{3}{*}{ Q1 } & 1 & 8308.278 & {\bf 1350.574 } & {\bf 15.579 } & 761.653 \\
                                    & 5 & 43646.179 & {\bf 21979.09 } & {\bf 77.927 } & 3809.747 \\
                                    & 10 & 87349.051 & {\bf 47552.632 } & {\bf 155.834 } & 7618.516 \\ \hline
             \multirow{3}{*}{ Q2 } & 1 & {\bf 204.105 } & 303.443 & {\bf 0.003 } & 0.129 \\
                                    & 5 & {\bf 1419.539 } & 2199.755 & {\bf 0.015 } & 0.647 \\
                                    & 10 & {\bf 2718.322 } & 5082.711 & {\bf 0.031 } & 1.31 \\ \hline
             \multirow{3}{*}{ Q3 } & 1 & {\bf 1049.375 } & 1435.624 & {\bf 0.108 } & 9.358 \\
                                    & 5 & {\bf 6184.207 } & 15819.558 & {\bf 0.528 } & 45.735 \\
                                    & 10 & {\bf 14013.955 } & 42987.942 & {\bf 1.072 } & 92.932 \\ \hline
             \multirow{3}{*}{ Q4 } & 1 & {\bf 322.772 } & 368.223 & {\bf 0.191 } & 6.367 \\
                                    & 5 & 1596.769 & {\bf 1593.545 } & {\bf 0.95 } & 31.649 \\
                                    & 10 & {\bf 11752.173 } & 25611.759 & {\bf 1.895 } & 63.165 \\ \hline
             \multirow{3}{*}{ Q5 } & 1 & {\bf 2148.69 } & 2753.883 & {\bf 0.468 } & 30.979 \\
                                    & 5 & {\bf 11514.031 } & 20612.307 & {\bf 2.337 } & 154.836 \\
                                    & 10 & {\bf 27627.926 } & 58496.928 & {\bf 4.662 } & 308.884 \\ \hline
             \multirow{3}{*}{ Q6 } & 1 & 789.488 & {\bf 570.995 } & {\bf 0.898 } & 43.884 \\
                                    & 5 & 3902.241 & {\bf 2784.104 } & {\bf 4.476 } & 218.804 \\
                                    & 10 & {\bf 9796.535 } & 33640.745 & {\bf 8.962 } & 438.125 \\ \hline
             \multirow{3}{*}{ Q7 } & 1 & {\bf 826.424 } & 968.716 & {\bf 0.076 } & 0.947 \\
                                    & 5 & {\bf 6563.466 } & 6701.097 & {\bf 0.369 } & 4.602 \\
                                    & 10 & {\bf 14712.534 } & 31113.968 & {\bf 0.745 } & 9.306 \\ \hline
             \multirow{3}{*}{ Q8 } & 1 & 5090.232 & {\bf 3210.232 } & {\bf 5.835 } & 65.639 \\
                                    & 5 & 23697.023 & {\bf 17788.411 } & {\bf 29.027 } & 326.55 \\
                                    & 10 & 50386.193 & {\bf 43219.432 } & {\bf 58.142 } & 654.091 \\ \hline
             \multirow{3}{*}{ Q9 } & 1 & {\bf 2236.322 } & 2488.499 & {\bf 2.134 } & 23.998 \\
                                    & 5 & {\bf 17082.537 } & 19866.147 & {\bf 10.495 } & 118.061 \\
                                    & 10 & {\bf 32766.722 } & 53930.664 & {\bf 21.006 } & 236.312 \\ \hline
             \multirow{3}{*}{ Q10 } & 1 & {\bf 1674.93 } & 1711.13 & {\bf 1.449 } & 77.861 \\
                                    & 5 & 9465.303 & {\bf 8851.235 } & {\bf 7.269 } & 390.698 \\
                                    & 10 & {\bf 20179.256 } & 52479.808 & {\bf 14.524 } & 780.631 \\ \hline
             \multirow{3}{*}{ Q11 } & 1 & 123.001 & {\bf 84.227 } & {\bf 0.121 } & 5.928 \\
                                    & 5 & 548.9 & {\bf 523.548 } & {\bf 0.576 } & 28.146 \\
                                    & 10 & {\bf 994.286 } & 18564.246 & {\bf 1.149 } & 56.193 \\ \hline
             \multirow{3}{*}{ Q12 } & 1 & {\bf 1128.692 } & 1247.274 & {\bf 0.394 } & 10.507 \\
                                    & 5 & {\bf 5149.191 } & 15537.438 & {\bf 1.974 } & 52.627 \\
                                    & 10 & {\bf 10833.248 } & 45744.079 & {\bf 3.94 } & 105.057 \\ \hline
             \multirow{3}{*}{ Q13 } & 1 & 4595.15 & {\bf 2009.385 } & {\bf 3.596 } & 74.4 \\
                                    & 5 & 22778.525 & {\bf 10871.238 } & {\bf 17.98 } & 372.001 \\
                                    & 10 & 46638.458 & {\bf 22716.047 } & {\bf 35.96 } & 744.001 \\ \hline
             \multirow{3}{*}{ Q14 } & 1 & {\bf 292.902 } & 349.172 & {\bf 0.27 } & 18.571 \\
                                    & 5 & {\bf 1530.04 } & 5593.783 & {\bf 1.347 } & 92.793 \\
                                    & 10 & {\bf 3086.853 } & 48864.272 & {\bf 2.696 } & 185.705 \\ \hline
             \multirow{3}{*}{ Q15 } & 1 & 738.978 & {\bf 442.769 } & {\bf 1.083 } & 52.914 \\
                                    & 5 & {\bf 3648.849 } & 7119.87 & {\bf 5.423 } & 265.099 \\
                                    & 10 & {\bf 7489.047 } & 45756.265 & {\bf 10.848 } & 530.346 \\ \hline
             \multirow{3}{*}{ Q16 } & 1 & 383.589 & {\bf 301.347 } & {\bf 0.37 } & 16.455 \\
                                    & 5 & {\bf 1787.028 } & 2912.128 & {\bf 1.839 } & 81.74 \\
                                    & 10 & {\bf 3568.993 } & 4667.728 & {\bf 3.689 } & 163.946 \\ \hline
             \multirow{3}{*}{ Q17 } & 1 & 6788.831 & {\bf 6659.41 } & {\bf 0.72 } & 11.2 \\
                                    & 5 & 17970.305 & {\bf 17523.58 } & {\bf 3.6 } & 56.0 \\
                                    & 10 & {\bf 39062.359 } & 69424.705 & {\bf 7.2 } & 112.0 \\ \hline
             \multirow{3}{*}{ Q18 } & 1 & 15496.731 & {\bf 12175.682 } & {\bf 26.108 } & 1272.737 \\
                                    & 5 & {\bf 80673.913 } & 85605.048 & {\bf 130.43 } & 6358.46 \\
                                    & 10 & {\bf 172242.749 } & 188263.583 & {\bf 260.7 } & 12709.096 \\ \hline
             \multirow{3}{*}{ Q19 } & 1 & {\bf 128.724 } & 253.978 & {\bf 0.022 } & 1.501 \\
                                    & 5 & {\bf 770.826 } & 4591.412 & {\bf 0.109 } & 7.47 \\
                                    & 10 & {\bf 1244.035 } & 73082.065 & {\bf 0.211 } & 14.506 \\ \hline
             \multirow{3}{*}{ Q20 } & 1 & 2058.655 & {\bf 1527.349 } & {\bf 3.289 } & 1056.214 \\
                                    & 5 & {\bf 10373.605 } & 26763.565 & {\bf 16.443 } & 5279.964 \\
                                    & 10 & {\bf 22292.044 } & 67862.73 & {\bf 32.845 } & 10557.545 \\ \hline
             \multirow{3}{*}{ Q21 } & 1 & {\bf 879.998 } & 1045.606 & {\bf 0.026 } & 1.332 \\
                                    & 5 & {\bf 10048.959 } & 31560.391 & {\bf 0.125 } & 6.538 \\
                                    & 10 & {\bf 25372.791 } & 115913.909 & {\bf 0.256 } & 13.417 \\ \hline
             \multirow{3}{*}{ Q22 } & 1 & 118.981 & {\bf 78.522 } & {\bf 0.023 } & 0.405 \\
                                    & 5 & {\bf 538.636 } & 550.1 & {\bf 0.114 } & 2.035 \\
                                    & 10 & {\bf 904.977 } & 1966.22 & {\bf 0.229 } & 4.069 \\ \hline
        \end{tabular}
        \caption{Milestone Query vs. Table Query}
        \label{tab:milestone-vs-table-pg-100}
    \end{subtable}
    \caption{Experiment results for Page Size 100. Shorter time and smaller sizes are boldfaced.}
    \label{tab:general-result-pg-100}
\end{table*}

\clearpage
\begin{table*}[ht]
\vspace{-4mm}
    \centering
    \scriptsize
    \begin{subtable}[b]{0.55\textwidth}\centering
        \begin{tabular}{|c|c|c|c|c|c|c|c|}
             \cline{3-8}
             \multicolumn{2}{c|}{} & \multicolumn{2}{c|}{1GB} & \multicolumn{2}{c|}{5GB} & \multicolumn{2}{c|}{10GB} \\ \hline
             Query & Page & Opt. & Base & Opt. & Base & Opt. & Base  \\ \hline
             \multirow{3}{*}{ Q1 } & head & {\bf 0.096 } & 0.106 & 0.128 & {\bf 0.113 } & 0.211 & {\bf 0.106 } \\
                                    & mid & {\bf 0.081 } & 277.353 & {\bf 0.149 } & 9062.271 & {\bf 1.296 } & 21098.182 \\
                                    & tail & {\bf 0.08 } & 554.601 & {\bf 0.131 } & 18124.429 & {\bf 0.132 } & 42196.259 \\ \hline
             \multirow{3}{*}{ Q2 } & head & {\bf 14.495 } & 258.075 & {\bf 27.029 } & 1721.245 & {\bf 26.565 } & 3314.087 \\
                                    & mid & {\bf 15.403 } & 257.304 & {\bf 33.257 } & 1723.475 & {\bf 27.33 } & 3599.783 \\
                                    & tail & {\bf 14.901 } & 256.534 & {\bf 34.94 } & 1721.245 & {\bf 27.07 } & 3885.479 \\ \hline
             \multirow{3}{*}{ Q3 } & head & {\bf 3.846 } & 12.189 & {\bf 113.863 } & 264.195 & {\bf 105.038 } & 626.635 \\
                                    & mid & {\bf 3.635 } & 311.993 & {\bf 64.68 } & 4951.953 & {\bf 122.484 } & 21306.666 \\
                                    & tail & {\bf 3.79 } & 611.796 & {\bf 61.986 } & 9560.932 & {\bf 99.174 } & 40314.53 \\ \hline
             \multirow{3}{*}{ Q4 } & head & 1.661 & {\bf 1.226 } & {\bf 1.483 } & 1.534 & 31.712 & {\bf 7.704 } \\
                                    & mid & {\bf 1.355 } & 110.832 & {\bf 1.57 } & 567.687 & {\bf 13.946 } & 8836.858 \\
                                    & tail & {\bf 1.319 } & 220.437 & {\bf 2.908 } & 1133.84 & {\bf 10.08 } & 17666.011 \\ \hline
             \multirow{3}{*}{ Q5 } & head & {\bf 23.257 } & 26.016 & {\bf 32.109 } & 40.857 & 1097.932 & {\bf 735.291 } \\
                                    & mid & {\bf 21.175 } & 1265.583 & {\bf 33.95 } & 9974.61 & {\bf 648.512 } & 28437.192 \\
                                    & tail & {\bf 18.287 } & 2505.151 & {\bf 26.898 } & 19908.362 & {\bf 543.151 } & 55813.192 \\ \hline
             \multirow{3}{*}{ Q6 } & head & {\bf 0.791 } & 1.039 & 0.764 & {\bf 0.648 } & 2.651 & {\bf 0.774 } \\
                                    & mid & {\bf 0.747 } & 220.038 & {\bf 0.719 } & 1200.926 & {\bf 4.505 } & 16894.755 \\
                                    & tail & {\bf 0.677 } & 437.797 & {\bf 0.888 } & 2401.204 & {\bf 2.423 } & 33589.6 \\ \hline
             \multirow{3}{*}{ Q7 } & head & {\bf 44.788 } & 47.835 & 122.091 & {\bf 61.945 } & 2770.886 & {\bf 68.498 } \\
                                    & mid & {\bf 44.945 } & 420.81 & {\bf 71.516 } & 3093.383 & {\bf 1552.064 } & 19057.859 \\
                                    & tail & {\bf 44.876 } & 775.126 & {\bf 59.541 } & 6123.132 & {\bf 1437.85 } & 30266.428 \\ \hline
             \multirow{3}{*}{ Q8 } & head & 17.249 & {\bf 14.642 } & {\bf 25.396 } & 43.739 & 664.931 & {\bf 30.171 } \\
                                    & mid & {\bf 14.123 } & 1646.464 & {\bf 30.191 } & 6700.285 & {\bf 600.04 } & 22275.784 \\
                                    & tail & {\bf 14.213 } & 2969.118 & {\bf 18.549 } & 13356.832 & {\bf 596.333 } & 44521.396 \\ \hline
             \multirow{3}{*}{ Q9 } & head & {\bf 2.317 } & 226.431 & {\bf 98.203 } & 7932.852 & {\bf 94.028 } & 62387.44 \\
                                    & mid & {\bf 2.26 } & 984.02 & {\bf 99.992 } & 12663.04 & {\bf 94.809 } & 58133.264 \\
                                    & tail & {\bf 2.04 } & 1741.609 & {\bf 95.246 } & 17393.229 & {\bf 93.872 } & 55136.418 \\ \hline
             \multirow{3}{*}{ Q10 } & head & 2.702 & {\bf 2.553 } & 42.657 & {\bf 3.849 } & 51.023 & {\bf 3.883 } \\
                                    & mid & {\bf 2.37 } & 790.804 & {\bf 29.605 } & 7651.506 & {\bf 31.641 } & 21214.82 \\
                                    & tail & {\bf 2.596 } & 1579.054 & {\bf 30.024 } & 13908.715 & {\bf 32.623 } & 42425.757 \\ \hline
             \multirow{3}{*}{ Q11 } & head & 70.328 & {\bf 6.789 } & 377.987 & {\bf 12.112 } & 781.963 & {\bf 16.625 } \\
                                    & mid & 70.67 & {\bf 35.681 } & 391.113 & {\bf 224.163 } & 841.303 & {\bf 394.805 } \\
                                    & tail & 74.526 & {\bf 64.572 } & {\bf 369.919 } & 436.215 & {\bf 750.659 } & 772.984 \\ \hline
             \multirow{3}{*}{ Q12 } & head & 3.016 & {\bf 1.921 } & 13.035 & {\bf 2.318 } & 4.646 & {\bf 3.263 } \\
                                    & mid & {\bf 2.906 } & 284.822 & {\bf 2.219 } & 7395.704 & {\bf 3.03 } & 18224.874 \\
                                    & tail & {\bf 1.961 } & 567.722 & {\bf 1.727 } & 12139.442 & {\bf 2.067 } & 36446.485 \\ \hline
             \multirow{3}{*}{ Q13 } & head & {\bf 0.577 } & 0.583 & 0.867 & {\bf 0.495 } & 0.59 & {\bf 0.526 } \\
                                    & mid & {\bf 0.621 } & 462.139 & {\bf 0.765 } & 2654.764 & {\bf 0.553 } & 5586.485 \\
                                    & tail & {\bf 0.541 } & 923.696 & {\bf 0.947 } & 6597.823 & {\bf 0.582 } & 11172.445 \\ \hline
             \multirow{3}{*}{ Q14 } & head & 5.956 & {\bf 3.442 } & 38.885 & {\bf 22.228 } & 99.691 & {\bf 18.986 } \\
                                    & mid & {\bf 5.508 } & 116.254 & {\bf 37.352 } & 741.654 & {\bf 97.37 } & 1605.492 \\
                                    & tail & {\bf 5.345 } & 229.065 & {\bf 35.672 } & 1461.079 & {\bf 97.398 } & 3191.998 \\ \hline
             \multirow{3}{*}{ Q15 } & head & 0.618 & {\bf 0.526 } & 0.594 & {\bf 0.507 } & 10.788 & {\bf 0.67 } \\
                                    & mid & {\bf 0.542 } & 157.453 & {\bf 9.813 } & 946.187 & {\bf 9.923 } & 1816.009 \\
                                    & tail & {\bf 0.62 } & 314.379 & {\bf 2.305 } & 1891.866 & {\bf 12.64 } & 3631.348 \\ \hline
             \multirow{3}{*}{ Q16 } & head & 2.014 & {\bf 1.749 } & {\bf 6.852 } & 9.216 & 12.87 & {\bf 12.089 } \\
                                    & mid & {\bf 2.402 } & 138.499 & {\bf 7.195 } & 868.328 & {\bf 14.227 } & 1774.517 \\
                                    & tail & {\bf 2.364 } & 275.248 & {\bf 6.7 } & 1727.44 & {\bf 13.133 } & 3536.946 \\ \hline
             \multirow{3}{*}{ Q17 } & head & 8.908 & {\bf 7.311 } & 25.304 & {\bf 6.955 } & 48.554 & {\bf 8.495 } \\
                                    & mid & {\bf 8.768 } & 1493.372 & {\bf 21.24 } & 8114.816 & {\bf 29.671 } & 27939.016 \\
                                    & tail & {\bf 8.532 } & 2979.433 & {\bf 20.793 } & 16222.677 & {\bf 29.471 } & 55869.537 \\ \hline
             \multirow{3}{*}{ Q18 } & head & {\bf 2620.295 } & 5958.982 & {\bf 19362.324 } & 55544.906 & {\bf 36723.62 } & 124199.714 \\
                                    & mid & {\bf 2629.855 } & 8101.114 & {\bf 18941.112 } & 63877.35 & {\bf 37481.778 } & 139121.515 \\
                                    & tail & {\bf 2625.798 } & 10243.246 & {\bf 19651.261 } & 72209.794 & {\bf 38882.822 } & 154043.316 \\ \hline
             \multirow{3}{*}{ Q19 } & head & {\bf 53.896 } & 116.645 & {\bf 249.324 } & 937.218 & {\bf 358.445 } & 1142.017 \\
                                    & mid & {\bf 57.568 } & 116.792 & {\bf 241.046 } & 789.953 & {\bf 536.344 } & 1126.027 \\
                                    & tail & {\bf 58.638 } & 116.939 & {\bf 127.755 } & 642.689 & {\bf 241.034 } & 1142.017 \\ \hline
             \multirow{3}{*}{ Q20 } & head & 0.277 & {\bf 0.111 } & 0.271 & {\bf 0.094 } & {\bf 1.557 } & 3.551 \\
                                    & mid & {\bf 0.247 } & 49.705 & {\bf 0.239 } & 410.093 & {\bf 3.099 } & 1619.594 \\
                                    & tail & {\bf 0.257 } & 99.299 & {\bf 0.301 } & 820.092 & {\bf 1.567 } & 3235.638 \\ \hline
             \multirow{3}{*}{ Q21 } & head & {\bf 62.057 } & 822.659 & {\bf 665.833 } & 15239.98 & {\bf 4164.863 } & 51430.138 \\
                                    & mid & {\bf 42.706 } & 823.801 & {\bf 368.077 } & 15062.294 & {\bf 4097.071 } & 51530.172 \\
                                    & tail & {\bf 45.318 } & 824.942 & {\bf 204.653 } & 14884.608 & {\bf 3898.247 } & 51430.138 \\ \hline
             \multirow{3}{*}{ Q22 } & head & {\bf 1.883 } & 34.867 & {\bf 1.865 } & 157.814 & {\bf 3.948 } & 319.443 \\
                                    & mid & {\bf 1.962 } & 53.499 & {\bf 2.578 } & 263.99 & {\bf 3.639 } & 532.957 \\
                                    & tail & {\bf 1.787 } & 72.131 & {\bf 1.9 } & 368.158 & {\bf 3.279 } & 746.472 \\ \hline
        \end{tabular}
        \caption{Optimization vs. Baseline for Page Query Execution Time}
        \label{tab:optimization-vs-baseline-pg-200}
    \end{subtable}
    \quad
    \begin{subtable}[b]{0.42\textwidth}\centering
        \begin{tabular}{|c|c|c|c|c|c|}
             \cline{3-6}
             \multicolumn{2}{c}{} & \multicolumn{2}{|c|}{Time (ms)} & \multicolumn{2}{c|}{Output Size (MB)} \\ \hline
             Query & DB size & Milestone & Table & Milestone & Table \\ \hline
             \multirow{3}{*}{ Q1 } & 1 & 8266.691 & {\bf 1421.319 } & {\bf 7.79 } & 761.653 \\
                                    & 5 & 43465.288 & {\bf 19863.953 } & {\bf 38.964 } & 3809.747 \\
                                    & 10 & 87255.985 & {\bf 44610.896 } & {\bf 77.917 } & 7618.516 \\ \hline
             \multirow{3}{*}{ Q2 } & 1 & {\bf 211.127 } & 316.426 & {\bf 0.003 } & 0.129 \\
                                    & 5 & {\bf 1368.572 } & 2102.53 & {\bf 0.008 } & 0.647 \\
                                    & 10 & {\bf 2731.286 } & 4999.87 & {\bf 0.016 } & 1.31 \\ \hline
             \multirow{3}{*}{ Q3 } & 1 & {\bf 1076.937 } & 1434.405 & {\bf 0.054 } & 9.358 \\
                                    & 5 & {\bf 6026.493 } & 17564.81 & {\bf 0.264 } & 45.735 \\
                                    & 10 & {\bf 14199.75 } & 39549.761 & {\bf 0.536 } & 92.932 \\ \hline
             \multirow{3}{*}{ Q4 } & 1 & {\bf 322.347 } & 347.205 & {\bf 0.096 } & 6.367 \\
                                    & 5 & 1572.678 & {\bf 1554.638 } & {\bf 0.475 } & 31.649 \\
                                    & 10 & {\bf 11860.375 } & 25225.163 & {\bf 0.948 } & 63.165 \\ \hline
             \multirow{3}{*}{ Q5 } & 1 & {\bf 2166.316 } & 2745.94 & {\bf 0.234 } & 30.979 \\
                                    & 5 & {\bf 11554.084 } & 20418.954 & {\bf 1.169 } & 154.836 \\
                                    & 10 & {\bf 28022.606 } & 58548.771 & {\bf 2.332 } & 308.884 \\ \hline
             \multirow{3}{*}{ Q6 } & 1 & 757.998 & {\bf 583.419 } & {\bf 0.449 } & 43.884 \\
                                    & 5 & 3864.204 & {\bf 2686.491 } & {\bf 2.238 } & 218.804 \\
                                    & 10 & {\bf 9770.35 } & 34763.653 & {\bf 4.481 } & 438.125 \\ \hline
             \multirow{3}{*}{ Q7 } & 1 & {\bf 842.105 } & 1012.523 & {\bf 0.038 } & 0.947 \\
                                    & 5 & {\bf 6398.453 } & 6636.314 & {\bf 0.184 } & 4.602 \\
                                    & 10 & {\bf 15016.243 } & 35809.983 & {\bf 0.372 } & 9.306 \\ \hline
             \multirow{3}{*}{ Q8 } & 1 & 4944.503 & {\bf 3274.571 } & {\bf 2.918 } & 65.639 \\
                                    & 5 & 23695.185 & {\bf 17903.258 } & {\bf 14.514 } & 326.55 \\
                                    & 10 & 50859.253 & {\bf 44384.607 } & {\bf 29.071 } & 654.091 \\ \hline
             \multirow{3}{*}{ Q9 } & 1 & {\bf 2152.975 } & 2460.966 & {\bf 1.067 } & 23.998 \\
                                    & 5 & {\bf 17039.021 } & 19611.865 & {\bf 5.247 } & 118.061 \\
                                    & 10 & {\bf 32557.873 } & 54465.756 & {\bf 10.503 } & 236.312 \\ \hline
             \multirow{3}{*}{ Q10 } & 1 & {\bf 1574.498 } & 1681.446 & {\bf 0.724 } & 77.861 \\
                                    & 5 & 9602.298 & {\bf 8902.297 } & {\bf 3.635 } & 390.698 \\
                                    & 10 & {\bf 20372.195 } & 50133.868 & {\bf 7.262 } & 780.631 \\ \hline
             \multirow{3}{*}{ Q11 } & 1 & 117.053 & {\bf 80.026 } & {\bf 0.061 } & 5.928 \\
                                    & 5 & 552.067 & {\bf 541.54 } & {\bf 0.288 } & 28.146 \\
                                    & 10 & {\bf 998.988 } & 18808.77 & {\bf 0.575 } & 56.193 \\ \hline
             \multirow{3}{*}{ Q12 } & 1 & {\bf 1129.749 } & 1261.052 & {\bf 0.197 } & 10.507 \\
                                    & 5 & {\bf 5083.999 } & 14841.59 & {\bf 0.987 } & 52.627 \\
                                    & 10 & {\bf 10966.114 } & 41976.288 & {\bf 1.97 } & 105.057 \\ \hline
             \multirow{3}{*}{ Q13 } & 1 & 4588.313 & {\bf 1920.396 } & {\bf 1.798 } & 74.4 \\
                                    & 5 & 29358.318 & {\bf 12564.698 } & {\bf 8.59 } & 372.001 \\
                                    & 10 & 46773.33 & {\bf 22870.237 } & {\bf 17.98 } & 744.001 \\ \hline
             \multirow{3}{*}{ Q14 } & 1 & {\bf 285.983 } & 330.957 & {\bf 0.135 } & 18.571 \\
                                    & 5 & {\bf 1564.669 } & 16337.804 & {\bf 0.674 } & 92.793 \\
                                    & 10 & {\bf 3066.596 } & 42790.62 & {\bf 1.348 } & 185.705 \\ \hline
             \multirow{3}{*}{ Q15 } & 1 & 671.876 & {\bf 427.38 } & {\bf 0.541 } & 52.914 \\
                                    & 5 & {\bf 3607.563 } & 15686.577 & {\bf 2.712 } & 265.099 \\
                                    & 10 & {\bf 7351.63 } & 44726.829 & {\bf 5.424 } & 530.346 \\ \hline
             \multirow{3}{*}{ Q16 } & 1 & 358.35 & {\bf 301.054 } & {\bf 0.185 } & 16.455 \\
                                    & 5 & {\bf 1764.462 } & 2179.797 & {\bf 0.92 } & 81.74 \\
                                    & 10 & {\bf 3583.625 } & 4496.268 & {\bf 1.845 } & 163.946 \\ \hline
             \multirow{3}{*}{ Q17 } & 1 & 6743.266 & {\bf 6485.26 } & {\bf 0.36 } & 11.2 \\
                                    & 5 & {\bf 19005.803 } & 21509.807 & {\bf 1.8 } & 56.0 \\
                                    & 10 & {\bf 39388.936 } & 64307.077 & {\bf 3.6 } & 112.0 \\ \hline
             \multirow{3}{*}{ Q18 } & 1 & 15197.81 & {\bf 12058.411 } & {\bf 13.054 } & 1272.737 \\
                                    & 5 & {\bf 81764.098 } & 87866.796 & {\bf 65.215 } & 6358.46 \\
                                    & 10 & {\bf 170623.134 } & 188950.673 & {\bf 130.35 } & 12709.096 \\ \hline
             \multirow{3}{*}{ Q19 } & 1 & {\bf 123.344 } & 260.007 & {\bf 0.011 } & 1.501 \\
                                    & 5 & {\bf 751.385 } & 24049.184 & {\bf 0.054 } & 7.47 \\
                                    & 10 & {\bf 1224.462 } & 81688.742 & {\bf 0.105 } & 14.506 \\ \hline
             \multirow{3}{*}{ Q20 } & 1 & 2040.119 & {\bf 1499.107 } & {\bf 1.644 } & 1056.214 \\
                                    & 5 & {\bf 10370.233 } & 24150.297 & {\bf 8.221 } & 5279.964 \\
                                    & 10 & {\bf 22286.322 } & 67750.294 & {\bf 16.423 } & 10557.545 \\ \hline
             \multirow{3}{*}{ Q21 } & 1 & {\bf 904.569 } & 1049.034 & {\bf 0.013 } & 1.332 \\
                                    & 5 & {\bf 10135.259 } & 23102.873 & {\bf 0.063 } & 6.538 \\
                                    & 10 & {\bf 25376.696 } & 115486.433 & {\bf 0.128 } & 13.417 \\ \hline
             \multirow{3}{*}{ Q22 } & 1 & 110.451 & {\bf 79.542 } & {\bf 0.012 } & 0.405 \\
                                    & 5 & {\bf 476.884 } & 494.633 & {\bf 0.057 } & 2.035 \\
                                    & 10 & {\bf 928.369 } & 1868.3 & {\bf 0.114 } & 4.069 \\ \hline
        \end{tabular}
        \caption{Milestone Query vs. Table Query}
        \label{tab:milestone-vs-table-pg-200}
    \end{subtable}
    \caption{Experiment results for Page Size 200. Shorter time and smaller sizes are boldfaced.}
    \label{tab:general-result-pg-200}
\end{table*}

\clearpage

%% file: sections/appendix/user-study.tex
\section{User Study Appendix}
\label{appendix:user-study}

\subsection{Debugging Questions in Quiz}

The user study is conducted with an instance of beer database with the following schema (keys are underlined):

\begin{itemize}[leftmargin=*]
    \item Drinker (\underline{name}, address)
    \item Bar (\underline{name}, address)
    \item Beer (\underline{name}, brewery)
    \item Frequents (\underline{drinker}, \underline{bar}, times\_a\_week)
    \item Serves (\underline{bar}, \underline{beer}, price)
    \item Likes (\underline{drinker}, \underline{beer})
\end{itemize}

\subsubsection{Debugging Question 1}
For each bar Ben visits, find price of the most expensive and cheapest drink at that bar. Format the output as (bar, price), no duplicates.

The wrong query presented to the students are as follows:
\begin{lstlisting}[language=SQL, mathescape=true, basicstyle=\footnotesize\ttfamily]
WITH t1 AS (
  SELECT bar, price
  FROM serves
  WHERE price = (
      SELECT MAX(S1.price)
      FROM serves S1
      WHERE S1.bar = bar
    )
  UNION ALL
  SELECT bar, price
  FROM serves
  WHERE price = (
      SELECT MIN(S1.price)
      FROM serves S1
      WHERE S1.bar = bar
    )
)
SELECT t1.bar, t1.price
FROM t1, frequents
WHERE t1.bar = frequents.bar
  AND frequents.drinker = 'Ben';
\end{lstlisting}

The above query has two mistakes:
\begin{enumerate}[leftmargin=*]
    \item \sql{UNION ALL} creates duplicates when the most expensive and cheapest drinks share the same price.
    \item The \sql{bar} in both scalar subqueries is referencing the wrong column. Without correct aliasing, both \sql{bar} refer to the \sql{bar} in \sql{S1}, making the \WHERE\ condition a tautology.
\end{enumerate}

\subsubsection{Debugging Question 2}
Suppose every time a drinker frequents a bar, he buys all his favorite beers at that bar. Find the expected weekly revenue of each bar and rank them by the revenue from high to low. The output should be in the format of (bar, revenue). If a bar is not frequented by any drinker, or it does not serve any beer, or none of its beer is liked by any drinker, output (bar, NULL).

The wrong query presented to the students is as follows:
\begin{lstlisting}[language=SQL, mathescape=true, basicstyle=\footnotesize\ttfamily]
SELECT S.bar,
  SUM(F.times_a_week) * SUM(S.price) AS revenue
FROM serves S,
  frequents F,
  likes L
WHERE S.bar = F.bar
  AND S.beer = L.beer
GROUP BY S.bar
ORDER BY revenue DESC;
\end{lstlisting}

The above query has three mistakes:
\begin{enumerate}[leftmargin=*]
    \item The join predicate \sql{F.drinker = L.drinker} is missing.
    \item The expression for the sum is incorrect as it will blow up the result. The correct expression is \sql{SUM(F.times\_a\_week * S.price)}.
    \item There will be no ``\sql{NULL}'' tuple produced by the query, i.e., bars which do not serve any beer / serve no beer liked by anyone will not be included in the result. 
\end{enumerate}

\subsection{Brief Case Study for Debugging with LLM}
To examine how large language models (LLMs) perform on query debugging tasks, we fed both debugging questions in the quiz to an LLM and verified the correctness of its response. We perform the test with Gemini-3-pro and GPT-5.4, which represented the state-of-the-art LLMs at the time this paper was written. 

Since feeding the entire database instance to LLMs for debugging is usually not feasible, we used the following prompt for debugging and asked LLM to debug only based on the database schema and the query, with the assumption that the query is known to be incorrect: 

\begin{tcolorbox}[title={Prompt 1}, boxrule=0.8pt, colframe=gray, boxsep=2pt, left=2pt, right=2pt, top=2pt, bottom=2pt]
            \begin{lstlisting}[style=TinyJSON]
You are a SQL expert who can debug semantically 
incorrect query. 

Consider the following database schema:
{{ Database Schema }}

Consider the following question:
{{ Question }}

The following query is proven to be semantically 
incorrect:
{{ Query }}

What are the mistakes in the query?
\end{lstlisting}
\end{tcolorbox}

To ensure consistency, for each debugging question, we fed the LLM with the same prompt simultaneously in five separate conversations. We obtained the following overall result:
\begin{itemize}[leftmargin=*]
    \item For debugging question 1: Gemini-3-pro caught both bugs in 4 conversations, and it missed bug (2) in the remaining conversation; GPT-5.4 caught both bugs in all 5 conversations.
    \item For debugging question 2, both models caught all bugs in all 5 conversations.
    \item No hallucination was spotted.
\end{itemize}

While the above results looked promising, we further tested LLM's capability by removing the assumption that the query is wrong with the same setting in 5 isolated conversations: 
\begin{tcolorbox}[title={Prompt 2}, boxrule=0.8pt, colframe=gray, boxsep=2pt, left=2pt, right=2pt, top=2pt, bottom=2pt]
            \begin{lstlisting}[style=TinyJSON]
You are a SQL expert who can debug semantically 
incorrect query. 

Consider the following database schema:
{{ Database Schema }}

Consider the following question:
{{ Question }}

Is the following query semantically correct? 
If not, what are the mistakes? 
{{ Query }}
\end{lstlisting}
\end{tcolorbox}

As a result, a weaker assumption degrades the accuracy of the models:
\begin{itemize}[leftmargin=*]
    \item For debugging question 1: Gemini-3-pro caught both bugs in 3 conversations, and it missed bug (2) in one conversation and marked the query correct in one conversation; GPT-5.4 caught both bugs in 2 conversations and missed bug 2 in the remaining 3 conversations.
    \item For debugging question 2: Gemini-3-pro caught both bugs in all conversations while GPT-5.4 missed bug 3 in one conversation (it correctly identified the query would not produce \sql{NULL} but gave the wrong reason).
    \item No hallucination was spotted.
\end{itemize}

\subsubsection{Bar Raiser}
We further challenged LLMs with a harder question. Given the following schema: Friends(uid1, uid2, start\_ts, end\_ts). Returns for each pair of users, the maximal time periods during which they were friends.

The semantically incorrect query is the following:
\begin{lstlisting}[language=SQL]
WITH F(uid, start_ts, end_ts) AS (
  SELECT 
    DISTINCT uid1, 
    start_ts, 
    COALESCE(
      end_ts, CURRENT_TIME + INTERVAL '1 day'
    ) 
  FROM 
    Friends
), 
Mystery(uid, start_ts, end_ts) AS (
  SELECT f1.uid, f1.start_ts, f2.end_ts 
  FROM F f1, F f2 
  WHERE 
    f1.uid = f2.uid 
    AND f1.start_ts < f2.end_ts 
    AND NOT EXISTS (
      SELECT * 
      FROM F fi 
      WHERE 
        fi.uid = f1.uid 
        AND f1.start_ts < fi.end_ts 
        AND fi.end_ts < f2.end_ts 
        AND NOT EXISTS (
          SELECT * 
          FROM F 
          WHERE 
            uid = fi.uid 
            AND start_ts <= fi.end_ts 
            AND fi.end_ts < end_ts
        )
    )
) 
SELECT 
  uid, 
  start_ts, 
  NULLIF(
    end_ts, CURRENT_TIME + INTERVAL '1 day'
  ) 
FROM Mystery m 
WHERE 
  NOT EXISTS (
    SELECT * 
    FROM Mystery 
    WHERE 
      uid = m.uid 
      AND start_ts <= m.start_ts 
      AND m.end_ts <= end_ts
  );
\end{lstlisting}

In the above query, the \sql{NOT EXISTS} condition in the outer \sql{SELECT} block can never be satisfied, because both inequalities are non-strict (\sql{<=}), any row sql{m} will match itself (e.g., \sql{m.start\_ts <= m.start\_ts} is always true). As a result, \sql{EXISTS} is universally true, \sql{NOT EXISTS} is universally false, and every single row eliminates itself.

For each model (i.e., Gemini-3-pro and GPT-5.4) and each prompt (Prompt 1 and 2 with different assumptions), we create 5 separate conversations, and the results are the following:
\begin{itemize}[leftmargin=*]
    \item \textbf{Gemimi-3-pro:} With prompt 1, it correctly identified the bug in 4 conversations and missed it in one conversation; with prompt 2, it correctly identified the bugs in 3 conversations and missed it in 2 conversations. 
    \item \textbf{GPT-5.4:} With prompt 1, it correctly identified the bug in 3 conversations but missed it in the other 2; with prompt 2, it only correctly identified it in one conversation.
    \item Both models hallucinated and mentioned other irrelevant bugs in almost all conversations.
\end{itemize}

On the other hand, this bug can be easy to spot with \oursys: from the main query block, users can step into the execution of the \sql{NOT EXISTS} subquery with any external reference row for \sql{m.uid}, \sql{m.start\_ts}, \sql{m.end\_ts}, and quickly find out that the inequalities will always be evaluated to true as any external row will always ``semi-join'' with itself in the inner \sql{Mystery} table.

\subsubsection{Summary}
While Large Language Models (LLMs) are highly effective at debugging relatively simple SQL queries—particularly when operating under the strict assumption that a query is definitively flawed—their capabilities degrade significantly when faced with complex, multi-nested logic or when a user is merely uncertain about a query's correctness. Because LLMs act as probabilistic oracles rather than execution engines, they can easily hallucinate fixes or misinterpret the underlying data state. Consequently, there is a critical need for complementary tools to both ground the LLM's reasoning and allow users to rigorously verify the accuracy of its proposed explanations. In both capacities, \oursys\ provides a vital solution. By supplying a deterministic, step-by-step execution state, \oursys\ serves as both an authoritative verifier for human users and a reliable, objective execution environment that future LLM agents can query to validate their own hypotheses.

%% file: main.bib
@misc{tpch,
	author = {TPC Benchmark},
	howpublished = "\url{http://www.tpc.org/tpch}"
}

@misc{postgres,
	author = {PostgreSQL},
	howpublished = "\url{https://www.postgresql.org/}"
}

@inproceedings{buneman2001and,
  title={Why and where: A characterization of data provenance},
  author={Buneman, Peter and Khanna, Sanjeev and Wang-Chiew, Tan},
  booktitle={Database Theory—ICDT 2001: 8th International Conference London, UK, January 4--6, 2001 Proceedings 8},
  pages={316--330},
  year={2001},
  organization={Springer}
}

@inproceedings{green2007provenance,
  title={Provenance semirings},
  author={Green, Todd J and Karvounarakis, Grigoris and Tannen, Val},
  booktitle={Proceedings of the twenty-sixth ACM SIGMOD-SIGACT-SIGART symposium on Principles of database systems},
  pages={31--40},
  year={2007}
}

@inproceedings{bidoit2014query,
  title={Query-based why-not provenance with nedexplain},
  author={Bidoit, Nicole and Herschel, Melanie and Tzompanaki, Katerina},
  booktitle={Extending database technology (EDBT)},
  year={2014}
}

@inproceedings{chapman2009not,
  title={Why not?},
  author={Chapman, Adriane and Jagadish, HV},
  booktitle={Proceedings of the 2009 ACM SIGMOD International Conference on Management of data},
  pages={523--534},
  year={2009}
}

@article{huang2008provenance,
  title={On the provenance of non-answers to queries over extracted data},
  author={Huang, Jiansheng and Chen, Ting and Doan, AnHai and Naughton, Jeffrey F},
  journal={Proceedings of the VLDB Endowment},
  volume={1},
  number={1},
  pages={736--747},
  year={2008},
  publisher={VLDB Endowment}
}

@article{cui2000tracing,
  title={Tracing the lineage of view data in a warehousing environment},
  author={Cui, Yingwei and Widom, Jennifer and Wiener, Janet L},
  journal={ACM Transactions on Database Systems (TODS)},
  volume={25},
  number={2},
  pages={179--227},
  year={2000},
  publisher={ACM New York, NY, USA}
}

@inproceedings{agrawal2006trio,
  title={Trio: A system for data, uncertainty, and lineage},
  author={Agrawal, Parag and Benjelloun, Omar and Sarma, Anish Das and Hayworth, Chris and Nabar, Shubha and Sugihara, Tomoe and Widom, Jennifer},
  booktitle={VLDB},
  volume={6},
  pages={1151--1154},
  year={2006}
}

@article{karvounarakis2013collaborative,
  title={Collaborative data sharing via update exchange and provenance},
  author={Karvounarakis, Grigoris and Green, Todd J and Ives, Zachary G and Tannen, Val},
  journal={ACM Transactions on Database Systems (TODS)},
  volume={38},
  number={3},
  pages={1--42},
  year={2013},
  publisher={ACM New York, NY, USA}
}

@inproceedings{glavic2009perm,
  title={Perm: Processing provenance and data on the same data model through query rewriting},
  author={Glavic, Boris and Alonso, Gustavo},
  booktitle={2009 IEEE 25th International Conference on Data Engineering},
  pages={174--185},
  year={2009},
  organization={IEEE}
}

@article{arab2018gprom,
  title={GProM-a swiss army knife for your provenance needs},
  author={Arab, Bahareh Sadat and Feng, Su and Glavic, Boris and Lee, Seokki and Niu, Xing and Zeng, Qitian},
  journal={A Quarterly bulletin of the Computer Society of the IEEE Technical Committee on Data Engineering},
  volume={41},
  number={1},
  year={2018}
}

@article{lee2019pug,
  title={PUG: a framework and practical implementation for why and why-not provenance},
  author={Lee, Seokki and Lud{\"a}scher, Bertram and Glavic, Boris},
  journal={The VLDB Journal},
  volume={28},
  number={1},
  pages={47--71},
  year={2019},
  publisher={Springer}
}

@article{psallidas2018smoke,
  title={Smoke: Fine-grained lineage at interactive speed},
  author={Psallidas, Fotis and Wu, Eugene},
  journal={arXiv preprint arXiv:1801.07237},
  year={2018}
}

@inproceedings{interlandi2015titian,
  title={Titian: Data provenance support in spark},
  author={Interlandi, Matteo and Shah, Kshitij and Tetali, Sai Deep and Gulzar, Muhammad Ali and Yoo, Seunghyun and Kim, Miryung and Millstein, Todd and Condie, Tyson},
  booktitle={Proceedings of the VLDB Endowment International Conference on Very Large Data Bases},
  volume={9},
  number={3},
  pages={216},
  year={2015},
  organization={NIH Public Access}
}

@inproceedings{diestelkamper2020tracing,
  title={Tracing nested data with structural provenance for big data analytics.},
  author={Diestelk{\"a}mper, Ralf and Herschel, Melanie},
  booktitle={EDBT},
  pages={253--264},
  year={2020}
}

@article{amsterdamer2011putting,
  title={Putting lipstick on pig: Enabling database-style workflow provenance},
  author={Amsterdamer, Yael and Davidson, Susan B and Deutch, Daniel and Milo, Tova and Stoyanovich, Julia and Tannen, Val},
  journal={arXiv preprint arXiv:1201.0231},
  year={2011}
}

@inproceedings{jaakkola2003visual,
  title={Visual SQL--high-quality ER-based query treatment},
  author={Jaakkola, Hannu and Thalheim, Bernhard},
  booktitle={Conceptual Modeling for Novel Application Domains: ER 2003 Workshops ECOMO, IWCMQ, AOIS, and XSDM, Chicago, IL, USA, October 13, 2003. Proceedings 22},
  pages={129--139},
  year={2003},
  organization={Springer}
}

@inproceedings{cerullo2007system,
  title={A system for database visual querying and query visualization: Complementing text and graphics to increase expressiveness},
  author={Cerullo, Claudio and Porta, Marco},
  booktitle={18th International Workshop on Database and Expert Systems Applications (DEXA 2007)},
  pages={109--113},
  year={2007},
  organization={IEEE}
}

@inproceedings{haas1989extensible,
  title={Extensible query processing in Starburst},
  author={Haas, Laura M and Freytag, Johann Christoph and Lohman, Guy M and Pirahesh, Hamid},
  booktitle={Proceedings of the 1989 ACM SIGMOD international conference on Management of data},
  pages={377--388},
  year={1989}
}

@inproceedings{leventidis2020queryvis,
  title={QueryVis: Logic-based diagrams help users understand complicated SQL queries faster},
  author={Leventidis, Aristotelis and Zhang, Jiahui and Dunne, Cody and Gatterbauer, Wolfgang and Jagadish, HV and Riedewald, Mirek},
  booktitle={Proceedings of the 2020 ACM SIGMOD International Conference on Management of Data},
  pages={2303--2318},
  year={2020}
}

@misc{dbforge,
  title={dbForge},
  howpublished={\url{https://www.devart.com/dbforge/mysql/querybuilder/}},
  year={2023}
}

@misc{msaccess,
  title={Microsoft Access},
  howpublished={\url{https://www.microsoft.com/en-us/microsoft-365/access}},
  year={2023}
}

@misc{pgadmin,
  title={PgAdmin},
  howpublished={\url{https://www.pgadmin.org/}},
  year={2023}
}

@misc{rapidsql,
  title={Rapid SQL},
  howpublished={\url{https://www.idera.com/rapid-sql-ide/}},
  year={2023}
}

@inproceedings{miedema2021sqlvis,
  title={SQLVis: Visual query representations for supporting SQL learners},
  author={Miedema, Daphne and Fletcher, George},
  booktitle={2021 IEEE Symposium on Visual Languages and Human-Centric Computing (VL/HCC)},
  pages={1--9},
  year={2021},
  organization={IEEE}
}

@inproceedings{koutrika2010explaining,
  title={Explaining structured queries in natural language},
  author={Koutrika, Georgia and Simitsis, Alkis and Ioannidis, Yannis E},
  booktitle={2010 IEEE 26th International Conference on Data Engineering (ICDE 2010)},
  pages={333--344},
  year={2010},
  organization={IEEE}
}

@inproceedings{gehrmann2018end,
    title = "End-to-End Content and Plan Selection for Data-to-Text Generation",
    author = "Gehrmann, Sebastian  and
      Dai, Falcon  and
      Elder, Henry  and
      Rush, Alexander",
    booktitle = "Proceedings of the 11th International Conference on Natural Language Generation",
    month = nov,
    year = "2018",
    address = "Tilburg University, The Netherlands",
    publisher = "Association for Computational Linguistics",
    url = "https://aclanthology.org/W18-6505",
    doi = "10.18653/v1/W18-6505",
    pages = "46--56"
}

@inproceedings{xu2018sql,
    title = "{SQL}-to-Text Generation with Graph-to-Sequence Model",
    author = "Xu, Kun  and
      Wu, Lingfei  and
      Wang, Zhiguo  and
      Feng, Yansong  and
      Sheinin, Vadim",
    booktitle = "Proceedings of the 2018 Conference on Empirical Methods in Natural Language Processing",
    month = oct # "-" # nov,
    year = "2018",
    address = "Brussels, Belgium",
    publisher = "Association for Computational Linguistics",
    url = "https://aclanthology.org/D18-1112",
    doi = "10.18653/v1/D18-1112",
    pages = "931--936"
}

@inproceedings{iyer2016summarizing,
    title = "Summarizing Source Code using a Neural Attention Model",
    author = "Iyer, Srinivasan  and
      Konstas, Ioannis  and
      Cheung, Alvin  and
      Zettlemoyer, Luke",
    booktitle = "Proceedings of the 54th Annual Meeting of the Association for Computational Linguistics (Volume 1: Long Papers)",
    month = aug,
    year = "2016",
    address = "Berlin, Germany",
    publisher = "Association for Computational Linguistics",
    url = "https://aclanthology.org/P16-1195",
    doi = "10.18653/v1/P16-1195",
    pages = "2073--2083",
}

@inproceedings{shu2021logic,
    title = "Logic-Consistency Text Generation from Semantic Parses",
    author = "Shu, Chang  and
      Zhang, Yusen  and
      Dong, Xiangyu  and
      Shi, Peng  and
      Yu, Tao  and
      Zhang, Rui",
    booktitle = "Findings of the Association for Computational Linguistics: ACL-IJCNLP 2021",
    month = aug,
    year = "2021",
    address = "Online",
    publisher = "Association for Computational Linguistics",
    url = "https://aclanthology.org/2021.findings-acl.388",
    doi = "10.18653/v1/2021.findings-acl.388",
    pages = "4414--4426",
}

@article{brown2020language,
  title={Language models are few-shot learners},
  author={Brown, Tom and Mann, Benjamin and Ryder, Nick and Subbiah, Melanie and Kaplan, Jared D and Dhariwal, Prafulla and Neelakantan, Arvind and Shyam, Pranav and Sastry, Girish and Askell, Amanda and others},
  journal={Advances in neural information processing systems},
  volume={33},
  pages={1877--1901},
  year={2020}
}

@inproceedings{abouzied2012dataplay,
  title={Dataplay: interactive tweaking and example-driven correction of graphical database queries},
  author={Abouzied, Azza and Hellerstein, Joseph and Silberschatz, Avi},
  booktitle={Proceedings of the 25th annual ACM symposium on User interface software and technology},
  pages={207--218},
  year={2012}
}

@inproceedings{caballero2012declarative,
  title={Declarative debugging of wrong and missing answers for SQL views},
  author={Caballero, Rafael and Garc{\'\i}a-Ruiz, Yolanda and S{\'a}enz-P{\'e}rez, Fernando},
  booktitle={Functional and Logic Programming: 11th International Symposium, FLOPS 2012, Kobe, Japan, May 23-25, 2012. Proceedings 11},
  pages={73--87},
  year={2012},
  organization={Springer}
}

@inproceedings{caballero2012algorithmic,
  title={Algorithmic debugging of SQL views},
  author={Caballero, Rafael and Garc{\'\i}a-Ruiz, Yolanda and S{\'a}enz-P{\'e}rez, Fernando},
  booktitle={Perspectives of Systems Informatics: 8th International Andrei Ershov Memorial Conference, PSI 2011, Novosibirsk, Russia, June 27-July 1, 2011, Revised Selected Papers 8},
  pages={77--85},
  year={2012},
  organization={Springer}
}

@inproceedings{dimitriadou2014explore,
  title={Explore-by-example: An automatic query steering framework for interactive data exploration},
  author={Dimitriadou, Kyriaki and Papaemmanouil, Olga and Diao, Yanlei},
  booktitle={Proceedings of the 2014 ACM SIGMOD international conference on Management of data},
  pages={517--528},
  year={2014}
}

@inproceedings{le2019explique,
  title={Explique: Interactive databases exploration with SQL},
  author={Le Guilly, Marie and Petit, Jean-Marc and Scuturici, Vasile-Marian and Ilyas, Ihab F},
  booktitle={Proceedings of the 28th ACM International Conference on Information and Knowledge Management},
  pages={2877--2880},
  year={2019}
}

@article{akbarnejad2010sql,
  title={SQL QueRIE recommendations},
  author={Akbarnejad, Javad and Chatzopoulou, Gloria and Eirinaki, Magdalini and Koshy, Suju and Mittal, Sarika and On, Duc and Polyzotis, Neoklis and Varman, Jothi S Vindhiya},
  journal={Proceedings of the VLDB Endowment},
  volume={3},
  number={1-2},
  pages={1597--1600},
  year={2010},
  publisher={VLDB Endowment}
}

@article{miao2020rex,
  title={I-Rex: an interactive relational query explainer for SQL},
  author={Miao, Zhengjie and Chen, Tiangang and Bendeck, Alexander and Day, Kevin and Roy, Sudeepa and Yang, Jun},
  journal={Proceedings of the VLDB Endowment},
  volume={13},
  number={12},
  pages={2997--3000},
  year={2020},
  publisher={VLDB Endowment}
}

@inproceedings{hu2022rex,
  title={I-Rex: An Interactive Relational Query Debugger for SQL},
  author={Hu, Yihao and Miao, Zhengjie and Leong, Zhiming and Lim, Haechan and Zheng, Zachary and Roy, Sudeepa and Stephens-Martinez, Kristin and Yang, Jun},
  booktitle={Proceedings of the 53rd ACM Technical Symposium on Computer Science Education V. 2},
  pages={1180--1180},
  year={2022}
}

@article{DBLP:journals/jss/BrassG06,
  author       = {Stefan Brass and
                  Christian Goldberg},
  title        = {Semantic errors in {SQL} queries: {A} quite complete list},
  journal      = {J. Syst. Softw.},
  volume       = {79},
  number       = {5},
  pages        = {630--644},
  year         = {2006},
  url          = {https://doi.org/10.1016/j.jss.2005.06.028},
  doi          = {10.1016/J.JSS.2005.06.028},
  bibsource    = {dblp computer science bibliography, https://dblp.org}
}

@inproceedings{DBLP:conf/qsic/BrassG05,
  author       = {Stefan Brass and
                  Christian Goldberg},
  title        = {Proving the Safety of {SQL} Queries},
  booktitle    = {Fifth International Conference on Quality Software {(QSIC} 2005),
                  19-20 September 2005, Melbourne, Australia},
  pages        = {197--204},
  publisher    = {{IEEE} Computer Society},
  year         = {2005},
  url          = {https://doi.org/10.1109/QSIC.2005.50},
  doi          = {10.1109/QSIC.2005.50},
  bibsource    = {dblp computer science bibliography, https://dblp.org}
}

@inproceedings{DBLP:conf/gvd/BrassG04,
  author       = {Stefan Brass and
                  Christian Goldberg},
  editor       = {Mireille Samia and
                  Stefan Conrad},
  title        = {Detecting Logical Errors in {SQL} Queries},
  booktitle    = {Tagungsband zum 16. GI-Workshop Grundlagen von Datenbanken, Mohnheim,
                  NRW, Deutschland, 1.-4. Juni 2004},
  pages        = {28--32},
  publisher    = {Universit{\"{a}}t D{\"{u}}sseldorf},
  year         = {2004},
  bibsource    = {dblp computer science bibliography, https://dblp.org}
}

@inproceedings{DBLP:conf/lpar/VeanesTH10,
  author       = {Margus Veanes and
                  Nikolai Tillmann and
                  Jonathan de Halleux},
  editor       = {Edmund M. Clarke and
                  Andrei Voronkov},
  title        = {Qex: Symbolic {SQL} Query Explorer},
  booktitle    = {Logic for Programming, Artificial Intelligence, and Reasoning - 16th
                  International Conference, LPAR-16, Dakar, Senegal, April 25-May 1,
                  2010, Revised Selected Papers},
  series       = {Lecture Notes in Computer Science},
  volume       = {6355},
  pages        = {425--446},
  publisher    = {Springer},
  year         = {2010},
  url          = {https://doi.org/10.1007/978-3-642-17511-4\_24},
  doi          = {10.1007/978-3-642-17511-4\_24},
  bibsource    = {dblp computer science bibliography, https://dblp.org}
}

@article{DBLP:journals/pacmse/Haroon0G24,
  author       = {Sabaat Haroon and
                  Chris Brown and
                  Muhammad Ali Gulzar},
  title        = {DeSQL: Interactive Debugging of {SQL} in Data-Intensive Scalable Computing},
  journal      = {Proc. {ACM} Softw. Eng.},
  volume       = {1},
  number       = {{FSE}},
  pages        = {767--788},
  year         = {2024},
  url          = {https://doi.org/10.1145/3643761},
  doi          = {10.1145/3643761},
  bibsource    = {dblp computer science bibliography, https://dblp.org}
}

@article{grust2013observing,
  title={Observing sql queries in their natural habitat},
  author={Grust, Torsten and Rittinger, Jan},
  journal={ACM Transactions on Database Systems (TODS)},
  volume={38},
  number={1},
  pages={1--33},
  year={2013},
  publisher={ACM New York, NY, USA}
}

@inproceedings{dietrich2015sql,
  title={A SQL debugger built from spare parts: Turning a SQL: 1999 database system into its own debugger},
  author={Dietrich, Benjamin and Grust, Torsten},
  booktitle={Proceedings of the 2015 ACM SIGMOD International Conference on Management of Data},
  pages={865--870},
  year={2015}
}

@article{DBLP:journals/vldb/ChandraCKRS015,
  author       = {Bikash Chandra and
                  Bhupesh Chawda and
                  Biplab Kar and
                  K. V. Maheshwara Reddy and
                  Shetal Shah and
                  S. Sudarshan},
  title        = {Data generation for testing and grading {SQL} queries},
  journal      = {{VLDB} J.},
  volume       = {24},
  number       = {6},
  pages        = {731--755},
  year         = {2015},
  url          = {https://doi.org/10.1007/s00778-015-0395-0},
  doi          = {10.1007/S00778-015-0395-0},
  bibsource    = {dblp computer science bibliography, https://dblp.org}
}

@article{DBLP:journals/pvldb/ChuMRCS18,
  author       = {Shumo Chu and
                  Brendan Murphy and
                  Jared Roesch and
                  Alvin Cheung and
                  Dan Suciu},
  title        = {Axiomatic Foundations and Algorithms for Deciding Semantic Equivalences
                  of {SQL} Queries},
  journal      = {Proc. {VLDB} Endow.},
  volume       = {11},
  number       = {11},
  pages        = {1482--1495},
  year         = {2018},
  url          = {http://www.vldb.org/pvldb/vol11/p1482-chu.pdf},
  doi          = {10.14778/3236187.3236200},
  bibsource    = {dblp computer science bibliography, https://dblp.org}
}

@inproceedings{DBLP:conf/cidr/ChuWWC17,
  author       = {Shumo Chu and
                  Chenglong Wang and
                  Konstantin Weitz and
                  Alvin Cheung},
  title        = {Cosette: An Automated Prover for {SQL}},
  booktitle    = {8th Biennial Conference on Innovative Data Systems Research, {CIDR}
                  2017, Chaminade, CA, USA, January 8-11, 2017, Online Proceedings},
  publisher    = {www.cidrdb.org},
  year         = {2017},
  url          = {http://cidrdb.org/cidr2017/papers/p51-chu-cidr17.pdf},
  bibsource    = {dblp computer science bibliography, https://dblp.org}
}

@inproceedings{DBLP:conf/pldi/ChuWCS17,
  author       = {Shumo Chu and
                  Konstantin Weitz and
                  Alvin Cheung and
                  Dan Suciu},
  editor       = {Albert Cohen and
                  Martin T. Vechev},
  title        = {HoTTSQL: proving query rewrites with univalent {SQL} semantics},
  booktitle    = {Proceedings of the 38th {ACM} {SIGPLAN} Conference on Programming
                  Language Design and Implementation, {PLDI} 2017, Barcelona, Spain,
                  June 18-23, 2017},
  pages        = {510--524},
  publisher    = {{ACM}},
  year         = {2017},
  url          = {https://doi.org/10.1145/3062341.3062348},
  doi          = {10.1145/3062341.3062348},
  bibsource    = {dblp computer science bibliography, https://dblp.org}
}

@article{DBLP:journals/pacmmod/DingWYZXCPL23,
  author       = {Haoran Ding and
                  Zhaoguo Wang and
                  Yicun Yang and
                  Dexin Zhang and
                  Zhenglin Xu and
                  Haibo Chen and
                  Ruzica Piskac and
                  Jinyang Li},
  title        = {Proving Query Equivalence Using Linear Integer Arithmetic},
  journal      = {Proc. {ACM} Manag. Data},
  volume       = {1},
  number       = {4},
  pages        = {227:1--227:26},
  year         = {2023},
  url          = {https://doi.org/10.1145/3626768},
  doi          = {10.1145/3626768},
  bibsource    = {dblp computer science bibliography, https://dblp.org}
}

@article{DBLP:journals/pvldb/WangPC24,
  author       = {Shuxian Wang and
                  Sicheng Pan and
                  Alvin Cheung},
  title        = {{QED:} {A} Powerful Query Equivalence Decider for {SQL}},
  journal      = {Proc. {VLDB} Endow.},
  volume       = {17},
  number       = {11},
  pages        = {3602--3614},
  year         = {2024},
  url          = {https://www.vldb.org/pvldb/vol17/p3602-wang.pdf},
  bibsource    = {dblp computer science bibliography, https://dblp.org}
}

@inproceedings{DBLP:conf/sigmod/GiladMRY22,
  author       = {Amir Gilad and
                  Zhengjie Miao and
                  Sudeepa Roy and
                  Jun Yang},
  editor       = {Zachary G. Ives and
                  Angela Bonifati and
                  Amr El Abbadi},
  title        = {Understanding Queries by Conditional Instances},
  booktitle    = {{SIGMOD} '22: International Conference on Management of Data, Philadelphia,
                  PA, USA, June 12 - 17, 2022},
  pages        = {355--368},
  publisher    = {{ACM}},
  year         = {2022},
  url          = {https://doi.org/10.1145/3514221.3517898},
  doi          = {10.1145/3514221.3517898},
  bibsource    = {dblp computer science bibliography, https://dblp.org}
}

@article{DBLP:journals/pacmmod/HuGSRY24,
  author       = {Yihao Hu and
                  Amir Gilad and
                  Kristin Stephens{-}Martinez and
                  Sudeepa Roy and
                  Jun Yang},
  title        = {Qr-Hint: Actionable Hints Towards Correcting Wrong {SQL} Queries},
  journal      = {Proc. {ACM} Manag. Data},
  volume       = {2},
  number       = {3},
  pages        = {164},
  year         = {2024},
  url          = {https://doi.org/10.1145/3654995},
  doi          = {10.1145/3654995},
  bibsource    = {dblp computer science bibliography, https://dblp.org}
}

@inproceedings{DBLP:conf/sigmod/MiaoRY19,
  author       = {Zhengjie Miao and
                  Sudeepa Roy and
                  Jun Yang},
  editor       = {Peter A. Boncz and
                  Stefan Manegold and
                  Anastasia Ailamaki and
                  Amol Deshpande and
                  Tim Kraska},
  title        = {Explaining Wrong Queries Using Small Examples},
  booktitle    = {Proceedings of the 2019 International Conference on Management of
                  Data, {SIGMOD} Conference 2019, Amsterdam, The Netherlands, June 30
                  - July 5, 2019},
  pages        = {503--520},
  publisher    = {{ACM}},
  year         = {2019},
  url          = {https://doi.org/10.1145/3299869.3319866},
  doi          = {10.1145/3299869.3319866},
  bibsource    = {dblp computer science bibliography, https://dblp.org}
}

@inproceedings{DBLP:conf/icse/Presler-Marshall21,
  author       = {Kai Presler{-}Marshall and
                  Sarah Heckman and
                  Kathryn T. Stolee},
  title        = {SQLRepair: Identifying and Repairing Mistakes in Student-Authored
                  {SQL} Queries},
  booktitle    = {43rd {IEEE/ACM} International Conference on Software Engineering:
                  Software Engineering Education and Training, {ICSE} {(SEET)} 2021,
                  Madrid, Spain, May 25-28, 2021},
  pages        = {199--210},
  publisher    = {{IEEE}},
  year         = {2021},
  url          = {https://doi.org/10.1109/ICSE-SEET52601.2021.00030},
  doi          = {10.1109/ICSE-SEET52601.2021.00030},
  bibsource    = {dblp computer science bibliography, https://dblp.org}
}

@inproceedings{DBLP:conf/comad/ChandraBHJ021,
  author       = {Bikash Chandra and
                  Ananyo Banerjee and
                  Udbhas Hazra and
                  Mathew Joseph and
                  S. Sudarshan},
  editor       = {Jayant R. Haritsa and
                  Shourya Roy and
                  Manish Gupta and
                  Sharad Mehrotra and
                  Balaji Vasan Srinivasan and
                  Yogesh Simmhan},
  title        = {Edit Based Grading of {SQL} Queries},
  booktitle    = {{CODS-COMAD} 2021: 8th {ACM} {IKDD} {CODS} and 26th COMAD, Virtual
                  Event, Bangalore, India, January 2-4, 2021},
  pages        = {56--64},
  publisher    = {{ACM}},
  year         = {2021},
  url          = {https://doi.org/10.1145/3430984.3431012},
  doi          = {10.1145/3430984.3431012},
  bibsource    = {dblp computer science bibliography, https://dblp.org}
}

@inproceedings{DBLP:conf/pods/CarmeliZBKS20,
  author       = {Nofar Carmeli and
                  Shai Zeevi and
                  Christoph Berkholz and
                  Benny Kimelfeld and
                  Nicole Schweikardt},
  editor       = {Dan Suciu and
                  Yufei Tao and
                  Zhewei Wei},
  title        = {Answering (Unions of) Conjunctive Queries using Random Access and
                  Random-Order Enumeration},
  booktitle    = {Proceedings of the 39th {ACM} {SIGMOD-SIGACT-SIGAI} Symposium on Principles
                  of Database Systems, {PODS} 2020, Portland, OR, USA, June 14-19, 2020},
  pages        = {393--409},
  publisher    = {{ACM}},
  year         = {2020},
  url          = {https://doi.org/10.1145/3375395.3387662},
  doi          = {10.1145/3375395.3387662},
  bibsource    = {dblp computer science bibliography, https://dblp.org}
}

@article{DBLP:journals/pvldb/TziavelisGR21,
  author       = {Nikolaos Tziavelis and
                  Wolfgang Gatterbauer and
                  Mirek Riedewald},
  title        = {Beyond Equi-joins: Ranking, Enumeration and Factorization},
  journal      = {Proc. {VLDB} Endow.},
  volume       = {14},
  number       = {11},
  pages        = {2599--2612},
  year         = {2021},
  url          = {http://www.vldb.org/pvldb/vol14/p2599-tziavelis.pdf},
  doi          = {10.14778/3476249.3476306},
  bibsource    = {dblp computer science bibliography, https://dblp.org}
}

@article{DBLP:journals/pvldb/NiuGLLGKLP21,
  author       = {Xing Niu and
                  Boris Glavic and
                  Ziyu Liu and
                  Pengyuan Li and
                  Dieter Gawlick and
                  Vasudha Krishnaswamy and
                  Zhen Hua Liu and
                  Danica Porobic},
  title        = {Provenance-based Data Skipping},
  journal      = {Proc. {VLDB} Endow.},
  volume       = {15},
  number       = {3},
  pages        = {451--464},
  year         = {2021},
  url          = {http://www.vldb.org/pvldb/vol15/p451-niu.pdf},
  doi          = {10.14778/3494124.3494130},
  bibsource    = {dblp computer science bibliography, https://dblp.org}
}

@inproceedings{DBLP:conf/icdt/EldarCK24,
  author       = {Idan Eldar and
                  Nofar Carmeli and
                  Benny Kimelfeld},
  editor       = {Graham Cormode and
                  Michael Shekelyan},
  title        = {Direct Access for Answers to Conjunctive Queries with Aggregation},
  booktitle    = {27th International Conference on Database Theory, {ICDT} 2024, March
                  25-28, 2024, Paestum, Italy},
  series       = {LIPIcs},
  volume       = {290},
  pages        = {4:1--4:20},
  publisher    = {Schloss Dagstuhl - Leibniz-Zentrum f{\"{u}}r Informatik},
  year         = {2024},
  url          = {https://doi.org/10.4230/LIPIcs.ICDT.2024.4},
  doi          = {10.4230/LIPICS.ICDT.2024.4},
  bibsource    = {dblp computer science bibliography, https://dblp.org}
}

@article{DBLP:journals/tods/CarmeliTGKR23,
  author       = {Nofar Carmeli and
                  Nikolaos Tziavelis and
                  Wolfgang Gatterbauer and
                  Benny Kimelfeld and
                  Mirek Riedewald},
  title        = {Tractable Orders for Direct Access to Ranked Answers of Conjunctive
                  Queries},
  journal      = {{ACM} Trans. Database Syst.},
  volume       = {48},
  number       = {1},
  pages        = {1:1--1:45},
  year         = {2023},
  url          = {https://doi.org/10.1145/3578517},
  doi          = {10.1145/3578517},
  bibsource    = {dblp computer science bibliography, https://dblp.org}
}

@article{DBLP:journals/pvldb/SudhirTLHCM23,
  author       = {Sivaprasad Sudhir and
                  Wenbo Tao and
                  Nikolay Pavlovich Laptev and
                  Cyrille Habis and
                  Michael J. Cafarella and
                  Samuel Madden},
  title        = {Pando: Enhanced Data Skipping with Logical Data Partitioning},
  journal      = {Proc. {VLDB} Endow.},
  volume       = {16},
  number       = {9},
  pages        = {2316--2329},
  year         = {2023},
  url          = {https://www.vldb.org/pvldb/vol16/p2316-sudhir.pdf},
  doi          = {10.14778/3598581.3598601},
  bibsource    = {dblp computer science bibliography, https://dblp.org}
}

@article{DBLP:journals/pacmmod/YanLH23,
  author       = {Cong Yan and
                  Yin Lin and
                  Yeye He},
  title        = {Predicate Pushdown for Data Science Pipelines},
  journal      = {Proc. {ACM} Manag. Data},
  volume       = {1},
  number       = {2},
  pages        = {136:1--136:28},
  year         = {2023},
  url          = {https://doi.org/10.1145/3589281},
  doi          = {10.1145/3589281},
  bibsource    = {dblp computer science bibliography, https://dblp.org}
}

@article{DBLP:journals/pvldb/OrrKC19,
  author       = {Laurel J. Orr and
                  Srikanth Kandula and
                  Surajit Chaudhuri},
  title        = {Pushing Data-Induced Predicates Through Joins in Big-Data Clusters},
  journal      = {Proc. {VLDB} Endow.},
  volume       = {13},
  number       = {3},
  pages        = {252--265},
  year         = {2019},
  url          = {http://www.vldb.org/pvldb/vol13/p252-orr.pdf},
  doi          = {10.14778/3368289.3368292},
  bibsource    = {dblp computer science bibliography, https://dblp.org}
}

@inproceedings{DBLP:conf/icde/SeshadriPL96,
  author       = {Praveen Seshadri and
                  Hamid Pirahesh and
                  T. Y. Cliff Leung},
  editor       = {Stanley Y. W. Su},
  title        = {Complex Query Decorrelation},
  booktitle    = {Proceedings of the Twelfth International Conference on Data Engineering,
                  February 26 - March 1, 1996, New Orleans, Louisiana, {USA}},
  pages        = {450--458},
  publisher    = {{IEEE} Computer Society},
  year         = {1996},
  url          = {https://doi.org/10.1109/ICDE.1996.492194},
  doi          = {10.1109/ICDE.1996.492194},
  bibsource    = {dblp computer science bibliography, https://dblp.org}
}

@article{DBLP:journals/cacm/Bloom70,
  author       = {Burton H. Bloom},
  title        = {Space/Time Trade-offs in Hash Coding with Allowable Errors},
  journal      = {Commun. {ACM}},
  volume       = {13},
  number       = {7},
  pages        = {422--426},
  year         = {1970},
  url          = {https://doi.org/10.1145/362686.362692},
  doi          = {10.1145/362686.362692},
  bibsource    = {dblp computer science bibliography, https://dblp.org}
}

@inproceedings{DBLP:conf/sigmod/SelingerACLP79,
  author       = {Patricia G. Selinger and
                  Morton M. Astrahan and
                  Donald D. Chamberlin and
                  Raymond A. Lorie and
                  Thomas G. Price},
  editor       = {Philip A. Bernstein},
  title        = {Access Path Selection in a Relational Database Management System},
  booktitle    = {Proceedings of the 1979 {ACM} {SIGMOD} International Conference on
                  Management of Data, Boston, Massachusetts, USA, May 30 - June 1},
  pages        = {23--34},
  publisher    = {{ACM}},
  year         = {1979},
  url          = {https://doi.org/10.1145/582095.582099},
  doi          = {10.1145/582095.582099},
  bibsource    = {dblp computer science bibliography, https://dblp.org}
}

@article{DBLP:journals/pvldb/MullerDG18,
  author       = {Tobias M{\"{u}}ller and
                  Benjamin Dietrich and
                  Torsten Grust},
  title        = {You Say 'What', {I} Hear 'Where' and 'Why'? (Mis-)Interpreting {SQL}
                  to Derive Fine-Grained Provenance},
  journal      = {Proc. {VLDB} Endow.},
  volume       = {11},
  number       = {11},
  pages        = {1536--1549},
  year         = {2018},
  url          = {http://www.vldb.org/pvldb/vol11/p1536-muller.pdf},
  doi          = {10.14778/3236187.3236204},
  bibsource    = {dblp computer science bibliography, https://dblp.org}
}

@article{shapiro1965analysis,
  title={An analysis of variance test for normality (complete samples)},
  author={Shapiro, Samuel Sanford and Wilk, Martin B},
  journal={Biometrika},
  volume={52},
  number={3-4},
  pages={591--611},
  year={1965},
  publisher={Oxford University Press}
}

@article{mann1947test,
  title={On a test of whether one of two random variables is stochastically larger than the other},
  author={Mann, Henry B and Whitney, Donald R},
  journal={The annals of mathematical statistics},
  pages={50--60},
  year={1947},
  publisher={JSTOR}
}

@inproceedings{begoli2018apache,
  title={Apache calcite: A foundational framework for optimized query processing over heterogeneous data sources},
  author={Begoli, Edmon and Camacho-Rodr{\'\i}guez, Jes{\'u}s and Hyde, Julian and Mior, Michael J and Lemire, Daniel},
  booktitle={Proceedings of the 2018 International Conference on Management of Data},
  pages={221--230},
  year={2018}
}

@article{lei2024spider,
  title={Spider 2.0: Evaluating language models on real-world enterprise text-to-sql workflows},
  author={Lei, Fangyu and Chen, Jixuan and Ye, Yuxiao and Cao, Ruisheng and Shin, Dongchan and Su, Hongjin and Suo, Zhaoqing and Gao, Hongcheng and Hu, Wenjing and Yin, Pengcheng and others},
  journal={arXiv preprint arXiv:2411.07763},
  year={2024}
}

@article{li2024can,
  title={Can llm already serve as a database interface? a big bench for large-scale database grounded text-to-sqls},
  author={Li, Jinyang and Hui, Binyuan and Qu, Ge and Yang, Jiaxi and Li, Binhua and Li, Bowen and Wang, Bailin and Qin, Bowen and Geng, Ruiying and Huo, Nan and others},
  journal={Advances in Neural Information Processing Systems},
  volume={36},
  year={2024}
}

@inproceedings{ghosh-etal-2025-sqlgenie,
    title = "{SQLG}enie: A Practical {LLM} based System for Reliable and Efficient {SQL} Generation",
    author = "Ghosh, Pushpendu  and
      Jain, Aryan  and
      Yenigalla, Promod",
    editor = "Rehm, Georg  and
      Li, Yunyao",
    booktitle = "Proceedings of the 63rd Annual Meeting of the Association for Computational Linguistics (Volume 6: Industry Track)",
    month = jul,
    year = "2025",
    address = "Vienna, Austria",
    publisher = "Association for Computational Linguistics",
    url = "https://aclanthology.org/2025.acl-industry.71/",
    doi = "10.18653/v1/2025.acl-industry.71",
    pages = "1004--1012",
    ISBN = "979-8-89176-288-6"
}

@misc{ye2026texttosqlllmsreallydebug,
      title={Beyond Text-to-SQL: Can LLMs Really Debug Enterprise ETL SQL?}, 
      author={Jing Ye and Yiwen Duan and Yonghong Yu and Victor Ma and Yang Gao and Xing Chen},
      year={2026},
      eprint={2601.18119},
      archivePrefix={arXiv},
      primaryClass={cs.AI},
      url={https://arxiv.org/abs/2601.18119}, 
}

@inproceedings{
pourreza2025chasesql,
title={{CHASE}-{SQL}: Multi-Path Reasoning and Preference Optimized Candidate Selection in Text-to-{SQL}},
author={Mohammadreza Pourreza and Hailong Li and Ruoxi Sun and Yeounoh Chung and Shayan Talaei and Gaurav Tarlok Kakkar and Yu Gan and Amin Saberi and Fatma Ozcan and Sercan O Arik},
booktitle={The Thirteenth International Conference on Learning Representations},
year={2025},
url={https://openreview.net/forum?id=CvGqMD5OtX}
}

@article{gao2024text,
  title={Text-to-SQL Empowered by Large Language Models: A Benchmark Evaluation},
  author={Gao, Dawei and Wang, Haibin and Li, Yaliang and Sun, Xiuyu and Qian, Yichen and Ding, Bolin and Zhou, Jingren},
  journal={Proceedings of the VLDB Endowment},
  volume={17},
  number={5},
  pages={1132--1145},
  year={2024},
  publisher={VLDB Endowment}
}

@article{talaei2024chess,
  title={Chess: Contextual harnessing for efficient sql synthesis},
  author={Talaei, Shayan and Pourreza, Mohammadreza and Chang, Yu-Chen and Mirhoseini, Azalia and Saberi, Amin},
  journal={arXiv preprint arXiv:2405.16755},
  year={2024}
}

@inproceedings{li2023resdsql,
  title={Resdsql: Decoupling schema linking and skeleton parsing for text-to-sql},
  author={Li, Haoyang and Zhang, Jing and Li, Cuiping and Chen, Hong},
  booktitle={Proceedings of the AAAI Conference on Artificial Intelligence},
  volume={37},
  number={11},
  pages={13067--13075},
  year={2023}
}

@article{pourreza2023din,
  title={Din-sql: Decomposed in-context learning of text-to-sql with self-correction},
  author={Pourreza, Mohammadreza and Rafiei, Davood},
  journal={Advances in neural information processing systems},
  volume={36},
  pages={36339--36348},
  year={2023}
}
